\documentclass[article,english,11pt]{article}

\usepackage{amscd,amsfonts,amsmath,amssymb,amsthm,bbm,bm,latexsym,mathrsfs}
\usepackage{epsfig,graphics,graphicx,natbib,subfigure}
\usepackage[english]{babel}
\usepackage[usenames]{color}


\makeatother


\pretolerance=10000\tolerance=1000\hbadness=1000
\oddsidemargin = 28pt 
\textwidth 5.6in

\newtheorem{theorem}{Theorem}[section]
\newtheorem{lemma}{Lemma}[section]

\newtheorem{example}{Example}[section]

\theoremstyle{remark}

\newcommand{\real}{\ensuremath{\mathbb{R}}}

\newcommand{\bmu}{\boldsymbol{\mu}}
\newcommand{\bnu}{\boldsymbol{\nu}}
\newcommand{\bomega}{\boldsymbol{\omega}}
\newcommand{\btheta}{\boldsymbol{\theta}}
\newcommand{\balpha}{\boldsymbol{\alpha}}
\newcommand{\bbeta}{\boldsymbol{\beta}}

\newcommand{\bp}{\boldsymbol{p}}
\newcommand{\bw}{\boldsymbol{w}}

\newcommand{\cD}{\mathcal{D}}
\newcommand{\cN}{\mathcal{N}}
\newcommand{\cT}{\mathcal{T}}
\newcommand{\cK}{\mathcal{S}\mathcal{N}}

\newcommand{\DK}{\Delta_K}
\newcommand{\DM}{\Delta_M}

\newcommand{\Beta}{\mathcal{B}e}
\newcommand{\Di}{\mathcal{D}ir}
\newcommand{\Ga}{\mathcal{G}a}
\newcommand{\CY}{\mathcal{Y}}
\newcommand{\one}{\mathbbm{1}}

\newcommand{\CG}{\mathfrak{F}}

\newcommand{\hsp}{\hspace{0.2mm}}


\begin{document}

\title{\vspace{-50pt}Bayesian Nonparametric Calibration and \\
       Combination of Predictive 
       Distributions  
\thanks{This Working Paper should not be reported as representing the views of
Norges Bank. The views expressed are those of the authors and do not
necessarily reflect those of Norges Bank. We are much indebted to
Tilmann Gneiting for helpful discussions and for his contribution
to an earlier version of this work. We thank Concepcion Ausin, Luc Bawuens, Guido Consonni, Sylvia Fr\"uwirth-Schnatter, James Mitchell, Jacek Osiewalski, Dimitris Korobilis, Gary Koop, Enrique ter Horst, Shaun Vahey, Herman van Dijk, Ken Wallis, Mike West and Michael Wipper for their constructive comments, and the conference and seminar participants at: the CORE Louvain University, Cracow Polish Science Academy, Universidad Carlos III de Madrid, University Ca' Foscari of Venice, Scuola Normale Superiore of Pisa, Wien University seminar series, the European Centre for Living Technology of Venice, the 3rd Meeing on Statistic at  Athens University, the CAMP workshop on ``Uncertainty in Economics", at BI Norwegian Business School, the ``11th World Congress of the Econometric Society", Montreal, the 2014 ``Italian Statistical Society" meeting at University of Cagliari, the 2014 ``Econometric Modelling and Forecasting in Central Banks" workshop at University of Glasgow, the 2013 ``Economic Modelling and Forecasting Group" workshop at University of Warwick, the ``8th International Conference on Computational and Financial Econometrics", Pisa. Casarin's research is supported by funding from the European Union, Seventh Framework Programme FP7/2007-2013 under grant agreement
SYRTO-SSH-2012-320270, by the Institut Europlace of Finance, ``Systemic
Risk grant", the Global Risk Institute in Financial Services, the
Louis Bachelier Institute, ``Systemic Risk Research Initiative", and by
the Italian Ministry of Education, University and Research (MIUR) PRIN
2010-11 grant MISURA. Bassetti's research is supported by the INdAm-GNAMPA Project 2014. This research used the SCSCF multiprocessor cluster system at University Ca' Foscari of Venice.}
}
\author{Federico Bassetti\setcounter{footnote}{3}\footnotemark{} \hspace{18pt}
Roberto Casarin\setcounter{footnote}{1}\footnotemark{} \hspace{18pt}
        Francesco Ravazzolo\setcounter{footnote}{2}\footnotemark{}
        \\
        \vspace{5pt}
        \\
        {\centering {\small \setcounter{footnote}{3}\footnotemark{}
             University of Pavia}} \\
        {\centering {\small \setcounter{footnote}{1}\footnotemark{}
             University of Venice}} \\
        {\centering {\small \setcounter{footnote}{2}\footnotemark{}
             Norges Bank and BI Norwegian Business School}}
       }


\maketitle

\begin{abstract}
We introduce a Bayesian approach to predictive density calibration and combination that accounts for parameter uncertainty and model set incompleteness through the use of random calibration functionals and random combination weights.  Building on the work of \cite{RanGne10} and \cite{GneRan13}, we use infinite beta mixtures for the calibration.  The proposed Bayesian nonparametric approach takes advantage of
the flexibility of Dirichlet process mixtures to achieve any continuous
deformation of linearly combined predictive distributions.  The inference procedure is based on Gibbs sampling and allows accounting for uncertainty in the number of mixture components, mixture weights, and calibration parameters.  The weak posterior consistency of the Bayesian nonparametric calibration is provided under suitable conditions for unknown true density.  We study the methodology in simulation examples with fat tails and multimodal densities and apply it to density forecasts of daily S\&P returns and daily maximum wind speed at the Frankfurt airport.
\end{abstract}

\medskip
\par\noindent\textit{AMS 2000 subject classifications}: Primary 62; secondary 91B06.

\medskip
\par\noindent\textit{JEL codes}: C13, C14, C51, C53.

\medskip
\par\noindent\textit{Keywords}: Forecast calibration, Forecast combination, Density forecast, Beta mixtures, Bayesian nonparametrics, Slice sampling.


\section{Introduction}

Combining forecasts from different statistical models or other sources
of information is a crucial problem in many important applications.  A
wealth of papers have addressed this issue with \cite{BatGra69} being
one of the first attempts in this field.  The initial focus of the
literature was on defining and estimating combination weights for
point forecasts.  For instance, \cite{GraRam84} propose to combine
point forecasts with unrestricted least squares regression coefficients as
weights.  The ubiquitous Bayesian model averaging technique relies on
weighted averages of posterior distributions from different
models and implies linearly combined posterior means
\citep{HoeMadRafVol99}.  Recently, probabilistic forecasts in the form
of predictive probability distributions have become  prevalent in various fields, including macro economics with routine publications of fancharts from central banks, finance with asset allocation strategies based on higher-order moments, and meteorology with operational ensemble forecasts of future weather \citep{TayWallis2000, GneKa14}.

Therefore, research interest has shifted to the construction of
combinations of predictive distributions, which poses new challenges.  
A prominent, critically important issue is that predictive distributions ought to 
satisfy some properties. In this paper, we shall focus on the calibration, or reliability, of the predictive distributions \citep{Daw82, Dawid84, KlingBessler1989, DieGunTay98, GneBalRaf07, MitchellWallis2010}, and build on the widely used beta-calibration approach introduced by \cite{RanGne10} and \cite{GneRan13}. Our approach relies on the application of a distortion function to a given distribution and develops Bayesian estimation of the distortion function. See \cite{ArtznerDelbaenEberHeath99}, \cite{WangYoung98} and \cite{GzylMayoral08} for an application of probability distorsion to financial risk measurement and \cite{Cas15} for an application to joint calibration of predictive densities. It should be understood that the scope of our discussion is much wider and that our Bayesian inference approach can be extended to other calibration schemes, such as the conditional calibration developed in \cite{Fre96}, \cite{Wes92a}, \cite{Wes92b}, which is based on a Bayesian conditioning argument. Moreover, we will focus on the calibration of linear pool of predictive densities. Note that the traditional linear pool \citep{Stone61, HallMitchell2007} has been generalized to nonlinear aggregation schemes \citep{MitKap13, GneRan13}, and to time-varying approaches which account for time instabilities and estimation uncertainty in the combination weights \citep{BilCasRavVan13}. Our contribution can be extended to these models.

In this paper, we propose a flexible Bayesian nonparametric approach to calibration and combination that relies on beta mixtures, and nests the beta transformed linear pool introduced by \cite{RanGne10} and \cite{GneRan13}. We develop tools for Bayesian inference for both cases of known and unkown number of mixture components. In the case the number of components is not known we assume an infinite mixture representation and a Dirichlet process prior \citep{Ferguson73, Lo1984, Set94}.  This type of prior, and its  multivariate extensions (e.g., see \cite{Muller2004}, \cite{GriffinSteel2006}, \cite{Hatjispyros2011}), is now widely used due to the availability of efficient algorithms for posterior computations \citep{EscobarWest1995, MacMul98, PapRob08, Taddy2010}, including but not limited to applications in time series analysis \citep{Hir02, ChibHam02, RodriguezTerHorst2008, TaddyKottas2009, JensenMaheu2010,
Griffin2010, GriffinSteel2011, Bur14, BasCasLei14, Wie14, Joc15}. A recent account of Bayesian non-parametric inference can be found in \cite{BNP2010}.
In this paper we develop a slice sampling approach that builds on the work of \cite{walker2007} and \cite{walker2011}.

We provide some conditions under which the proposed probabilistic calibration converges in terms of weak posterior consistency to the true underlying density as the number of observations goes to infinity.  This calibration property is a very powerful result, which substantially improves upon the earlier approach of \cite{GneRan13}, which was shown to be flexibly dispersive only in the sense of second moment of the probabilistic forecast. We build on \cite{WuGhosal2008, WuGhosal2008correction} for the case of i.i.d. observations and on \cite{TangGhosal07} for the Markovian case. In this sense, we also contribute to the recent literature on posterior consistency of Bayesian nonparametric inference (e.g., see \cite{Pel14} and \cite{NorPel15}.

The remainder of the paper is organized as follows.  Section
\ref{sec:model} introduces our beta mixture calibration and
combination model and places it in the context of the general density
combination approach introduced by \cite{MitKap13}.  This is followed
by Section \ref{sec:inference}, where we propose Bayesian inference
based on slice and Gibbs sampling methods. Section
\ref{sec:consistency} provides posterior consistency of the
Bayesian nonparametric calibration and combination in the weak sense
under suitable conditions for unknown true density and under the assumption
of incomplete model set. In Section \ref{sec:simulation} we illustrate the effectiveness of our approach on simulation examples.  Section \ref{sec:empirical} provides case studies including some well-studied datasets in weather forecast and finance and see major improvements in the predictive performance for daily stock returns and
daily maximum wind speed.  The paper closes with a discussion in
Section \ref{sec:discussion}.

\section{Beta mixture calibration and combination}
\label{sec:model}

Let $F_1, \ldots, F_M$ be a set of predictive cumulative distribution
functions (cdfs) for a real-valued variable of interest, $y$, which
might be based on distinct statistical models or experts. Following the forecast combination and calibration literature, we assume the cdfs are externally provided. Also, following \cite{RanGne10} and \cite{GneRan13}, we consider combination formulas $\{ \CG_{\btheta} : \btheta \in \Theta \}$  that map the $M$-tuple $(F_1, \ldots, F_M)$ into a single, aggregated
predictive cdf, $F(\cdot | \btheta)=\CG_\theta(\cdot | F_1,\dots,F_M)$.  Given a 
sequence of observations, $y_{1},\ldots,y_{T}$, the cdf evaluated on one observation, 
e.g. $F(y_{t}|\btheta)$, is referred as probability integral transform (PIT).  Following \cite{RanGne10} and \cite{GneRan13}, we say that the PITs, $F(y_{1}|\btheta),\ldots,F(y_{T}|\btheta)$, are well calibrated  (or probabilistically calibrated) if their distribution is uniform. As noted in \cite{GneRan13}, well calibration is a critical requirement for probabilistic forecasts and checks for the uniformity of the probability integral transform have formed a cornerstone of density forecast evaluation.

The aggregation method introduced by \cite{RanGne10} and \cite{GneRan13}
considers the beta transformed linear pool
\begin{equation}  \label{eq:Beta}
F(y|\btheta) = B_{\alpha,\beta} \!
       \left( \, \sum_{m=1}^M \omega_m \hsp F_m(y) \right)
\end{equation}
for $y \in \real$, where $\btheta=(\alpha,\beta,\bomega)$, $B_{\alpha,\beta}$ denotes the cdf of the standard beta distribution with parameters $\alpha > 0$ and $\beta > 0$ and density function $b_{\alpha,\beta}(x)$, proportional to $x^{\alpha - 1} (1-x)^{\beta - 1}$ on the unit
interval and  $\bomega$
belong to the the unit simplex in $\real^M$
\begin{equation*}
\DM = \left\{ \bomega = (\omega_1, \ldots, \omega_M) \in [0,1]^M :
              \sum_{m=1}^M \omega_m = 1 \right\}.
\end{equation*}

We interpret $B_{\alpha,\beta}$ as a parametric calibration
function, which acts on a linear combination of $F_1, \ldots, F_M$
with mixture weights $\bomega \in \DM$.  When $\alpha = 1$ and $\beta = 1$, the calibration function is the identity function, and the beta transformed linear pool reduces to the traditional linear pool. In the case $M=1$, the beta transformation serves to achieve calibration; when $M>1$, we seek to combine and calibrate simultaneously. The linear combination weights assign relative importance to the individual predictive distributions, and the beta
transformed linear pool admits exchangeable flexible dispersivity in the sense of the second moment of the PITs histogram \citep{GneRan13}. However, in order to achieve the stronger result of probabilistic calibrated PIT the parametric class of calibration functions needs to be extended.


In point of fact,  combination formulas \eqref{eq:Beta}  can be generalised to 
\[
\Big \{ \CG_{ C,\bomega}(y|F_1,\dots,F_M)={C} \Big( \sum_{m=1}^M \omega_m \hsp F_m(y) \Big) ; \,\,  \bomega \in \Delta_M, C \in \mathcal{C}  \Big \}
\]
 where $\mathcal{C}$ is
a class of distribution functions on $[0,1]$.  Given a single observation $y$ with continuous and strictly increasing cdf $F_0$, then one can write 
$F_0(y)=C_0(\sum_{m=1}^M \omega_m \hsp F_m(y))$, where 
$C_0(z)=F_0(F^{-1}(z))$ and $F^{-1}$ is the inverse function of $F=\sum_{m=1}^M \omega_m \hsp F_m$. This shows that if for any $F_0$  the corresponding $C_0$ belong to $\mathcal{C}$, then the PITs of the model are well calibrated. In practice the distribution $F_0$ is not known and one is left with the issues of choosing the class $\mathcal{C}$ and estimating the calibration formula. The choice of $\mathcal{C}$ should achieve a compromise between parsimony and flexibility. It is well known that any continuous function $g$ on the closed unit interval $[0,1]$ can be well approximated by a beta mixture in the sup norm. See, e.g., Theorem 1 in  \cite{DiaYlv85} or \cite{AltCam94}, p. 333. Moreover any continuous density on $(0,1)$ with finite entropy can be well approximated with respect to the  Kullback-Leibler divergence by a finite mixture of beta densities, see Theorem 1 in  \cite{RobRou02}. For this reason, we extend (\ref{eq:Beta}) of \cite{GneRan13} by introducing an aggregation method based on beta mixture combination formulas
\begin{equation}   \label{eq:Betamixture}
F(y|\btheta) = \sum_{k=1}^K \, w_k \,
       B_{\alpha_k, \, \beta_k} \!
       \left( \, \sum_{m=1}^M \omega_{km} \hsp F_m(y) \right)
\end{equation}
for $y \in \real$, where $\btheta=(\bw,\balpha,\bbeta,\bomega)$, the vector $\bw = (w_1, \ldots, w_K) \in \DK$
comprises the beta mixture weights, $\balpha=(\alpha_1, \ldots, \alpha_K)$ and
$\bbeta=(\beta_1, \ldots, \beta_K)$ are beta calibration parameters, and $\bomega=(\bomega_1,\ldots,\bomega_K)$, with
$\bomega_k = (\omega_{k1}, \ldots, \omega_{kM})\in \DM$, $k=1\ldots,
K$ the component-specific sets of linear combination weights.

When $K<\infty$ we refer to the general beta mixture model in (\ref{eq:Betamixture}) and (\ref{eq:betamixture}) as the BM$_K$ model, which is much more flexible, and has the beta transformed linear pool proposed by \cite{RanGne10} and
\cite{GneRan13} as a special case for $K = 1$. When $K$ is unknown, Bayesian inference can provide guidance in choosing appropriate compromises between parsimony and flexibility.  In particular, our Bayesian approach allows us to treat the parameter $K$ as unbounded and random.  We refer to this latter setting as the infinite beta mixture or BM$_\infty$ calibration, for which we give details in the following section.

As a final remark, we note that the beta mixture calibration and combination model can also be interpreted in terms of generalized linear pool, introduced by \cite{MitKap13}. Assume $F_1, \ldots, F_M$ admit Lebesgue densities $f_1, \ldots, f_M$, respectively, then the combination formula (\ref{eq:Betamixture}) can be written equivalently in terms of the aggregated
probability density function (pdf), namely
\begin{equation}   \label{eq:betamixture}
f(y|\btheta) = \sum_{k=1}^K \, w_k
       \left( \, \sum_{m=1}^M \omega_{km} \hsp f_m(y) \right) \,
       b_{\alpha_k, \, \beta_k} \!
       \left( \, \sum_{m=1}^M \omega_{km} \hsp F_m(y) \right)
\end{equation}
which can be written as
\begin{equation*}
f(y|\btheta) = \sum_{m=1}^M \tilde{\omega}_m(y) \, f_m(y)
\end{equation*}
for $y \in \real$, where the generalized weight functions are given by
\begin{equation*}
\tilde{\omega}_m(y) = \sum_{k=1}^K \omega_{km}w_k \,
       b_{\alpha_k, \, \beta_k} \!
       \left( \, \sum_{m=1}^M \omega_{km} \hsp F_m(y) \right)
\end{equation*}
for $m = 1, \ldots, M$.  We should notice that this simple result provides an alternative interpretation of our model as a generalized combination model, similarly to the scheme of \cite{MitKap13}. However, there are two major differences with respect to \cite{MitKap13}. First, our weights are built to achieve uniformity of the PITs, thus our model should be used each time calibration, and not only combination, is needed. Secondly, they use weight functions that are piecewise constant, whereas the weight functions implied by the beta mixture model are continuous, thus allowing for a smooth combination density.

\section{Bayesian inference}  \label{sec:inference}

In Bayesian settings, it is convenient to express the standard beta
distribution with parameters $\alpha > 0$ and $\beta > 0$ in terms of its
mean $\mu = \alpha / (\alpha + \beta)$ and the parameter $\nu = \alpha
+ \beta > 0$ \citep{Epstein66,RobRou02, BilCas11, CasDalLei12}.  We refer to
the reparameterized pdf as
\begin{equation*}
b^*_{\mu,\nu}(x) =
  \frac{\Gamma(\nu)}{\Gamma(\mu\nu) \hsp \Gamma((1-\mu)\nu)} \,
  x^{\hsp \mu\nu - 1} (1-x)^{(1 - \mu) \nu - 1} \, \one_{[0,1]}(x),
\end{equation*}
where $\Gamma$ denotes the gamma function, and we use the symbol $B^*_{\mu,\nu}$
to denote the corresponding cdf.

We discuss inference in a time series setting at the unit prediction
horizon, where the training data comprise the predictive cdfs $F_{1t},
\ldots, F_{Mt}$, which are conditional on information available at
time $t - 1$, along with the respective realization, $y_t$, at time $t
= 1, \ldots, T$, respectively.  We then wish to estimate a
calibration and combination formula of the form (\ref{eq:Betamixture})
that maps the tuple $F_{1t}, \ldots, F_{Mt}$ into an aggregated cdf,
$F_t$.  In practice, we use the estimated calibration and combination
formula to aggregate the predictive cdfs $F_{1,T+1}, \ldots, F_{M,T+1}$,
which are based on information available at time $T$, into a
single predictive cdf, $F_{T+1}$, for the subsequent value, $y_{T+1}$, of the
variable of interest. Extensions to multi-step ahead
forecasts is possible, and we leave this for further research.

To ease the notational burden in the time series setting, let $\bomega_k = (\omega_{k1},
\ldots, \omega_{kM}) \in \DM$, and write
\begin{equation}  \label{eq:H}
H_t(y_t|\bomega_k) = \sum_{m=1}^M \omega_{km} \, F_{mt}(y_t)
\end{equation}
and
\begin{equation}  \label{eq:h}
h_t(y_t|\bomega_k) = \sum_{m=1}^M \omega_{km} \, f_{mt}(y_t)
\end{equation}
for $t = 1, \ldots, T$ and $k = 1, 2, \ldots, K$, respectively.

\subsection{Bayesian finite beta mixture model}
We work with a reparameterized version of the finite beta mixture
calibration and combination model (i.e., $K<\infty$), in which the
aggregated cdf and pdf can be represented as
\begin{equation}  \label{eq:BM.K}
F_t(y_t|\btheta) = \sum_{k=1}^K \, w_k \, B^*_{\mu_k, \, \nu_k}(H_t(y_t|\bomega_k))
\end{equation}
and
\begin{equation}  \label{eq:BM.K}
f_t(y_t|\btheta) = \sum_{k=1}^K \, w_k \, h(y_t|\bomega_k)b^*_{\mu_k, \, \nu_k}(H_t(y_t|\bomega_k))
\end{equation}
for $t = 1, \ldots, T$. The parameter vector for the BM$_K$ model can
then be written as $\btheta = (\bw, \bmu, \bnu, \bomega)$, where $\bw
= (w_1, \ldots, w_K) \in \DK$, $\bmu = (\mu_1, \ldots, \mu_K) \in
(0,1)^K$, $\bnu = (\nu_1, \ldots, \nu_K) \in (0,\infty)^K$ and
$\bomega = (\bomega_1, \ldots, \bomega_K) \in \Delta_M^K$, with $K$
being a fixed positive integer. The parameter space is defined as
$\Theta=\DK\times(0,1)^{K}\times(0,\infty)^{K}\times\Delta_M^{K}$.

Our Bayesian approach assumes that
\begin{eqnarray}
\bw   & \sim & \Di(\xi_{w1},\ldots,\xi_{wM}) \label{eq:w}\\
\mu_k & \sim & \Beta(\xi_{\mu 1},\xi_{\mu 2}),
\label{eq:mu} \\
\nu_k & \sim & \Ga(\xi_{\nu 1},\xi_{\nu 2}), \label{eq:nu} \\
\bomega_k & \sim &\Di(\xi_{\, \omega 1},\ldots,\xi_{\, \omega M}) \label{eq:omega}
\end{eqnarray}
for $k = 1, \ldots, K$, where $\Beta(\alpha,\beta)$ is a Beta
distribution with density proportional to $x^{\alpha - 1} (1-x)^{\beta
- 1}$ for $x \in \Delta_1$, $\Ga(\gamma,\delta)$ is a Gamma
distribution with density proportional to $x^{\gamma} \exp\{-\delta
x\}$ for $x > 0$, and $\Di(\xi_1, \ldots, \xi_M)$ is a Dirichlet
distribution with density proportional to $\prod_{m=1}^M
w_m^{\xi_m-1}$ for $(w_1, \ldots, w_M) \in \Delta_M$, with all these
distributions being independent.  Guided by symmetry arguments in the
Beta and Dirichlet case, and using a standard, uninformative prior in
the Gamma case \citep{Spi04}, we parameterize parsimoniously and set
$\xi_{w1} = \cdots = \xi_{wM}$, $\xi_\mu = \xi_{\mu 1} = \xi_{\mu 2}$,
$\xi_{\nu 1} = \xi_{\nu 2}$, and $\xi_{\, \omega 1} = \cdots = \xi_{\,
\omega M}$.  In what follows, we refer to the common hyper\-parameter
values as $\xi_w$, $\xi_\mu$, $\xi_\nu$, and $\xi_\omega$, respectively

Adopting a data augmentation framework \citep{Fru06}, we introduce the
allocation variables $d_{kt} \in \{0,1\}$, where $k = 1,\ldots,K$ and
$t = 1,\ldots,T$.  The likelihood of the BM$_K$ calibration model is the marginal of the complete data
likelihood
\begin{equation*}
L(Y,D \, | \, \btheta) = \prod_{t=1}^T \prod_{k=1}^K
\left( w_k \, h_t(y_t|\bomega_k) \, b^*_{\mu_k,\nu_k} \left(H_t(y_t|\bomega_k) \right)
\right)^{d_{kt}},
\end{equation*}
where we let $Y = (y_1,\ldots,y_T)$ and $D =
(d_{11},\ldots,d_{K1},\ldots,d_{1T},\ldots,d_{KT})$.  The implied
joint posterior of $D$ and $\btheta$ given the observations $Y$ satisfies
\begin{equation*}
\pi(D,\btheta \, | \, Y) \propto g(\bmu,\bnu,\bomega) \,
\prod_{k=1}^K w_k^{\xi_w + T_k - 1}
\prod_{t \in \cD_k} h_t(y_t|\bomega_k) \, b^*_{\mu_k,\nu_k} (H_t(y_t|\bomega_k)),
\end{equation*}
where $g(\bmu,\bnu,\bomega)$ is the prior density, $\cD_k = \{
t=1,\ldots,T \, | \, d_{kt} = 1 \}$, and $T_k$ is the number of
elements in $\cD_k$.  To sample from the joint posterior, we use a
Gibbs sampler that draws iteratively from $\pi(D \, | \, \btheta,Y)$,
$\pi(\bmu,\bnu \, | \, \bw,\bomega,D,Y)$, $\pi(\bomega \, | \,
\bw,\bmu,\bnu,D,Y)$, and $\pi(\bw \, | \, \bomega,\bmu,\bnu,D,Y)$,
respectively, for which we give details in Appendix
\ref{sec:details.finite}.

The output of the algorithm is a sample $\btheta^{(i)} = (\bw^{(i)}, \bmu^{(i)}, \bnu^{(i)},
\bomega^{(i)})$ for $i = 1, \ldots, I$, where $I$ is the number of
iterations in the Gibbs sampler.  The sample is used to approximate
with $\hat{F}_{T+1}(y_{T+1})$ the desired one-step-ahead cumulative posterior predictive distribution $F_{T+1}(y_{T+1})$
$=\int_{\Theta}F_{T+1}(y_{T+1}|\btheta)\pi(\btheta|Y)d\btheta$, where $\pi(\btheta|Y)$ is the marginal distribution of $\pi(D,\btheta|Y)$. In the special case when $K = 1$ we get
\begin{equation}  \label{eq:BayesBM1}
\hat{F}_{T+1}(y_{T+1}) =
\frac{1}{I} \sum_{i=1}^I B^*_{\mu^{(i)},\nu^{(i)}} \!
\left( \sum_{m=1}^M \omega_{m}^{(i)} \, F_{m,T+1}(y_{T+1}) \right) \! ,
\end{equation}
which can be thought of as a Bayesian implementation of the beta
transformed linear pool (\ref{eq:Beta}) of \cite{RanGne10} and
\cite{GneRan13}. An advantage of the proposed approach based on Gibbs sampling approximation is that parameter uncertainty can be taken into consideration in the prediction. A plug-in approximation of the predictive, which does not account for the parameter uncertainty, can be used, namely $\hat{F}_{T+1}(y_{T+1})=F_{T+1}(y_{T+1}|\hat{\btheta})$ where $\hat{\btheta}$ is the parameter posterior mean which can be approximated by the empirical average of $\btheta^{(i)}$ $i=1,\ldots,I$.  Another advantage of our approach is that credible intervals for the calibrated predictive cdf can be easily approximated by using the output of the Gibbs sampler.

\subsection{Bayesian infinite beta mixture model}
In the finite-mixture beta calibration and combination model the
number of beta densities is given, and model selection procedures
can be used to choose the number of mixture components.  As evidenced
in previous studies (see, e.g., \cite{BilCasRavVan13} and \cite{MitKap13}),
in a time series context the model pooling scheme can be subject to
time instability, thus as a new group of observations arrives the
pooling scheme can change dramatically. Moreover, \cite{Geweke2010} 
discusses how Bayesian model averaging probabilities and linear opinion pooling 
weights converge in the limit to select one model (or a subset of
models) and therefore they might not properly cope with such instability. 
The finite-mixture beta calibration and combination model might be subject to 
similar issues and to solve it we propose to work with an infinite prior number
of calibration functions and local pooling schemes of which only a finite
number are selected on a given finite sample.  The
consequence is that the number $K$ of beta mixture components can vary
and increase with the sample size, accounting for time instability in the pooling. 
Therefore, the model with infinite calibration components provides
an answer to the problem of selecting the number of components in the
finite mixture approach.

We propose here a Bayesian non-parametric models which
allows for estimating the number of components and also for including
the model uncertainty in the posterior predictive.  We refer to this
model as the infinite-mixture calibration model $BM_{\infty}$.  Let us assume
\begin{equation*}
f_t(y_t|\btheta)=b^{*}_{\mu,\nu}\left(H_t(y_t|\bomega)\right)h_t(y_t|\bomega),
\end{equation*}
where $\btheta = (\mu,\nu,\bomega)$, with $\bomega =
(\omega_1,\ldots,\omega_M)$.  Our prior for the $BM_{\infty}$ parameters $\btheta$ is
nonparametric, i.e. $\btheta\sim G(\btheta)$ where $G$ is a random
probability measure
\begin{equation*}
G \sim DP(\psi, G_0)
\end{equation*}
and $DP(\psi, G_0)$ denotes a Dirichlet process (DP)
(\cite{Ferguson73}) with concentration parameter $\psi$ and base
measure $G_0$.  Following the standard result of \cite{Set94}, the
Dirichlet process prior can be represented as
\begin{equation*}
G(d\btheta)=\sum_{k=1}^{\infty}w_k\delta_{\btheta_k}(d\btheta)
\end{equation*}
with random weights $w_k$ generated by the stick-breaking construction
\begin{equation*}
w_k=v_k\prod_{l=1}^{k-1}(1-v_{l})
\end{equation*}
where the stick-breaking components $v_{l}$ are i.i.d. random
variables from $\Beta(1,\varphi)$.  The atoms $\theta_k$ are
i.i.d. random variables from the base measure $G_0$.  In our model the
base measure is given by the product of the following distribution
\begin{equation*}
\Beta(\xi_{\mu},\xi_{\mu})
\Ga(\xi_{\nu}/2,\xi_{\nu}/2)
\Di(\xi_{\omega},\ldots,\xi_{\omega}).
\end{equation*}
The Dirichlet process prior assumption and the stick-breaking
representation of the DP allow us to write the combination and
calibration model in terms of infinite mixtures of random beta
distributions with the following random pdf
\begin{eqnarray*}
f_t(y_t|G)
& = & \int f_t(y_t|\btheta)G(d\btheta)\\
& = & \sum_{k=1}^{\infty}w_k b^{*}_{\mu_k,\nu_k}\left(H_t(y_t|\bomega_k)\right)
      h_t(y_t|\bomega_k).
\end{eqnarray*}

The number of components sampled in the first $T$ observations is random and its prior distribution is
(\cite{Ant74})
\begin{equation*}
P(K=k|\psi,T)=\frac{T! \Gamma(\psi)}{\Gamma(\psi+T)}|s_{Tk}|\psi^{k}
\end{equation*}
for $k=1,2,\ldots$, where $s_{Tk}$ is the
signed Stirling number \citep[p.~824]{AbrSte72}.  The dispersion
hyper-parameter $\psi>0$ is driving the prior expected number of
components.  Large values of $\psi$ increase the probability of introducing new components in the mixture.  As the prior dispersion depends crucially on this parameter, the results of the posterior
inference on the infinite mixture model are usually presented for
different values of $\psi$.  It also possible to extend the
nonparametric models by assuming a further stage of the prior
hierarchical structure and assuming a prior for $\psi$.  A common
choice for the prior is a gamma distribution, $\Ga(c,d)$ (see
\cite{EscobarWest1995}).  The second important feature is that our
inference approach provides, as a natural product, the posterior
distribution of the number of components given a sample of data and
allows for the inclusion of the number of components uncertainty in
the predictive density.

Inference on infinite mixture models resulting from a Dirichlet prior
assumption requires the use of simulation methods.  Gibbs samplers
have been proposed in \cite{Escobar94} and \cite{IsJam2001}, which
make use of the Polya-urn representation of the Dirichlet
process. \cite{IshZar00a} proposed a sampler based on a truncation of
the infinite mixture representation.  \cite{PapRob08} proposed an
exact simulation algorithm based on retrospective sampling.  In this
paper we apply the slice sampling algorithm proposed in
\cite{walker2007} and \cite{walker2011}.  The algorithm uses a set of
auxiliary variables to deal with the infiniteness problem of the
mixture model.  More specifically, let us introduce a sequence of
slice sampling variables $u_t$, $t=1,2,\ldots,T$, then $f_t(y_t|G)$ is the marginal of
\begin{equation*}
f_t(y_t,u_t|G)=\sum_{k=1}^{\infty}\one_{\{u_t<w_k\}}
b^{*}_{\mu_k,\nu_k}\left(H_t(y_t|\bomega_k)\right)h_t(y_t|\bomega_k)
\end{equation*}

Note that given a set of observations, $y_t$ and slice variables, $u_t$, $t=1,\ldots,T$,
the complete data likelihood can be written as
\begin{equation*}
L(Y,U|G) =
\prod_{t=1}^{T}\sum_{k\in A_t}b^{*}_{\mu_k,\nu_k}\left(H_t(y_t|\bomega_k)\right)
h_t(y_t|\bomega_k),
\end{equation*}
where $Y=(y_{1},\ldots,y_{T})$, $U=(u_{1},\ldots,u_T)$,
$A_t=\{k|u_t < w_k\}$.  Note that $N_t=Card(A_t)$, that is the
number of components of the infinite sum, is finite when conditioning
on the slice variables.  Thus, the introduction of the auxiliary
variables allows us to have a finite mixture representation of the
infinite mixture model.  Following a standard approach to inference
for mixture models (e.g., see \cite{Fru06}) we now introduce a
sequence of allocation variables, $d_t$, $t=1,\ldots,T$, with
$d_t\in A_t$.  Each of these variables indicates which component of
the finite mixture provides the observation $y_t$.  The complete
data likelihood is
\begin{equation*}
L(Y,U,D|G) = \prod_{t=1}^{T}
\one_{\{u_t<w_{d_t}\}}\!b^{*}_{\mu_{d_t},\nu_{d_t}}\!
\left(H_t(y_t|\bomega_{d_t})\right) h_t(y_t|\bomega_{d_t})\!\!\!\!\!\!\!\!\!
\end{equation*}
where $D=(d_{1},\ldots,d_T)$.

Let us denote by $V=(v_1,v_2,\ldots)$ and
$\Theta=(\btheta_{1},\btheta_{2},\ldots)$, with
$\btheta_k=(\mu_k,\nu_k,\bomega_k)$,
$\bomega_k=(\omega_{1k},\ldots,\omega_{Mk})$, the infinite dimensional
vectors of the stick-breaking components and atoms respectively.  In
what follows we assume the dispersion parameter $\psi$ is unknown with prior distribution $\pi(\psi)$.

From the completed likelihood function and our assumptions on the
prior distributions, the joint posterior distribution of $U$, $D$,
$V$, $\Theta$ and $\psi$ given $Y$ is
\begin{eqnarray*}
&& \pi(U,D,V,\Theta,\psi |Y) \propto
\prod_{t=1}^{T}\one_{\{u_t<w_{d_t}\}}b^{*}_{\mu_{d_t},\nu_{d_t}}
\left(H_t(y_t|\bomega_{d_t})\right) h_t(y_t|\bomega_{d_t})\\
&& \cdot \; \prod_{k\geq1}(1-v_k)^{\psi-1}\mu_k^{\xi_{\mu}-1}(1-\mu_k)^{\xi_{\mu}-1}
\nu_k^{\xi_{\nu}/2}\exp\{-\xi_{\nu}\nu_k/2\}\prod_{i=1}^{M}\omega_{ik}^{\nu/2-1}\pi(\psi).
\end{eqnarray*}
Joint sampling from the posterior is not possible and this calls for the application of a Gibbs sampling procedure.  Adapting the sampler described in \cite{walker2007} and \cite{walker2011} to our setting, we develop an efficient collapsed Gibbs sampling procedure which generates
sequentially the parameters and the latent variables from the full
conditional distributions $\pi(\Theta|U,D,V,Y,\psi)$,
$\pi(V,U|\Theta,D,Y,\psi)$, $\pi(D|\Theta,V,U,Y,\psi)$ and
$\pi(\psi|Y)$.  The details of the steps of the Gibbs sampler
are given in Appendix \ref{sec:details.infinite}.

The output of the algorithm are samples $\bw^{(i)}$ and $\btheta^{(i)}=(\bmu^{(i)},\bnu^{(i)},\bomega^{(i)})$ for $i=1,\ldots,I$ where $I$ is the number of MCMC iterations, and can be used to sample from the one-step-ahead cumulative predictive distribution. For further details see Appendix \ref{sec:details.infinite}.


\newcommand{\RE}{\mathbb{R}}
\newcommand{\eps}{\epsilon}

\section{Posterior consistency}\label{sec:consistency}
In this section we provide some conditions under which the proposed probabilistic calibration formula, $BM_\infty$, converges to the true underlying density, implying uniformity of the PITs in the limit. The convergence is studied in terms of weak posterior consistency when the number of observations in the training set goes to infinity. 
We consider the case the set of combined models  is externally provided 
 and  both combination weights and calibration parameters are estimated. 
We prove   consistency of our $BM_\infty$ calibration  for i.i.d. observations and show that posterior consistency is still valid under the assumption of Markovian kernels.  As commonly found in the literature, for non i.i.d. observations, the posterior consistency is case-specific depending heavily on the model used. See, for instance \cite{TangGhosal07} and \cite{ChGhRo04}.
 
\subsection{Consistency results for i.i.d. observations}\label{sec:consistency_iid} 
Let  $\mathcal{F}$ be the set of all possible densities (with respect to Lebesgue measure) on the sample space $\mathcal{Y}\subset \RE$ and $\Pi^{*}$ be a prior on $\mathcal{F}$.
In order to deal with posterior consistency one needs to specify which kind of topology
is assumed on $\mathcal{F}$. Since we are interested in weak consistency,
we  consider the topology of weak convergence, i.e. the topology  induced by the weak
convergence of probability measures. Given a prior $\Pi^{*}$ on $\mathcal{F}$, the posterior is said to be {\it weakly consistent} at $f_0$ if $\Pi^*(U|y_1,\dots,y_n)$ converges a.s. to 1 for every neighbourhood $U$ of $f_0$ in the topology of weak convergence, for every i.i.d. sequence $y_1,y_2,\ldots$ of random variables with common density $f_0$.
For more details see, e.g., Chapter 4 in \cite{GhoshRamamoorthiNP}.
 
 In the i.i.d. case, Schwartz theorem states that weak consistency at a ``true density'' $f_0$ holds if the
prior assigns positive probabilities to Kullback-Leibler  neighborhoods of $f_0$.
%
Hence,  in this setting, in order to prove week consistency, one only needs to check if the Kullback-Leibler property is satisfied by the prior setting and the true density $f_{0}$, see Theorem 4.4.2 in \cite{GhoshRamamoorthiNP}.

It is worth recalling that  a Kullback-Leibler neighbourhood of a density $f\in\mathcal{F}$ of size $\varepsilon$ is defined as
$$
\mathcal{K}_{\varepsilon}(f_{0})=\left\{g\in\mathcal{F} | \int f\log\left(\frac{f}{g}\right)\leq\varepsilon\right\},
$$
and the Kullback-Leibler property holds at $f_{0}\in\mathcal{F}$, for short $f_{0}\in KL(\Pi^{*})$, if $\Pi^{*}(\mathcal{K}_{\varepsilon}(f_{0}))>0$ for all $\varepsilon>0$.
We will denote with $supp(\mu)$  the weak support of a probability measure $\mu$
and with $KL(f,g)$ the Kullback-Leibler divergence
 between the two densities $f$ and $g$, i.e. $KL(f,g):=\int f \log\big(\frac{f}{g}\big)$.

In this section we will exploit the type I mixture prior representation of  $\Pi^*$. Let us recall that
a prior on $\mathcal{F}$ is said to be a type I mixture prior if it is induced via the map
\begin{equation}
G\mapsto f_{G}(y)=\int_\Theta K(y;\boldsymbol{\theta})G(d\boldsymbol{\theta}),
\end{equation}
where
$\Theta$ is  the mixing parameter space, $K(y;\boldsymbol{\theta})$ a density kernel  on $\mathcal{Y}\times\Theta$ and
$G$ has distribution $\Pi$ on the  space  $\mathcal{M}(\Theta)$ of probability measures on $\Theta$ (see \cite{WuGhosal2008}).
In our joint calibration and combination model, the kernel is
\begin{equation}
K(y;\boldsymbol{\theta})=b^*_{\mu,\nu}(H(y|\boldsymbol{\omega}))h(y|\boldsymbol{\omega})
\end{equation}
with $\boldsymbol{\theta}=(\boldsymbol{\theta}_{p},\boldsymbol{\theta}_{c})$, where $\boldsymbol{\theta}_{p}=\boldsymbol{\omega}$ indicates the pooling parameters, and $\boldsymbol{\theta}_{c}=(\mu,\nu)$ the calibration parameters.
Since in this subsection  we deal with  of i.i.d. observations,
we drop from the kernel $K$ the observation index $t$.
The random mixing distribution $\Pi$ is given by a Dirichlet process prior, so that
$\boldsymbol{\theta}|G \sim G$
where $G\sim DP(\psi,G_{0})$.  For the sake of simplicity we assume that the concentration parameter $\psi$ is given.
%
%
 In this way, $\Pi^{*}$ turns out to be
the prior on $\mathcal{F}$ induced by
\[
G \mapsto  \int b^*_{\mu,\nu}(H(y|\boldsymbol{\omega}))  h(y|\boldsymbol{\omega}) G(d \boldsymbol{\omega} d\mu d\nu)
\]
when $G \sim DP(\psi,G_0)$
%
and $h(y|\boldsymbol{\omega})=\sum_{m=1}^M \boldsymbol{\omega}_m f_m(y)$.

\begin{theorem}\label{corollario}
Assume that there is a point $\boldsymbol{\omega^*}$ in the interior of $\Delta_M$ such that
$h(\cdot|\boldsymbol{\omega^*})$ is continuous and  that,
for every compact set $C \subset \mathcal{Y}$,
 $\inf_{y \in C} h(y|\boldsymbol{\omega^*})>0$.
Assume also that the true density $f_0$ is continuous on $\mathcal{Y}$
and that
\begin{equation}\label{entropy2}
\begin{split}
& \int [|\log(H(y|\boldsymbol{\omega^*}))|+|\log(1-H(y|\boldsymbol{\omega^*}))|]f_0(y)dy<+\infty \\
& \text{and} \quad KL(f_0,h(\cdot|\boldsymbol{\omega^*})) <+\infty. \\
\end{split}
\end{equation}
If $G_0$ has full support, then
$f_{0}\in KL(\Pi^{*})$.
\end{theorem}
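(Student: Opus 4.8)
The plan is to establish the Kullback--Leibler property directly, since by Schwartz's theorem (Theorem 4.4.2 in \cite{GhoshRamamoorthiNP}) weak consistency at $f_0$ follows once we show $f_0\in KL(\Pi^*)$, i.e.\ $\Pi^*(\mathcal{K}_{\varepsilon}(f_0))>0$ for every $\varepsilon>0$. The central device is a change of variables that decouples the pooling weights from the calibration. Fixing the pooling weight at the prescribed interior point $\boldsymbol{\omega}^*$, I set $z=H(y|\boldsymbol{\omega}^*)$; the positivity $\inf_{y\in C}h(y|\boldsymbol{\omega}^*)>0$ on compacts makes this a strictly increasing change of variables with Jacobian $h(y|\boldsymbol{\omega}^*)$. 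Writing $g_0$ for the push-forward of $f_0$, namely $g_0(z)=f_0(H^{-1}(z|\boldsymbol{\omega}^*))/h(H^{-1}(z|\boldsymbol{\omega}^*)|\boldsymbol{\omega}^*)$, and letting $g_Q(z)=\int b^*_{\mu,\nu}(z)\,Q(d\mu\,d\nu)$ be the beta mixture attached to a mixing measure $Q$ on the calibration parameters, the kernel factor $h(y|\boldsymbol{\omega}^*)$ cancels in the log--ratio, giving the key identity
\[
KL(f_0,f_Q)=KL(g_0,g_Q),\qquad f_Q(y):=g_Q(H(y|\boldsymbol{\omega}^*))\,h(y|\boldsymbol{\omega}^*).
\]
This reduces the problem to approximating the fixed density $g_0$ on $(0,1)$ by beta mixtures in Kullback--Leibler divergence.

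The second step is to verify that $g_0$ meets the hypotheses of the beta-mixture approximation theorem and to produce the approximant. Continuity of $g_0$ on $(0,1)$ is inherited from the continuity of $f_0$ and of $h(\cdot|\boldsymbol{\omega}^*)$ together with the positivity of the latter. Under the same change of variables, the two integrability assumptions in \eqref{entropy2} translate into $\int_0^1 g_0(z)\log g_0(z)\,dz<+\infty$, which is the finite-entropy requirement, and $\int_0^1 (|\log z|+|\log(1-z)|)\,g_0(z)\,dz<+\infty$, the integrability of the log--singularities at the endpoints; the first is exactly $KL(f_0,h(\cdot|\boldsymbol{\omega}^*))$ and the second is the displayed expectation of $|\log H|+|\log(1-H)|$. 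With these in hand, Theorem~1 of \cite{RobRou02} yields, for any $\varepsilon>0$, a finite beta mixture $g^*=\sum_{k=1}^K w_k\,b^*_{\mu_k,\nu_k}$ with $KL(g_0,g^*)<\varepsilon$; equivalently, a discrete mixing measure $G^*=\sum_{k=1}^K w_k\,\delta_{(\boldsymbol{\omega}^*,\mu_k,\nu_k)}$ with $KL(f_0,f_{G^*})<\varepsilon$.

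The final step lifts this approximation to the Dirichlet process prior, following the strategy of \cite{WuGhosal2008}. Because $\boldsymbol{\omega}^*$ lies in the interior of $\Delta_M$ and $G_0$ has full support, each atom $(\boldsymbol{\omega}^*,\mu_k,\nu_k)$ lies in $supp(G_0)$, so $supp(G^*)\subseteq supp(G_0)$ and hence every weak neighbourhood of $G^*$ receives positive $DP(\psi,G_0)$ mass. It then remains to show that weak perturbations of $G^*$, in particular small perturbations of the pooling atoms $\boldsymbol{\omega}$ away from $\boldsymbol{\omega}^*$, keep $f_G$ inside a Kullback--Leibler ball of $f_{G^*}$; this uses the joint continuity of the kernel $K(y;\boldsymbol{\theta})=b^*_{\mu,\nu}(H(y|\boldsymbol{\omega}))h(y|\boldsymbol{\omega})$ in $\boldsymbol{\omega}$ and a dominated-convergence control of the log--ratio near $\boldsymbol{\omega}^*$. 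Combining the two displays gives $\Pi^*(\mathcal{K}_{2\varepsilon}(f_0))>0$ for all $\varepsilon$, which is the claim.

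I expect the main obstacle to be precisely this last continuity-to-Kullback--Leibler passage: the weak support of the DP only controls $f_G$ in a weak sense, whereas the Kullback--Leibler property needs control of an unbounded log--ratio, including the behaviour of $b^*_{\mu,\nu}(H(y|\boldsymbol{\omega}))$ near the endpoints $z=0,1$ as $\boldsymbol{\omega}$ varies. Handling this requires the endpoint integrability in \eqref{entropy2} to furnish an integrable envelope, so that the reduction to the one-dimensional beta-mixture problem --- clean at the fixed weight $\boldsymbol{\omega}^*$ --- survives the perturbation in the pooling direction.
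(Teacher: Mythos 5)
Your proposal follows essentially the same route as the paper: the change of variables $z=H(y|\boldsymbol{\omega}^*)$ reducing $KL(f_0,f_Q)$ to a one-dimensional problem on $(0,1)$, the appeal to Theorem 1 of \cite{RobRou02} for the finite beta-mixture approximant, and the lift to the Dirichlet process prior via the conditions of Theorem 1 and Lemma 3 of \cite{WuGhosal2008}. The final step you describe schematically (controlling the log-ratio as $\boldsymbol{\omega}$ is perturbed away from $\boldsymbol{\omega}^*$) is exactly what the paper carries out explicitly, by exploiting the interiority of $\boldsymbol{\omega}^*$ to sandwich $h(\cdot|\boldsymbol{\omega})$ between constant multiples of $h(\cdot|\boldsymbol{\omega}^*)$ on a small compact neighbourhood and then using the endpoint integrability in \eqref{entropy2} as the integrable envelope, just as you anticipated.
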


The proof of the previous theorem is given in the Supplementary Materials \ref{appendix_consistencyl}.

The result in Theorem  \ref{corollario} is of general validity and the theorem assumptions can be easily checked for many applied contexts. 
In  Supplementary Materials \ref{additionalconsistency} we provide two additional results { for the case not only the combined models are given, but also  the combinations weights  are fixed,  see Theorems \ref{teocons2}-\ref{thm_cons1}. }
In Theorem \ref{corollario} we assumed the combination weights  $\boldsymbol{\omega^*}$ belong to the interior of $\Delta_M$. This assumption is not restrictive and does not necessarily  implies neither that the true density $f_0$ is a mixture  of $f_m$ nor that $f_0$ is one of the models in the combination. In the following examples, we provide a clearer explanation of this aspect and show how Theorem \ref{corollario} assumptions are satisfied for the Gaussian mixture and Student-t mixture examples considered later on in this paper for the simulation study. 
{ We also considered the two cases the set of combined models includes the true density (complete model set) and does not include it (incomplete model set). 

\begin{example}[Complete and incomplete model set]\label{ex:completeincomplete}
Consider the case in which
\[
 h(y|\boldsymbol{\omega})=\sum_{m=1}^M\omega_{m} \varphi(y|{\mu_m,\sigma_m^2}) \quad \text{and} \quad
 f_0(y)=  \varphi(y|{\mu_0,\sigma_0^{2}}) \\
\]
where  $\varphi(\cdot|{\mu,\sigma^2})$ is the pdf of a normal distribution of mean $\mu$ and variance $\sigma^2$.
When $(\mu_0,\sigma^2_0)=(\mu_m,\sigma^2_m)$ for some $m=1,\dots,M$ the model set is complete, otherwise it is incomplete. In both cases it turns out that $f_{0}\in KL(\Pi^{*})$. 

In order to apply Theorem  \ref{corollario} one needs to check  that \eqref{entropy2} is satisfied
for some $\boldsymbol{\omega^*}$ in the interior of $\Delta_M$. We shall show that this is true for  the equal weights linear pooling, $\boldsymbol{\omega^*}=(1/M,\dots,1/M)$.
To this end, denoting  by $\Phi(\cdot|\mu,\sigma^2)$  the cumulative distribution function of $\varphi(\cdot|{\mu,\sigma^2})$,
observe that:
\begin{itemize}
\item[(i)] given a mixture of $M$ normal distributions with means and variances $(\mu_m,\sigma_m^2)$, $m=1,\dots,M$,
if $0<\sigma_- < \min_m \sigma_m \leq \max_m \sigma_m< \sigma_{+}$, then
there are two constants $C^-$ and $C^+$
such that, for every $y$,
\[
C^{-} \varphi(y|{0,\sigma_{-}^2}) \leq \sum_{m=1}^M \omega_m \varphi(y|{\mu_m,\sigma_{m}^2}) \leq C^{+} \varphi(y|{0,\sigma_{+}^2});
\]
See Lemma \ref{lemma:gaussiantailmixture} in Supplementary Materials. 
\item[(ii)]  as $y \to +\infty$, one has  $(1-\Phi(y|{0,1}))/\varphi(y|{0,1}) \sim 1/y$ and hence
$|\log(1-\Phi(y|{0,\sigma^2}))| \sim y^2/\sigma^2$.
\end{itemize}
Using (i) and (ii)  one can check that
\[
|\log(1-H(y|\boldsymbol{\omega^*}))| \leq C \max\{|\log(1-\Phi(y|{0,\sigma{-}^2})|,|\log(1-\Phi(y|{0,\sigma{+}^2})|\}
\leq C' y^2
\]
for suitable constants $C,C'$.
Hence
\[
\int |\log(1-H(y|\boldsymbol{\omega^*}))|  f_0(y)dy \leq C' \int   y^2 \varphi(y|\mu_0,\sigma_0^2)dy<+\infty.
\]
Analogous considerations hold for $|\log(H(y|\boldsymbol{\omega^*}))|$.
Hence  the first condition in  \eqref{entropy2} is satisfied.
Using (i) and  the fact that $KL( \varphi(\cdot|{\mu_1,\sigma_{1}^2}), \varphi(\cdot|{\mu_2,\sigma_{2}^2}))<+\infty$,
 it is easy to obtain also that $KL(h(\cdot|\boldsymbol{\omega^*}),f_0)<+\infty$.
\end{example}

\begin{example}[Mixture of normals]
Following the same line of the previous example, one can treat the case in which
\[
 h(y|\boldsymbol{\omega})=\sum_{m=1}^M\omega_{m} \varphi(y|{\mu_m,\sigma_m^2})  \quad \text{and} \quad
 f_0(y)=\sum_{i=1}^K   \omega_{0,i}   \varphi(y | {\mu_{0,i},\sigma_{0,i}^{2}}), \\
\]
with possibly $K>M$. Here some of the true parameters $(\mu_{0,i},\sigma_{0,i}^{2})$, $i=1,\dots K$,
can be equal to some of the parameters of the models $(\mu_m,\sigma_m^2)$, $m=1,\dots M$.
Also in this case it is easy to see that the
assumptions of Theorem  \ref{corollario} are satisfied for any $\omega^*$ in the interior of $\Delta_M$.
\end{example}

The following example shows that the assumptions can be checked also in the cases the set of models is incomplete and the true distribution has heavy tails.

\begin{example}[Heavy tails]
Consider the case in which
\[
h(y|\boldsymbol{\omega})=\sum_{m=1}^M\omega_{m} \varphi(y|{\mu_m,\sigma_m^2})
  \quad \text{and} \quad
f_0(y)=\sum_{i=1}^K  \omega_{0,i}   \mathcal{T}_{\mu_{0,i},\sigma_{0,i},\nu}(y),
\]
where $\mathcal{T}_{\mu,\sigma,\nu}$ is a $t$-distribution with location, scale and degrees of
freedom paramters $\mu,\sigma$  and $\nu$ respectively.
Since  $f_0(y)\sim Cy^{-\nu-1}$ as $|y| \to +\infty$, arguing as in the previous example, it is easy to
see that  \eqref{entropy2} is satisfied whenever $\nu>2$. In this case $f_{0}\in KL(\Pi^{*})$.
\end{example}

\subsection{Consistency results for Markovian observations}\label{ConMarkov}
 
Let  $\mathcal{F}$ be the set 
 of all possible transition densities (with respect to Lebesgue measure) on the sample space $\mathcal{Y}\subset \RE$.
As in the i.i.d. case, we assume that set of combined models is given, 
but the distribution of the current observation $y_t$ given all the past observations, 
satisfies the Markovian property: $F_{mt}(y_t)=F_{m}(y_t|y_{t-1})$, where
\[
F_{m}(y_t|y_{t-1})=\int_{(-\infty,y_t]} f_m(y|y_{t-1})dy, \quad m=1,\dots,M,\, t=1,\dots,T
\]
for a  given set of  transition densities $\{ f_m:\CY \times \CY \to \RE ;m=1,\dots,M\}$, subset of  $\mathcal{F}$. Hence
$ H_t(y_t|\boldsymbol{\omega})=H(y_t|y_{t-1},\boldsymbol{\omega})$
where $H(y_t|y_{t-1},\boldsymbol{\omega})=\sum_{m=1}^M \omega_m F_{m}(y_t|y_{t-1})$ and $\boldsymbol{\omega}=(\omega_1,\dots,\omega_M)$.

In order to introduce the definition of weak consistency for Markovian observations, 
let us
consider  a real valued ergodic Markov process $(y_0,y_1,\dots)$ with true transition density $f_0(y|x)$  belonging to  $\mathcal{F}$ and stationary distribution $\pi$
and write $P^\infty_{f_0}$ for the distribution of the infinite
sequence $(y_0, y_1, \dots)$. 
If $\Pi^*$ is a prior over  $\mathcal{F}$, 
we shall abbreviate the posterior $\Pi^*(\cdot|y_0,\cdots,y_n)$ by $\Pi^*_n(\cdot)$ and by a.s., 
we shall mean a.s. with respect to $P^\infty_{f_0}$. 
A sequence of posterior distributions 
$\Pi^*_n(\cdot)$ $(n \geq 1)$ is said to be weakly consistent at $f_0$ if for every weak neighbourhood $B$ of $f_0$, it follows that  $\Pi_n(B^c) \to 0$ a.s. as $n \to +\infty$. 
Following  \cite{TangGhosal07},  the weak topology is induced by 
the sub-base of neighbourhoods (of a point $f_0$) 
$
 A_{g,\eps}(f_0):=\Big \{ f: \int \Big | \int g(y) f(y|x) dy -  \int g(y) f_0(y|x) dy \Big|  \lambda(dx) \leq \eps \Big \}
$
where $\lambda$ is a fixed probability distribution and $g$ varies among bounded continuous function on $\CY$. A standard choice for $\lambda$ is $\pi$. 

We need also to introduce an adequate generalization of the Kullback-Leibler property: 
a Kullback-Leibler neighbourhood  of size $\varepsilon$ of a transition density $f_0 \in\mathcal{F}$ with stationary distribution $\pi$ is defined as
$$
\mathcal{K}_{\varepsilon}(f_{0}):=\left\{f\in\mathcal{F} : \int \int f_0(y|x)\log\left(\frac{f_0(y|x)}{f(y|x)}\right)dy \pi(dx)  \leq\varepsilon\right\},
$$
and the Kullback-Leibler property holds at $f_{0}\in\mathcal{F}$, for short $f_{0}\in KL(\Pi^{*})$, if $\Pi^{*}(\mathcal{K}_{\varepsilon}(f_{0}))>0$ for all $\varepsilon>0$.

Differently from  the i.i.d. case, the Kullback-Leibler property alone is not enough to ensure weak consistency
and needs to be complemented with  Corollary 4.1 in \cite{TangGhosal07} (see proof of Theorem \ref{teoconsMarkov3}) . 

Recall that we are  dealing with a prior  $\Pi^*$ induced via the map
$G\mapsto f_{G}(y|x) :=\int_\Theta K(y| x,\boldsymbol{\theta})G(d\boldsymbol{\theta})$,
where
$G$ has a Dirichlet process prior with base measure $G_0$ and
$
K(y|x, \boldsymbol{\theta})=b^*_{\mu,\nu}(H(y|x,\boldsymbol{\omega}))h(y|x,\boldsymbol{\omega})
$
with $\boldsymbol{\theta}=(\boldsymbol{\theta}_{p},\boldsymbol{\theta}_{c})$. 

\begin{theorem}\label{teoconsMarkov3}
Assume that the functions $f_m(\cdot|\cdot)$ are continuous on $\mathcal{Y} \times \mathcal{Y}$
and that, for every compact set $C \subset \mathcal{Y} \times \mathcal{Y}$,
$ \min_{m=1,\dots,M} \inf_{(y,x) \in C} f_m (y|x)>0$.
Assume also  that for every compact set $C' \subset \mathcal{Y}$ and every  $\eta \in (0,1)$ there is $L>0$ such that 
\begin{equation}\label{upperboundF}
 \max_{m=1,\dots,M} \sup_{x \in C'} \int_{[-L,L]^c}f_m (y|x)dy \leq \eta.
\end{equation}
Let $f_0(y|x)=\int b_{\mu,\nu}^*(H(y|x,\boldsymbol{\omega})) h(y|x,\boldsymbol{\omega}) G^*(d\mu d\nu d \boldsymbol{\omega})$,
$G^*$ being a probability measure on $(0,1) \times \RE^+ \times \Delta_M$, such that $supp(G^*) \subset supp(G_0)$
and the marginal distribution $G_{0}(d\mu d\nu \times \Delta_M)=\int_{\Delta_M} G_{0}(d\mu d\nu d \boldsymbol{\omega})$ has compact support on  $(0,1) \times \RE^+$.
If  $ R_-(y|x) \leq H(y|x,\boldsymbol{\omega})(1-H(y|x,\boldsymbol{\omega})) \leq R_+(y|x)$ 
 and $r_-(y|x) \leq h(y|x, \boldsymbol{\omega})$
 for every $y,x$ and every 
 $\boldsymbol{\omega}$ with
 \begin{equation}\label{eq:condmarkov1}
 \int\int \Big [|\log(R_-(y|x))\Big|+\Big|\log \Big (\frac{R_+(y|x)}{r_-(y|x)} \Big )\Big| + \log(f_0(y|x))  \Big]f_0(y|x) dy \pi(dx)<+\infty,
 \end{equation}
then
$f_{0}\in KL(\Pi^{*})$ and
the posterior is weakly consistent at $f_0$.
\end{theorem}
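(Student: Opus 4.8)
The plan is to establish the two conclusions in turn: first the Kullback--Leibler property $f_0\in KL(\Pi^*)$, and then weak consistency, which in the Markovian setting does not follow from the KL property alone but must be combined with the framework of \cite{TangGhosal07}. The argument for the KL property runs parallel to the i.i.d.\ case of Theorem \ref{corollario} (cf.\ \cite{WuGhosal2008}), the essential difference being that all pointwise estimates are now integrated against the stationary law $\pi(dx)$ in addition to $dy$.

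First I would invoke the large weak support of the Dirichlet process: by the stick-breaking representation \citep{Set94}, the weak support of $DP(\psi,G_0)$ consists of all probability measures whose support is contained in $supp(G_0)$. Since $supp(G^*)\subset supp(G_0)$, the mixing measure $G^*$ that generates $f_0=f_{G^*}$ lies in the weak support of the prior, so every weak neighbourhood of $G^*$ receives positive prior mass. It then suffices to show that the map $G\mapsto f_G$ sends a sufficiently small weak neighbourhood of $G^*$ into the Kullback--Leibler ball $\mathcal{K}_\varepsilon(f_0)$.

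Second I would prove that weak convergence $G\to G^*$ forces $f_G(y|x)\to f_0(y|x)$ pointwise. This uses that the kernel $(\mu,\nu,\boldsymbol{\omega})\mapsto b^*_{\mu,\nu}(H(y|x,\boldsymbol{\omega}))\,h(y|x,\boldsymbol{\omega})$ is bounded and continuous in the mixing parameters: continuity of the transition densities $f_m(\cdot|\cdot)$ yields continuity of $H$ and $h$ in $\boldsymbol{\omega}$, while the hypothesis that the marginal $G_0(d\mu\,d\nu\times\Delta_M)$ has compact support confines $(\mu,\nu)$ to a compact subset of $(0,1)\times\RE^+$, so that the beta exponents $\mu\nu$ and $(1-\mu)\nu$ stay bounded and bounded away from the values that would make the kernel degenerate. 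The third and decisive step is to upgrade this pointwise convergence to convergence of the divergence $\int\int f_0(y|x)\log\big(f_0(y|x)/f_G(y|x)\big)\,dy\,\pi(dx)$. Writing $KL(f_0,f_G)=\int\int f_0\log f_0-\int\int f_0\log f_G$, the positivity $\min_m\inf_C f_m>0$ together with the sandwich bounds $R_-\le H(1-H)\le R_+$ and $r_-\le h$ provides a pointwise lower bound on $f_G$ of the form $f_G\gtrsim R_-^{\,c}\,r_-$ (the exponent $c\ge0$ being controlled by the compact range of $(\mu,\nu)$, since $H\ge H(1-H)\ge R_-$), hence an upper bound on $-\log f_G$ in terms of $|\log R_-|$ and $|\log(R_+/r_-)|$; condition \eqref{eq:condmarkov1} is engineered precisely so that this bound, together with the entropy term $\int\int f_0\log f_0$, is dominated by an $f_0$-integrable function. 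A dominated-convergence argument then gives $KL(f_0,f_G)\to0$, and combined with the positive prior mass on weak neighbourhoods of $G^*$ this yields $f_0\in KL(\Pi^*)$.

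Finally, for weak consistency I would appeal to Corollary 4.1 of \cite{TangGhosal07}, whose hypotheses are the KL property just verified together with a uniform-tightness condition on the family of transition densities entering the mixture. The role of assumption \eqref{upperboundF} is exactly to deliver this tightness: it ensures that, uniformly over $x$ in any compact set, the tail mass $\int_{[-L,L]^c}f_m(y|x)\,dy$ can be made arbitrarily small, from which one obtains the uniform tightness that the Tang--Ghosal scheme requires to construct the consistent tests underlying weak consistency. The main obstacle I anticipate is the third step of the KL argument: near the endpoints $H\in\{0,1\}$ the beta factor $b^*_{\mu,\nu}(H)$ can be very small, so controlling $-\log f_G$ in the tails---rather than merely checking that it is finite---is delicate, and it is precisely the paired combination of $|\log R_-|$ with $|\log(R_+/r_-)|$ in \eqref{eq:condmarkov1} that supplies the integrable domination needed to make the divergence actually converge.
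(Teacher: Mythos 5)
Your treatment of the Kullback--Leibler property is essentially the paper's argument: the paper also exploits the fact that $f_0=f_{G^*}$ with $G^*$ in the weak support of $DP(\psi,G_0)$, takes $G_\eps=G^*$ so that the approximation term vanishes identically, and then uses the compact support of the marginal $G_0(d\mu\, d\nu\times\Delta_M)$ together with the sandwich bounds $R_-\le H(1-H)\le R_+$, $r_-\le h$ to obtain a lower bound of the form $C\, r_-(y|x)R_-(y|x)^A/R_+(y|x)$ on $\inf_\theta K(y|x,\theta)$, with \eqref{eq:condmarkov1} supplying the required integrability (the paper packages this as conditions M1--M4 of an adapted Wu--Ghosal lemma rather than as a direct dominated-convergence argument at $G^*$, but the content is the same; note only that your lower bound $f_G\gtrsim R_-^{\,c}r_-$ should also carry the factor $R_+^{-1}$ coming from the exponents $\mu\nu-1$ and $(1-\mu)\nu-1$, which is precisely why $|\log(R_+/r_-)|$ appears in \eqref{eq:condmarkov1}).

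The genuine gap is in the consistency step. Corollary 4.1 of \cite{TangGhosal07} does not apply directly to the sub-basic weak neighbourhoods $A_{g,\eps}(f_0)=\{f:\int|\int g(y)[f(y|x)-f_0(y|x)]dy|\,\lambda(dx)>\eps\}$; it controls the posterior probability of sets on which the discrepancy $|\int g(y)[f(y|x)-f_0(y|x)]dy|$ is bounded below \emph{uniformly over $x$} in a fixed set. A transition density $f$ in $A_{g,\eps}(f_0)$ is only guaranteed to have a large discrepancy at \emph{some} $x$ (depending on $f$), so a bridging construction is needed: one restricts to a compact $C$ carrying most of the $\lambda$-mass, picks for each $f$ a near-maximizing point $x_f\in C$, proves that $x\mapsto\int g(y)f(y|x)dy$ is equicontinuous uniformly over $f$ in the support of the prior (this is where the continuity of the $f_m$, the positivity on compacts, and the tail condition \eqref{upperboundF} are actually used, the latter to control the part of the $y$-integral outside $[-L,L]$ uniformly over the mixing measure), and then partitions $C$ into finitely many intervals $C_r$ of small diameter so that $A_{g,\eps}(f_0)\cap supp(\Pi^*)\subset\cup_r A_r$ with each $A_r$ uniformly separated over $x\in C_r$. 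Only then does Corollary 4.1 give $\Pi^*_n(A_r)\to 0$ a.s.\ for each $r$, and hence $\Pi^*_n(A_{g,\eps}(f_0))\to 0$. Your proposal treats \eqref{upperboundF} as delivering a ``tightness'' hypothesis that Corollary 4.1 consumes directly; in fact its role is to establish the uniform equicontinuity in $x$ that makes the finite covering possible, and without that covering argument the reduction from weak neighbourhoods to the sets the testing machinery can handle is missing.
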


The proof of the previous theorem is given in the Supplementary Materials \ref{appendix_consistencyl}.

\begin{example}[Mixture of autoregressive processes]\label{exMix}
Consider the case in which the models are normal autoregressive processes of the first
order, i.e. 
\[
 h(y_t|y_{t-1},\boldsymbol{\omega})=\sum_{m=1}^M\omega_{m} \varphi(y_t |{\mu_m +\phi_m y_{t-1}  ,\sigma_m^2}) 
\]
where  $\varphi(\cdot|{\mu,\sigma^2})$ is the pdf of a normal distribution of mean $\mu$ and variance $\sigma^2$ and $M \geq 2$.
Assume that the data are generated from the following
dynamic mixture model (mAR) 
\[
y_t \sim p \mathcal{SN} (\mu_1 + \phi_1 y_{t-1}, \sigma_1,\rho_1) + (1-p)\mathcal{SN} (\mu_2 + \phi_2 y_{t-1}, \sigma_2,\rho_2) 
\]
 with $\rho_i \in \{-1,1\}$, 
where $\mathcal{SN} (\mu, \sigma,\rho)$ denotes a skew-normal  (see \cite{RSSB:RSSB391}) with $\mu$, $\sigma$ and  $\rho$ the 
location, scale and asymmetry parameters.
Recalling that the density of a skew normal distribution is 
$2\Phi(y|{\mu,\sigma^2}) \varphi(y|{\mu,\sigma^2})$ for  $\rho=1$  and 
 $2(1-\Phi(y|{\mu,\sigma^2})) \varphi(y|{\mu,\sigma^2})$ for $\rho=-1$, 
one can write 
$f_0(y|x)=\int b_{\mu,\nu}^*(H(y|x,\boldsymbol{\omega})) h(y|x,\boldsymbol{\omega}) G^*(d\mu d\nu d \boldsymbol{\omega})$
for a suitable $G^*$. For example if $M=3$, $\rho_1=1$ and $\rho_2=-1$, then 
$G^*(d\mu d\nu d \boldsymbol{\omega})= p \delta_{(1,0,0)}(d \boldsymbol{\omega} )\delta_{2/3,3}(d\mu d\nu) +   
(1-p) \delta_{(0,1,0)}(d \boldsymbol{\omega}) \delta_{1/3,3}(d\mu d\nu)$.
With a little  bit of effort, it is possible to show that  all the assumptions of Theorem \ref{teoconsMarkov3} are satisfied. Details are given in Supplementary Materials \ref{appendix_consistencyl}.
\end{example}

\section{Simulation examples}  \label{sec:simulation}
The consistency results given in the previous section imply uniformity of the PITs only in the limit, when the number of observations goes to infinity. On a finite sample, different sets of externally provided models can have different forecasting performances and one is left with the issue of studying the finite sample properties of the probabilistic calibration method. This motivates the following simulation studies.

\subsection{Multimodality and heavy tails}  \label{sec:multimodality}
We assume that a combined predictive distribution can be obtained from
the two normal predictive distributions with different location and
equal scale parameters, $\cN(-1,1)$ and $\cN(2,1)$, where
$\cN(\mu,\sigma^{2})$ denotes the normal distribution with location
$\mu$ and scale $\sigma$. Let us denote with $\varphi(x|\mu,\sigma^2)$ and
$\Phi(x|\mu,\sigma^2)$ the pdf and cdf respectively of a
$\cN(\mu,\sigma^2)$. We compare the noncalibrated linear pool (NC)
$f(y|\btheta)=\omega \varphi(y|-1,1)+(1-\omega) \varphi(y|2,1)$,
where $\btheta=\omega$ and the infinite beta mixture model (BMC). 
The model weights in the linear pooling are estimated by using the 
recursive log score, see e.g. \cite{JoreMitchellVahey2010}.

In the first set of experiments, we assume that the data are generated by the
following mixture of the three normal distributions
\begin{equation*}
y_t \overset{i.i.d.}{\sim}
p_1 \cN(-2,0.25) + p_2 \cN(0,0.25) + p_3 \cN(2,0.25),
\quad t = 1, \ldots, 1000,
\end{equation*}
where $p = (p_1, p_2, p_3) \in \Delta_3$. The inability of the beta calibration to calibrate this liner pooling is documented in the results reported 
in the Supplementary Materials \ref{appendix_simulation}. In the same Supplementary Materials, a finite beta mixture model is showed to outperform the beta calibration model. We apply the BMC model and obtain the calibrated PITs given in Fig. \ref{fig:inf.mix.sim1}. To investigate the sensitivity of the posterior quantities to the choice of the hyper\-parameters, we combine and calibrate the cdfs, on the same dataset, using two different values of the dispersion
parameter, $\psi=1$ and $\psi=5$.
\begin{figure}[t]
\begin{center}
\begin{tabular}{ccc}
\multicolumn{3}{c}{\scriptsize{$\psi=1$}}\vspace{-2pt}\\
\includegraphics[width=3.8cm]{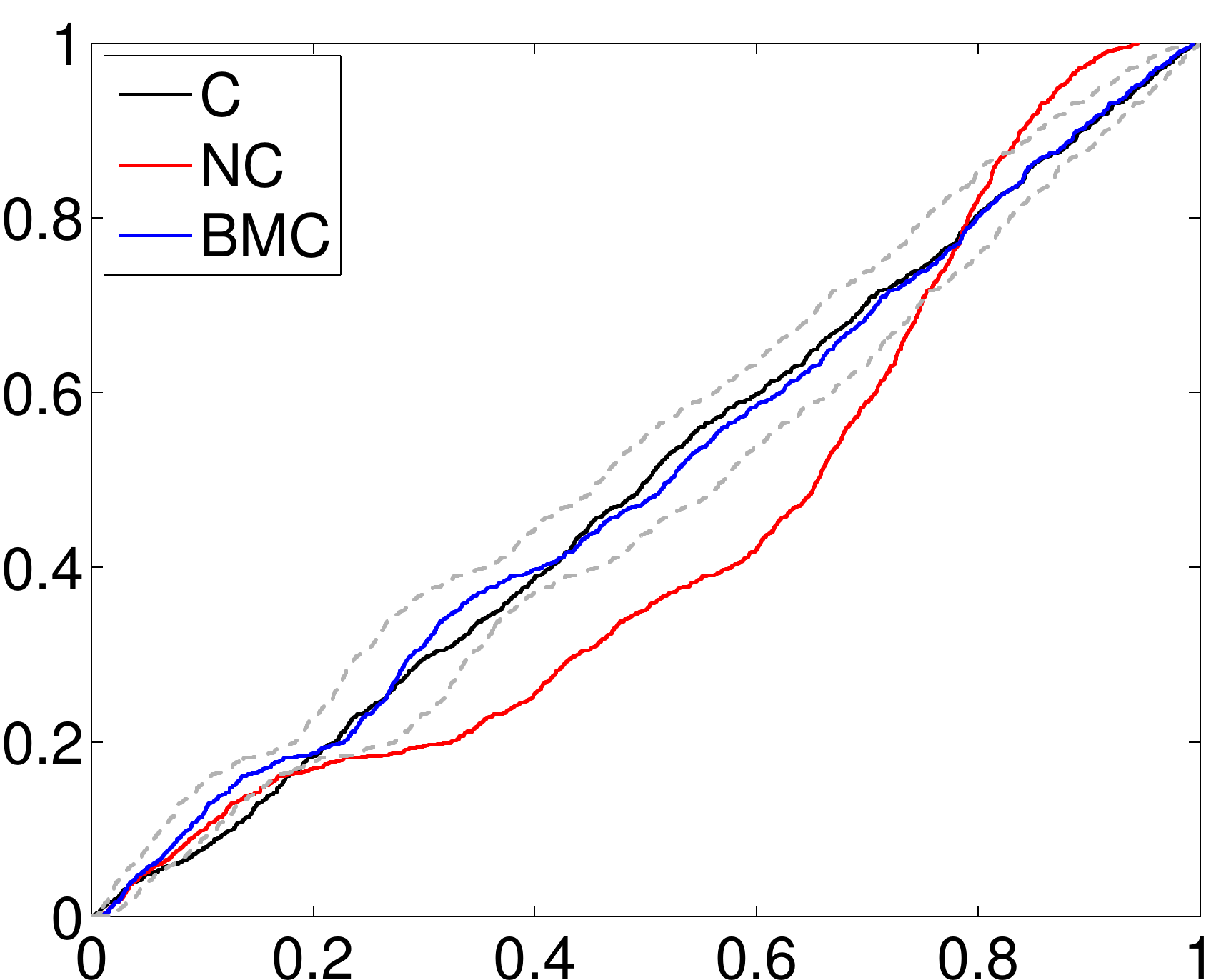}&\includegraphics[width=3.8cm]{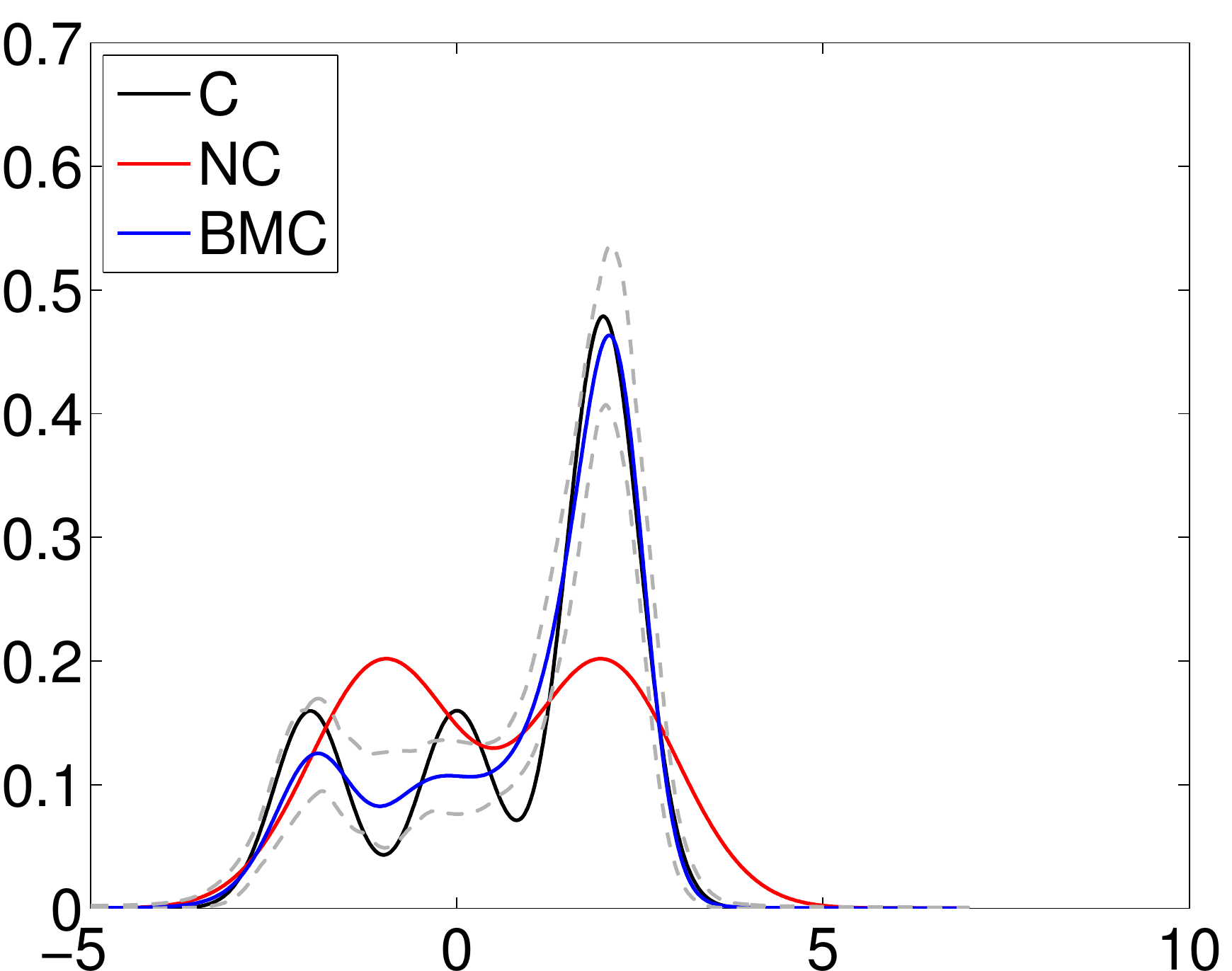}&\includegraphics[width=3.8cm]{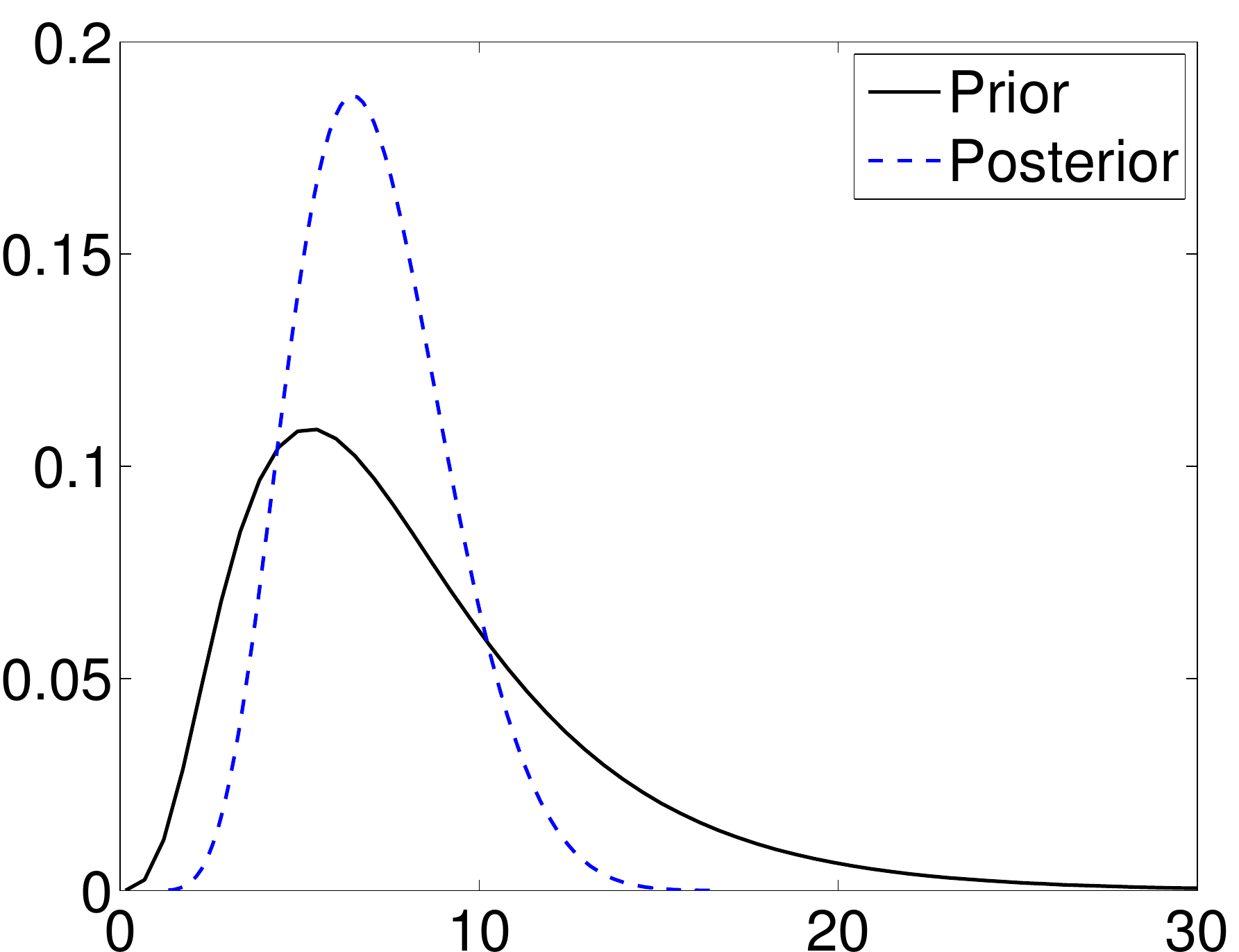}\\
\multicolumn{3}{c}{\scriptsize{$\psi=5$}}\vspace{-2pt}\\
\includegraphics[width=3.8cm]{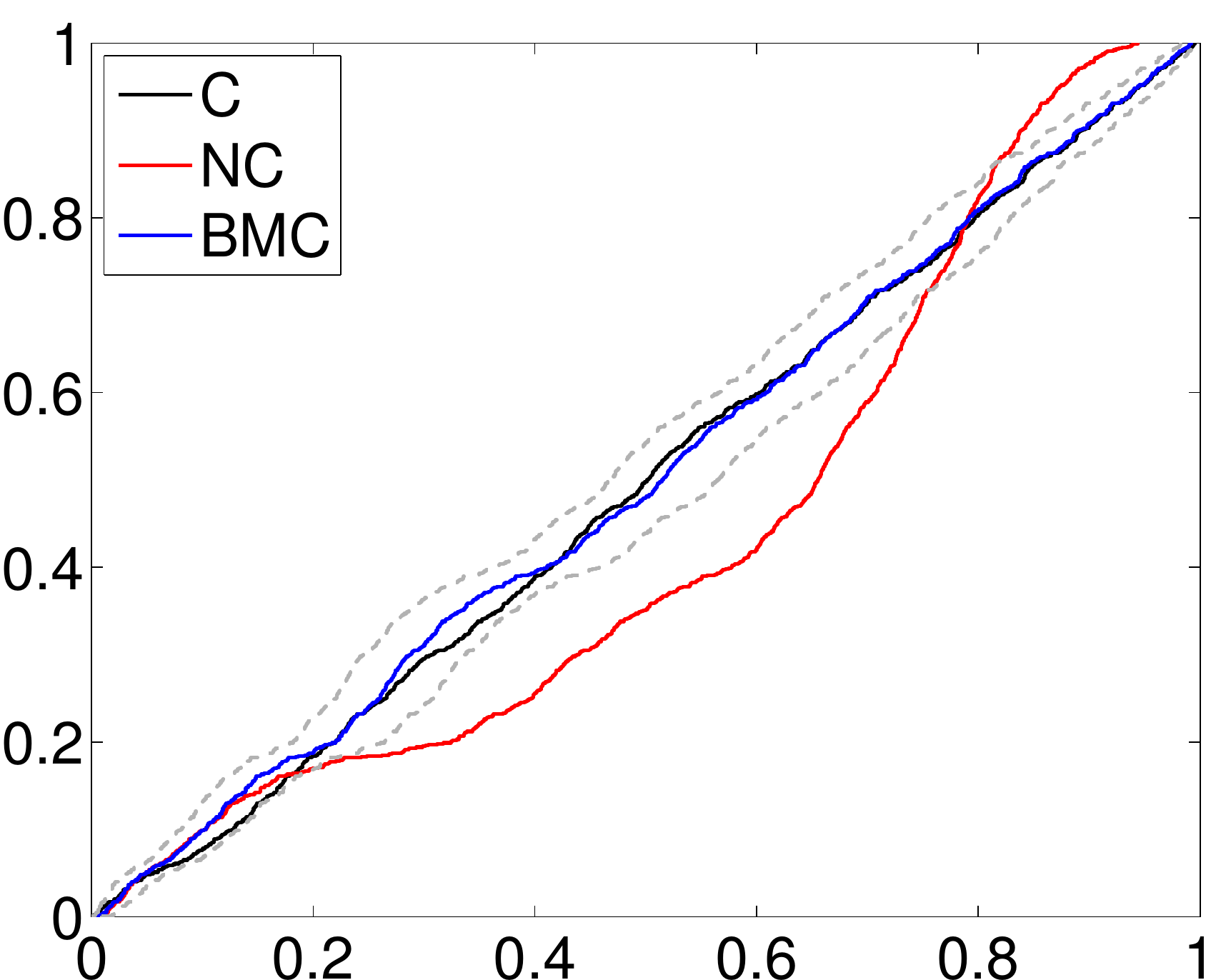}&\includegraphics[width=3.8cm]{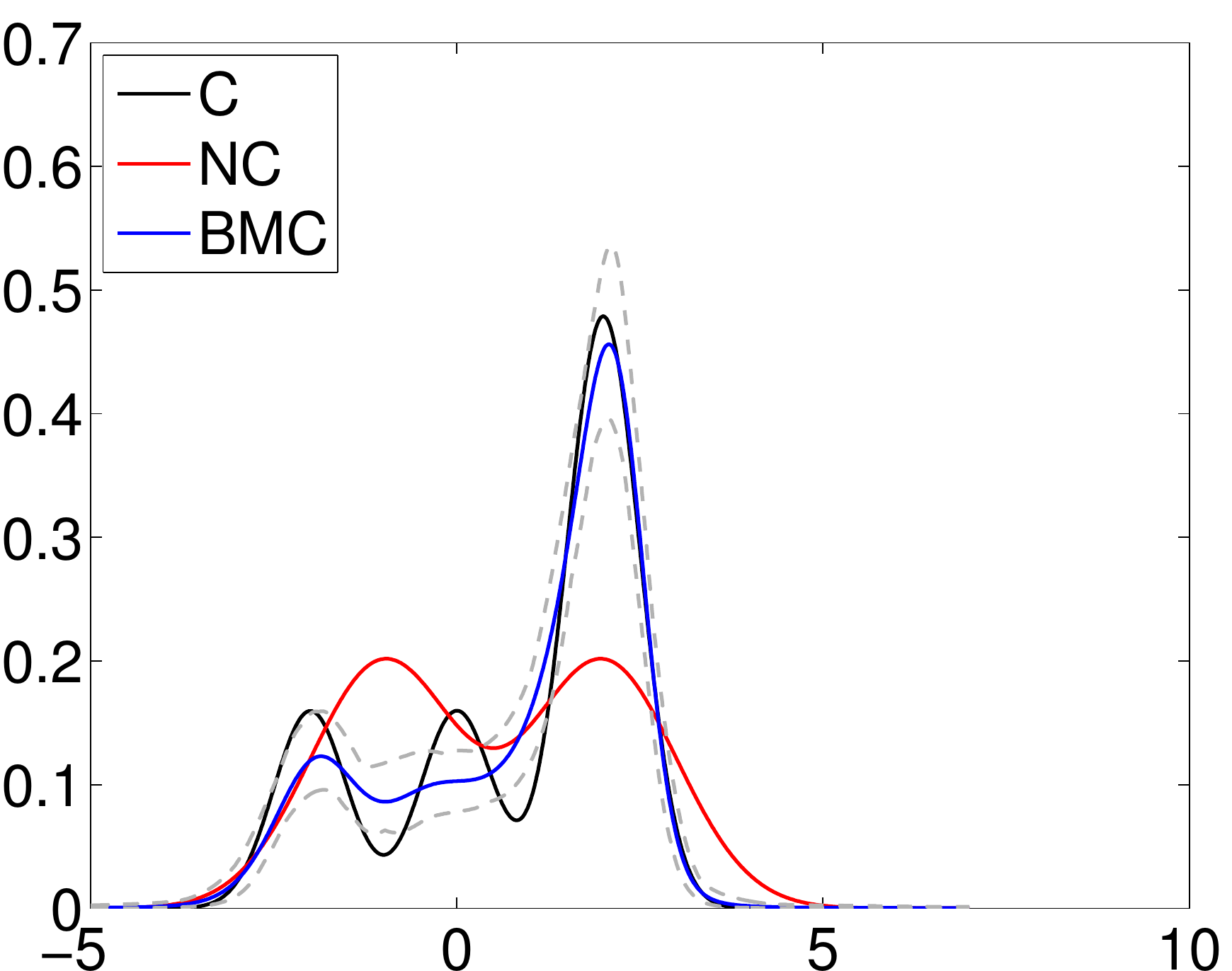}&\includegraphics[width=3.8cm]{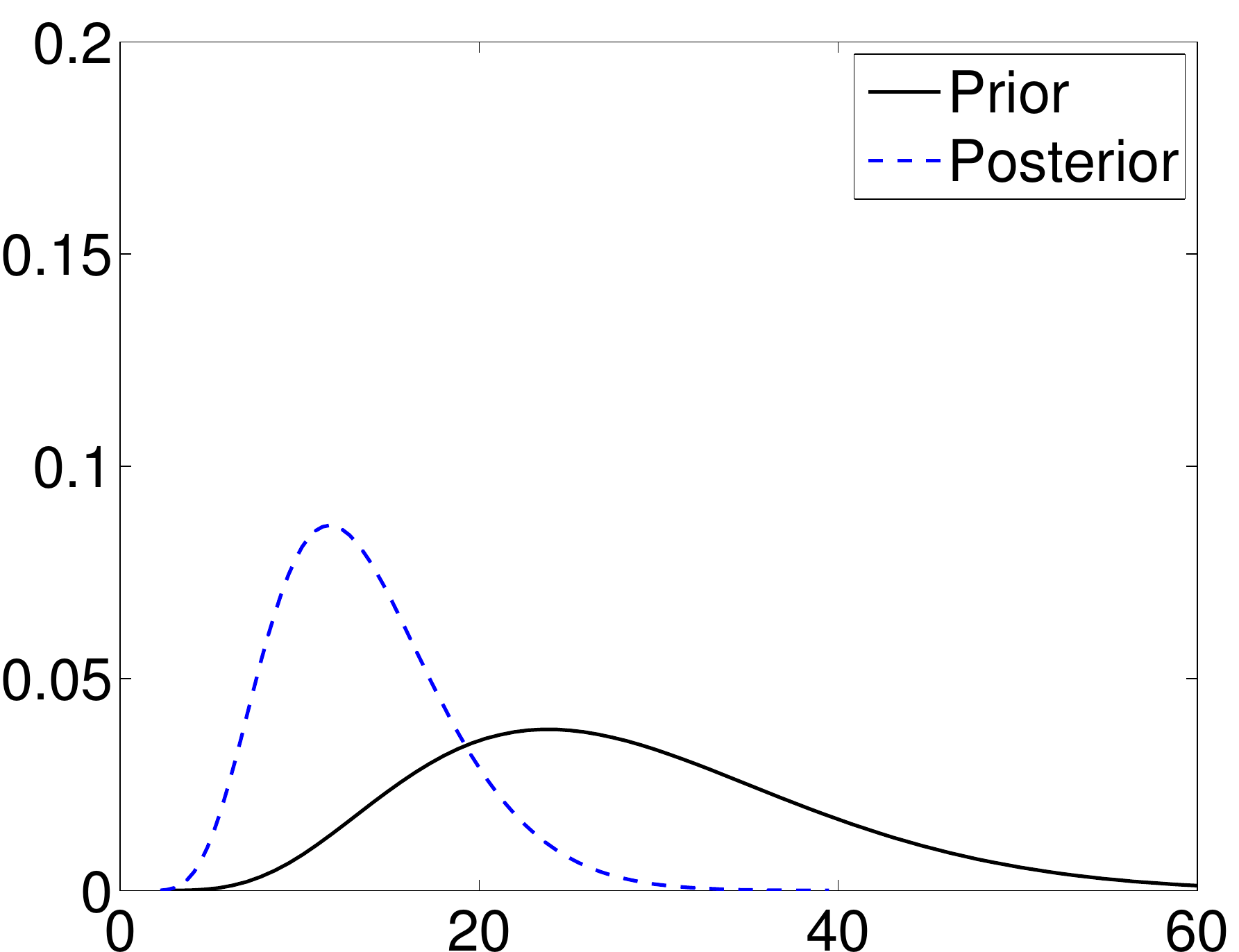}
\end{tabular}
\caption{Infinite beta mixture calibrated (BMC), calibrated (C) and
  non calibrated (NC) combinations for a dataset of 1,000 samples from
  $p_1 \cN(-2,0.25)+p_2 \cN(0,0.25)+p_3 \cN(2,0.25)$ with
  $\bp=(1/5,1/5,3/5)$. PITs cdf (left graph) and calibrated pdf (middle
  graph) of the combination models C (black), NC (red) and BMC (blue)
  and BMC 99\% HPD (gray). Prior (black) and posterior (blue) number
  of components of the random BMC model (right
  graph).}  \label{fig:inf.mix.sim1}
\end{center}
\end{figure}

The left charts of Figure \ref{fig:inf.mix.sim1} report the PITs of the
average infinite beta mixture calibration (BMC) model and their
99\% credibility intervals obtained from 1,000 MCMC samples after
convergence. The PITs of the calibrated model (black lines) belong to the 
credibility interval of the BMC, thus the resulting predictive cdf is well 
calibrated.  We should notice that the credibility intervals are usually larger 
than the one obtained using a beta mixture with a fixed number of
components.  In fact the calibrated density accounts for both
calibration parameter uncertainty and also for the uncertainty about
the number of mixture components.  A comparison between the top- and
bottom-right chart also shows that an increase in the value of the
dispersion parameter usually increases the uncertainty.

The credibility intervals (gray lines) obtained with the infinite beta
mixture calibration model, see Figure \ref{fig:inf.mix.sim1}, always
contain the PITs (first column) and the predictive density function
(second column) of the correct model.  The infinite BMC seems
particularly accurate in the tails (last column).  We also note that
the uncertainty of the number components in the infinite beta mixture
implies a wider high probability density region (HPD), see gray lines
in \ref{fig:inf.mix.sim1}, than that given by the finite beta mixture
calibration, see third panel in \ref{fig:mot1.heavy}.  The
prior and posterior distributions of the number of mixture components
in BMC are given in the right graph in Figure \ref{fig:inf.mix.sim1}.
The posterior density is more concentrated than the prior, suggesting
that data are informative on the number of calibration components.

In the second set of experiments, we assume that the data are generated by the
following mixture of $t$-distributions, i.e.
\begin{equation*}
y_t \overset{i.i.d.}{\sim}
\frac{1}{2} \cT(-1,1,6)+\frac{1}{2} \cT(2,1,6),\quad t=1,\ldots,2000,
\end{equation*}
where $\cT(\mu,\sigma,\nu)$ denotes a $t$-distribution with
location, scale and degrees of freedom parameters $\mu$, $\sigma$ and
$\nu$ respectively. The results in Supplementary Materials \ref{appendix_simulation} show 
the difficulties of the beta model in achieving well calibrated PITs. 
In this set of experiments we assume $\psi$ is unknown. The results of the infinite mixture calibration are given in Figure \ref{fig:inf.mix.sim2}.
\begin{figure}[t]
\begin{center}
\begin{tabular}{ccc}
\includegraphics[width=3.8cm]{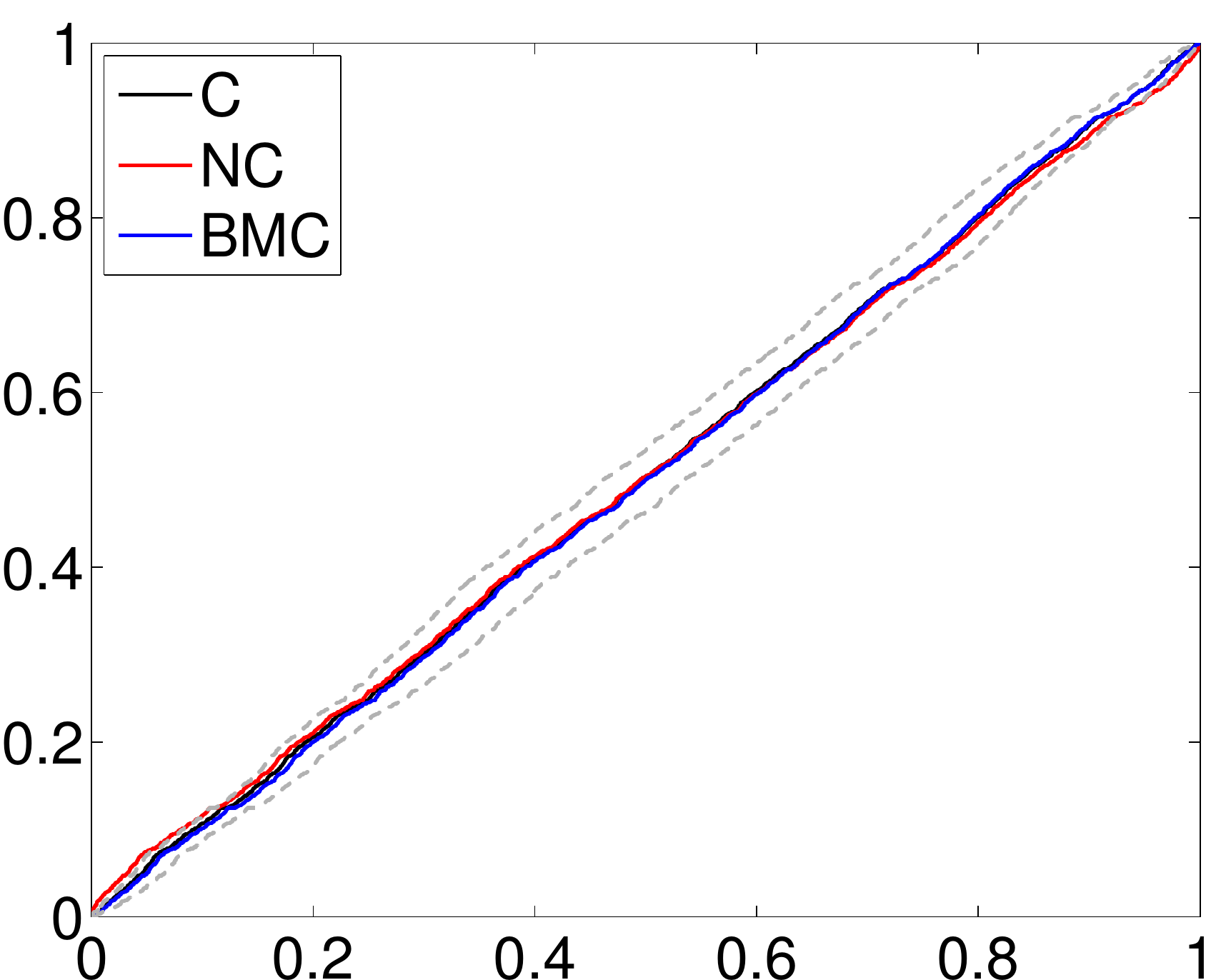} &
\includegraphics[width=3.8cm]{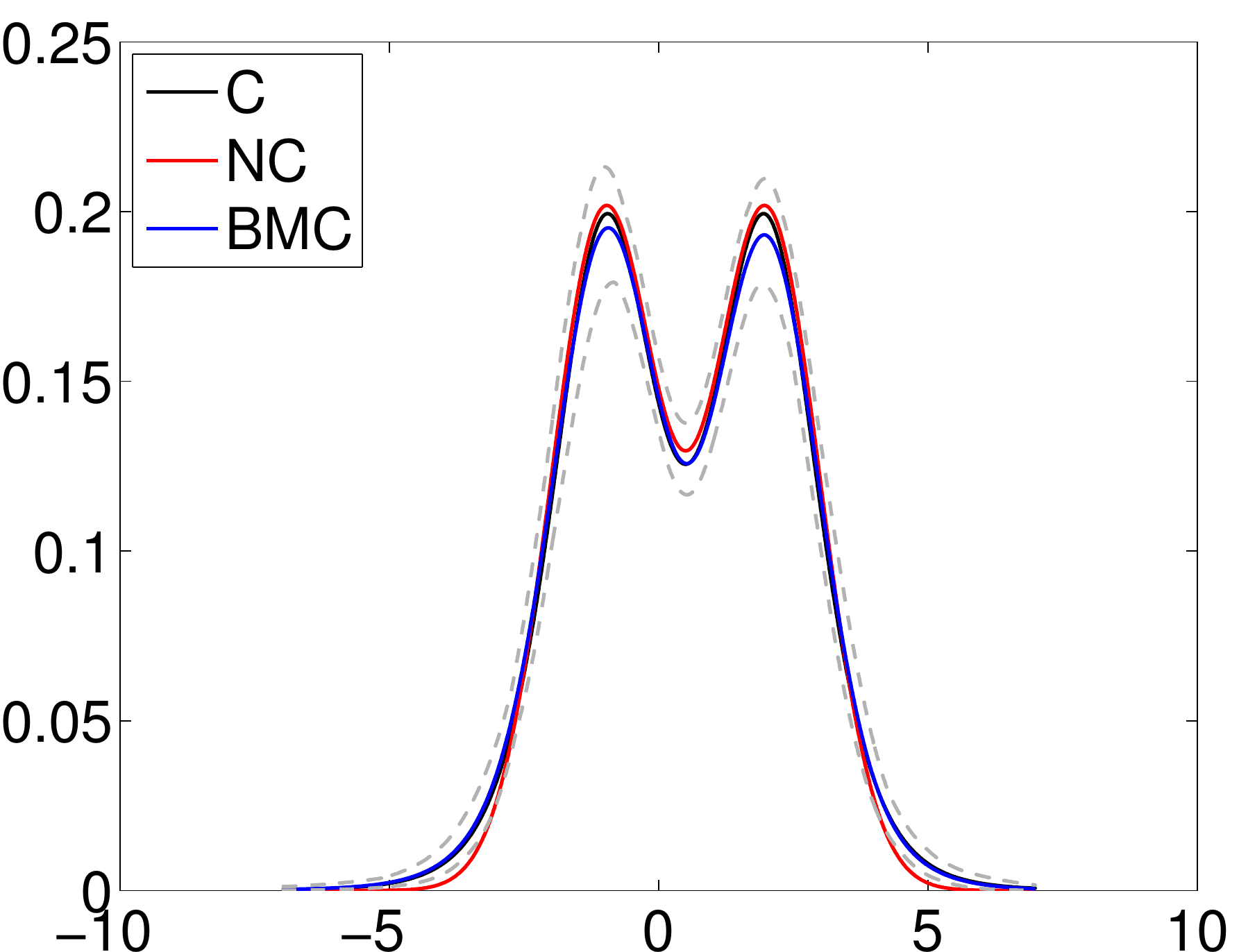} &
\includegraphics[width=3.8cm]{Nclust1-eps-converted-to.pdf}\\
\includegraphics[width=3.8cm]{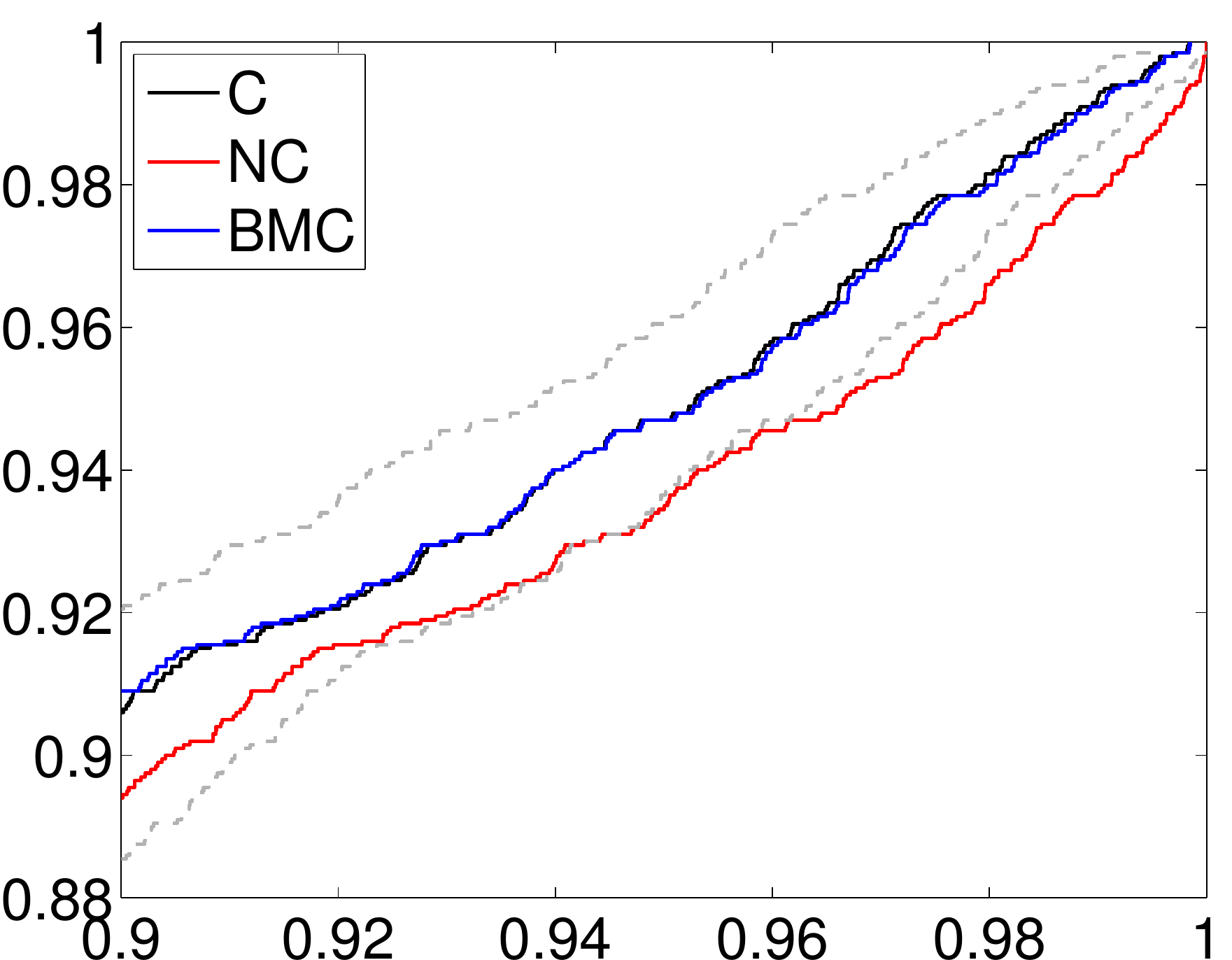}&
\includegraphics[width=3.8cm]{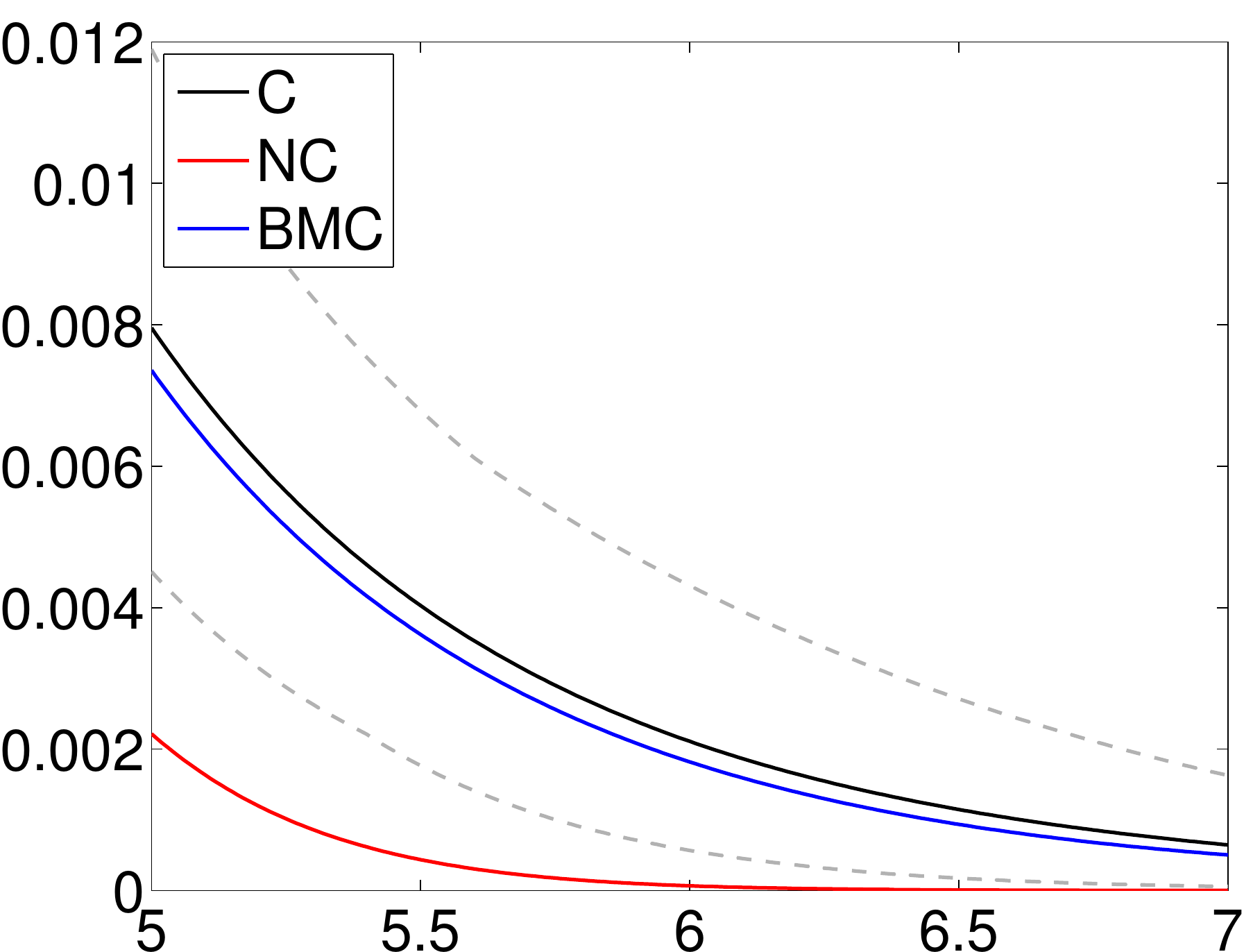}& 
\\		
\end{tabular}
\caption{Infinite beta mixture calibrated (BMC), calibrated (C) and
  non calibrated (NC) combinations for a dataset of 2,000 samples from
  $1/2 \cT(-1,1,6)+1/2 \cT(2,1,6)$. PITs cdf (left graphs) and
  calibrated pdf (middle graphs) of the combination models C (black),
  NC (red) and BMC (blue) and BMC 99\% HPD (gray). Prior (black) and
  posterior (blue) number of components of the random BMC model
  (right graph).}  \label{fig:inf.mix.sim2}
\end{center}
\end{figure}

Both cdf (first column) and pdf (second column) indicate that the Bayesian
BC has problems producing well-calibrated predictions.  The Bayesian
nonparametric calibration BMC, on the contrary, produces well-calibrated
densities, in particular on the tails; see also the 99\%
credibility intervals.  We note that the posterior distribution of the
number of clusters is more concentrated than the prior, thus there is
learning from the data on the number of mixture components.  Finally,
our experiments changing the dispersion parameter indicate no
substantial changes in the posterior density over different
hyper\-parameter values.

\subsection{Dependent observations}  \label{sec:dep}
We assume that a predictive density is obtained from the combination of two independent normal autoregressive processes of the first order, $y_{t}=\mu_1 + \phi y_{t-1} +\varepsilon_{1t}$ and $y_{t}=\mu_2 + \phi y_{t-1} +\varepsilon_{2t}$ with $\varepsilon_{it}\sim \mathcal{N}(0,\sigma_i^2)$ i.i.d., where $\mu_1=-1$, $\mu_2=2$ and $\sigma_1=\sigma_2=0.5$. Following the notation used in Example \ref{exMix} we assume the data generating process is a mixture of skew-normal autoregressive processes (mAR)
\begin{equation*}
y_t \sim \frac{1}{3}\cK(-2+\phi y_{t-1},0.5,\varrho) + \frac{2}{3} \cK(2+\phi y_{t-1},0.5,\varrho),
\end{equation*}
$t = 1, \ldots, 1000$. We compare the NC and BMC combination schemes defined in the previous section. We set the BMC dispersion parameter $\psi=0.1$ and consider two settings for the autoregressive coefficients: low persistence ($\phi=0.5$, panel (a) in Figure \ref{fig:AR}) and high persistence ($\phi=0.99$, panel (b)). For the skewness parameter, we consider one of the cases covered in Example \ref{exMix}, i.e. $\varrho=-1$ (first column in panels (a) and (b)). Moreover we show, through simulated examples, that the well-calibration property is satisfied also for other two cases: $\varrho=-3$ and $\varrho=-5$ (columns two and three in panels (a) and (b)).
\begin{figure}[h!]
\begin{center}
\begin{tabular}{ccc}
\multicolumn{3}{c}{\scriptsize{(a) Low persistence ($\phi=0.5$)}}\vspace{0pt}\\
\scriptsize{$\varrho=-1$}&\scriptsize{$\varrho=-3$}&\scriptsize{$\varrho=-5$}\vspace{0pt}\\
\includegraphics[width=3.5cm]{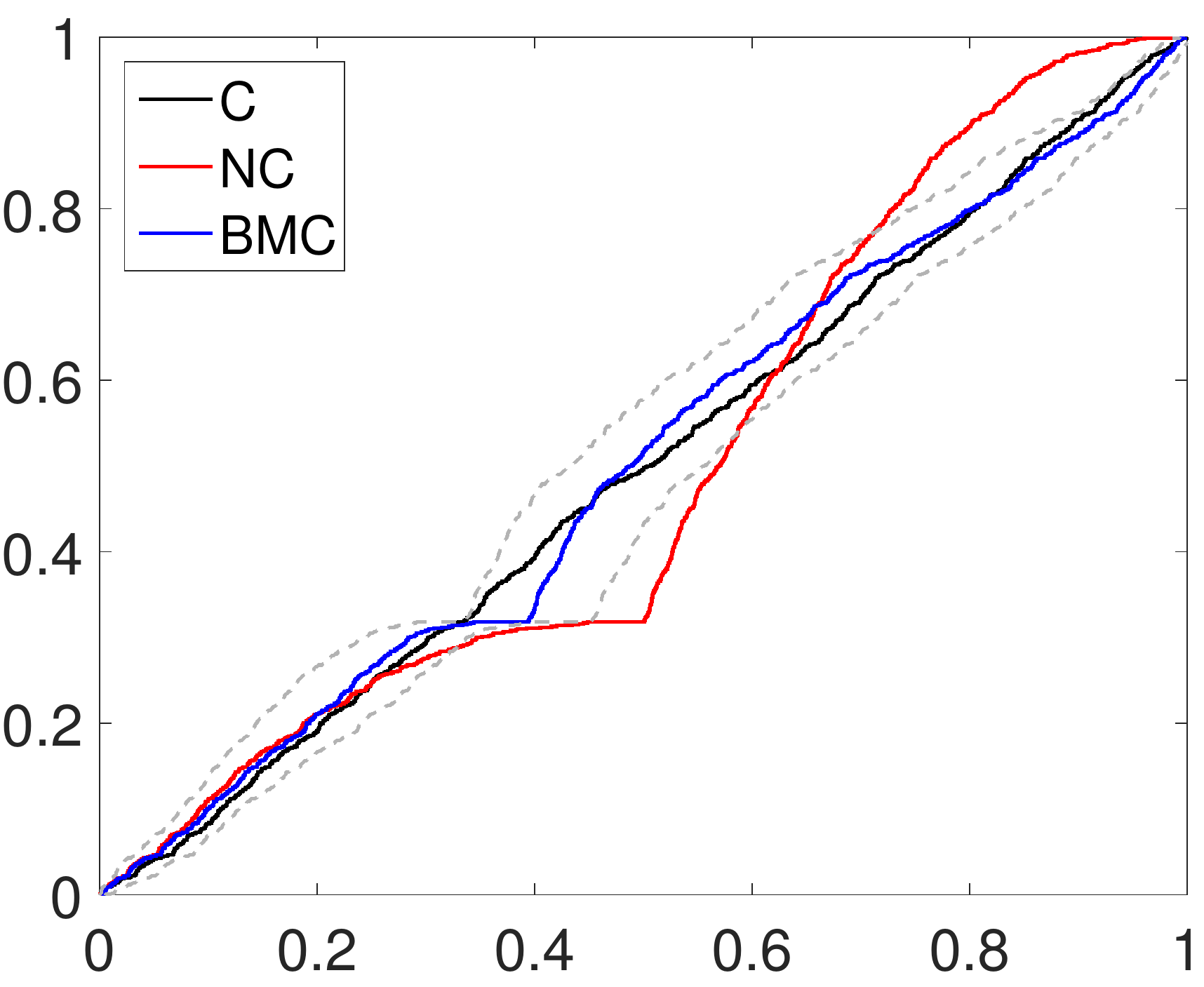} &
\includegraphics[width=3.5cm]{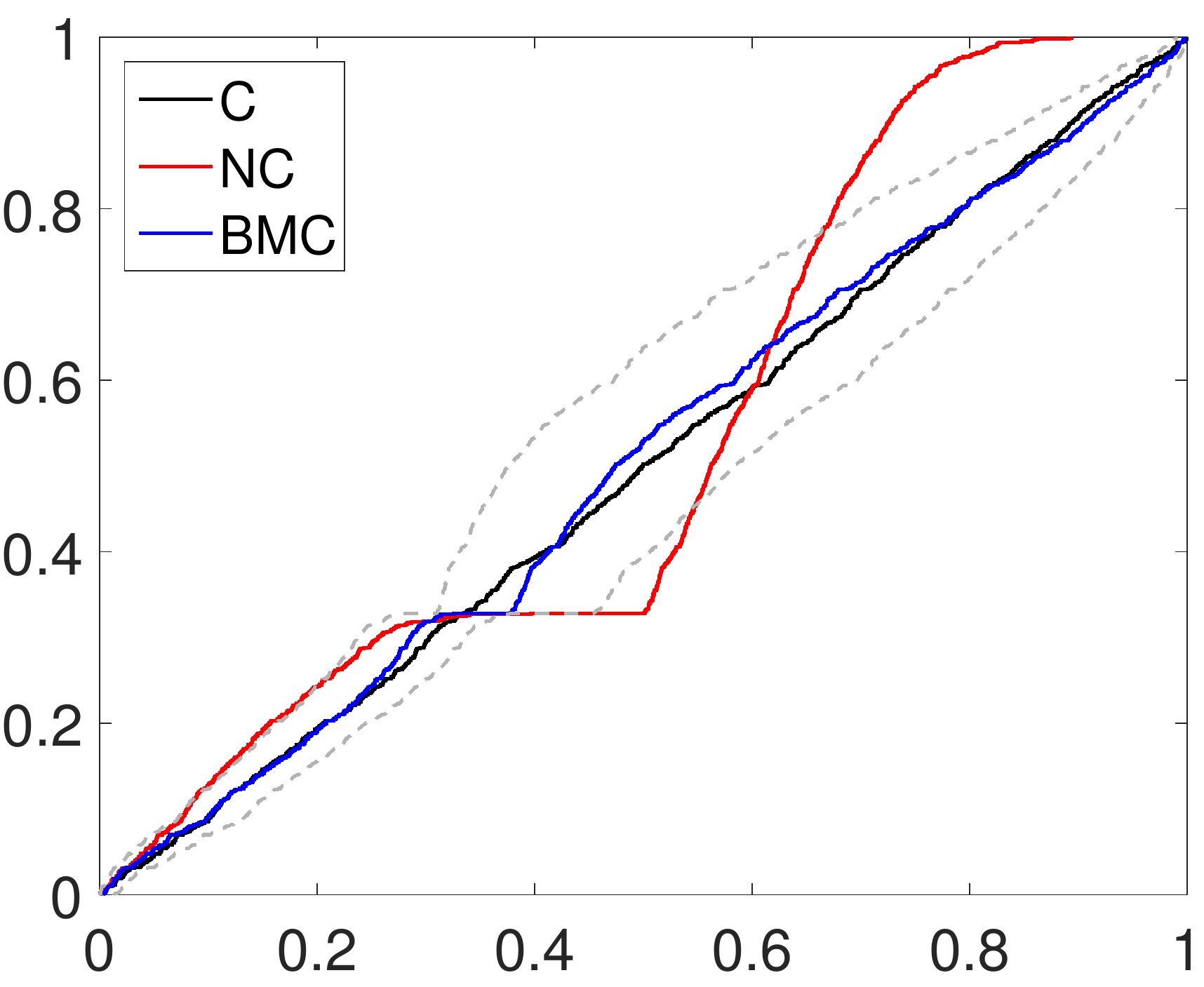} &
\includegraphics[width=3.5cm]{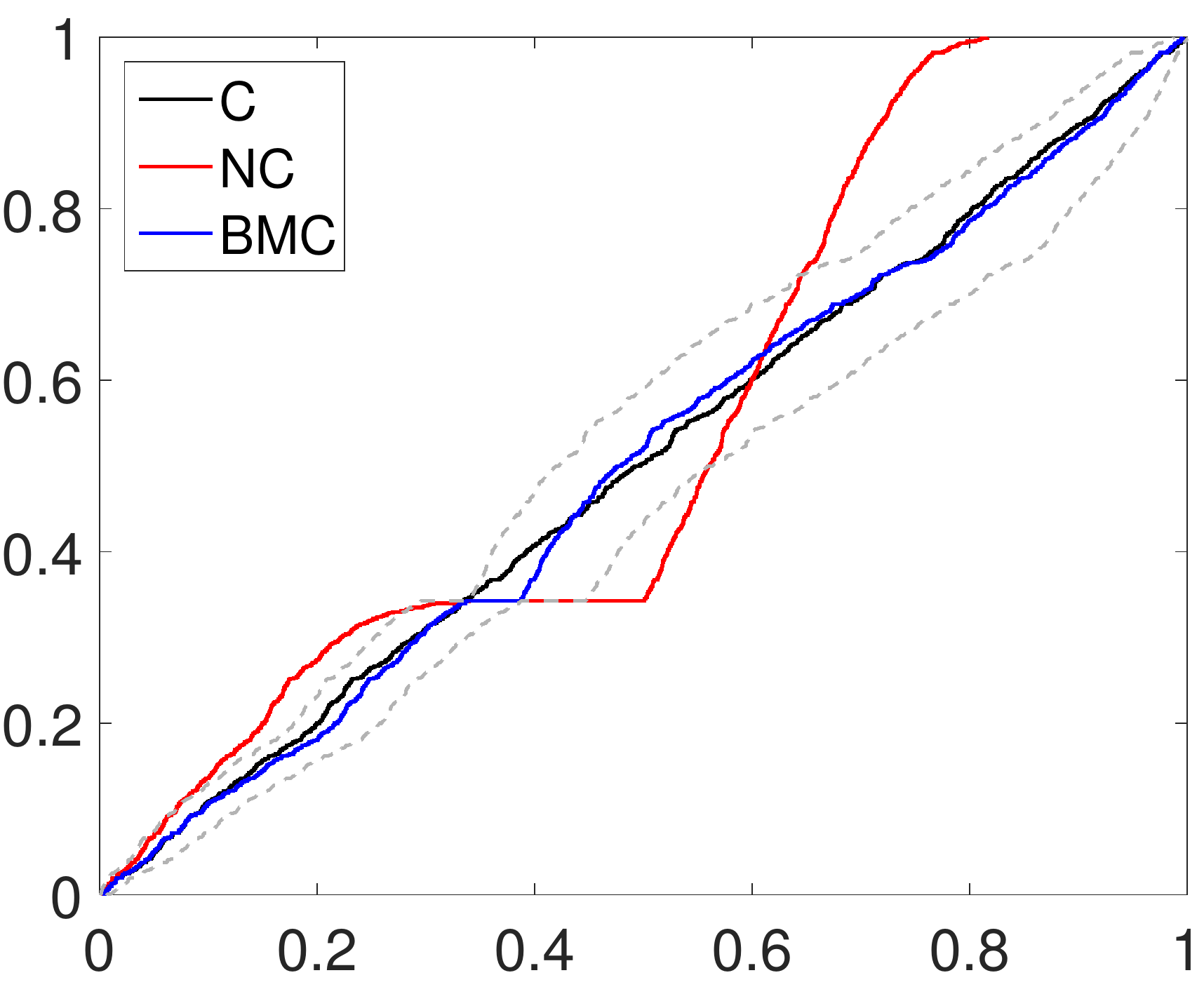}\\
\includegraphics[width=3.5cm]{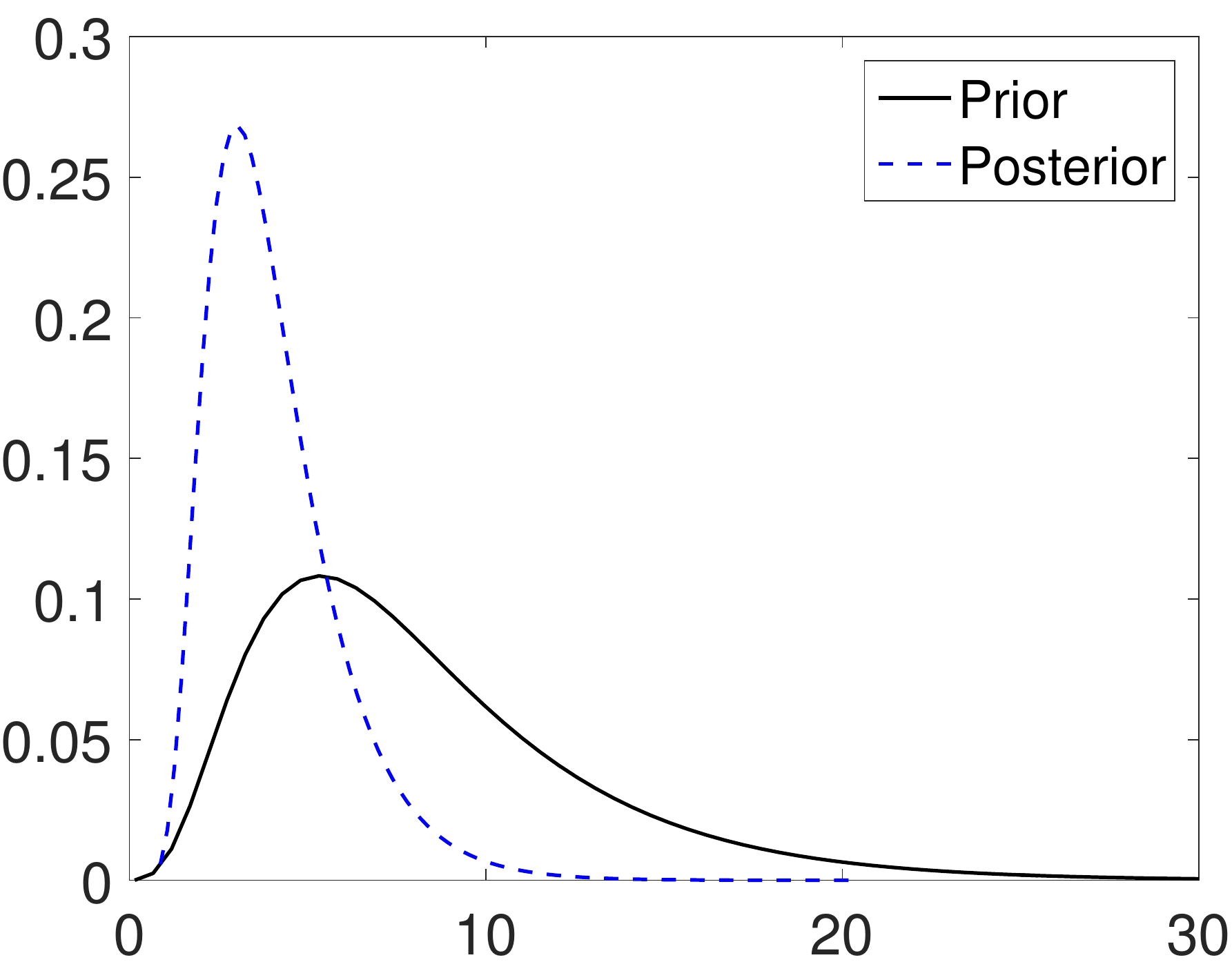} &
\includegraphics[width=3.5cm]{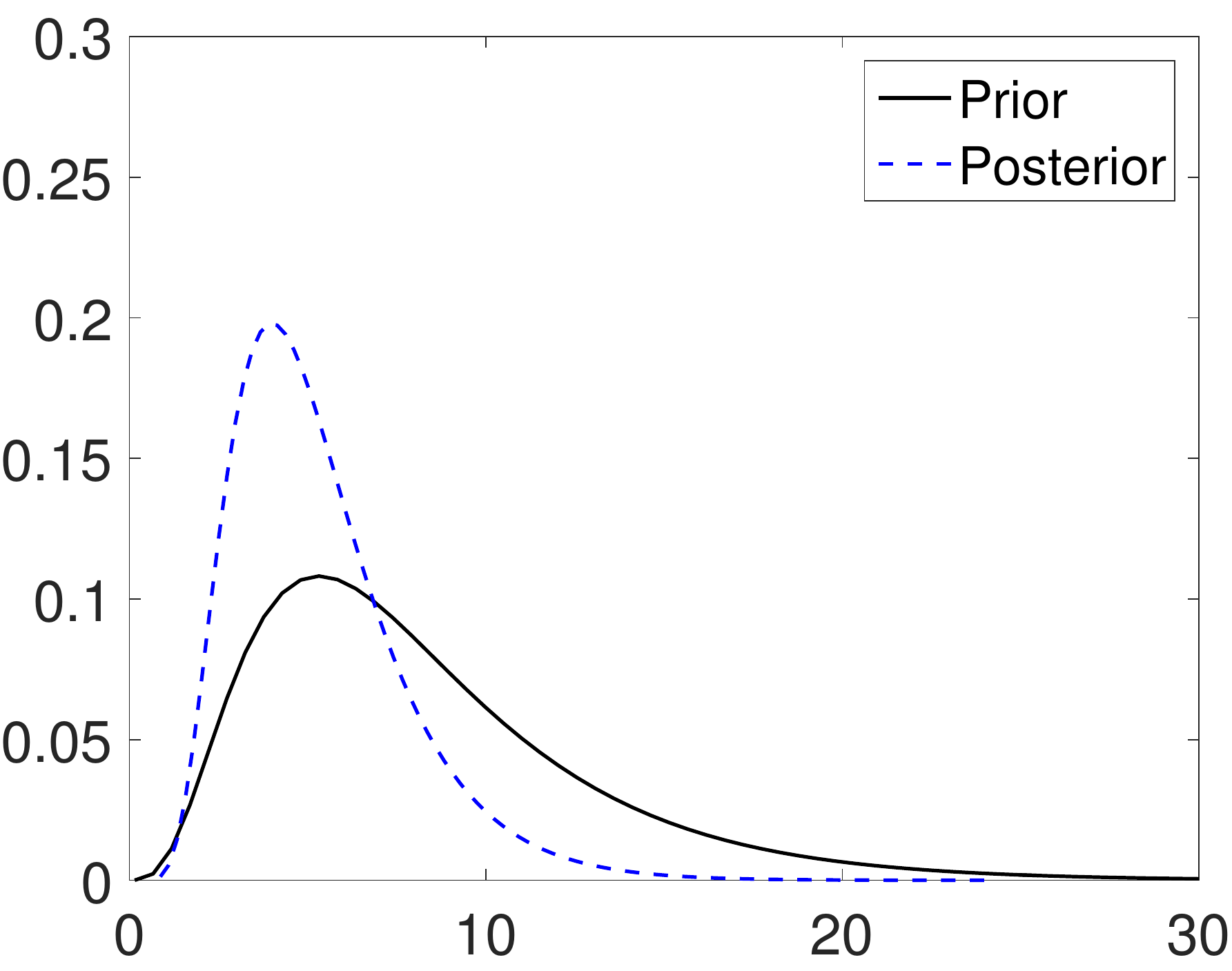} &
\includegraphics[width=3.5cm]{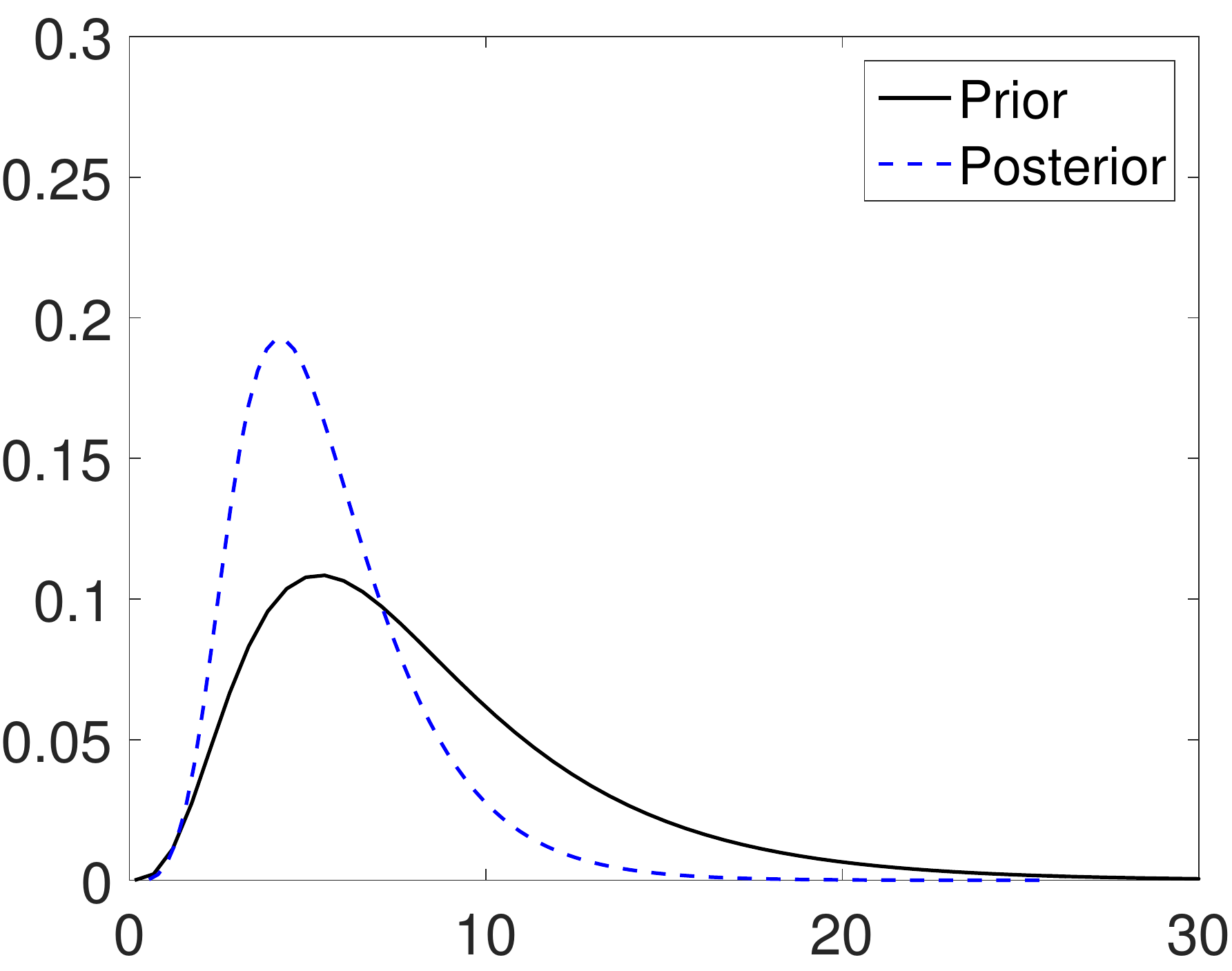} \\
\multicolumn{3}{c}{\scriptsize{(b) High persistence ($\phi=0.99$)}}\vspace{0pt}\\
\scriptsize{$\varrho=-1$}&\scriptsize{$\varrho=-3$}&\scriptsize{$\varrho=-5$}\vspace{0pt}\\
\includegraphics[width=3.5cm]{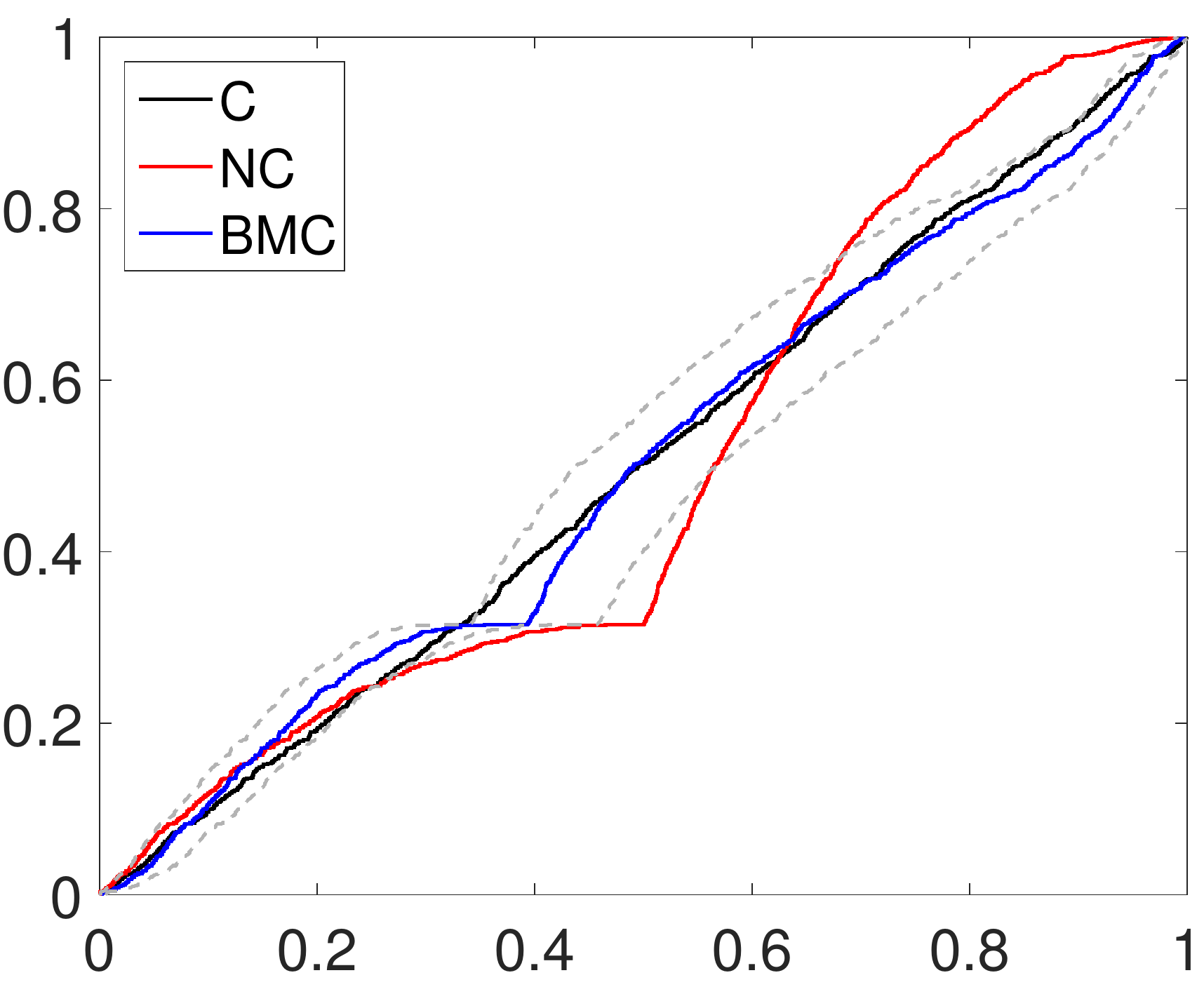} &
\includegraphics[width=3.5cm]{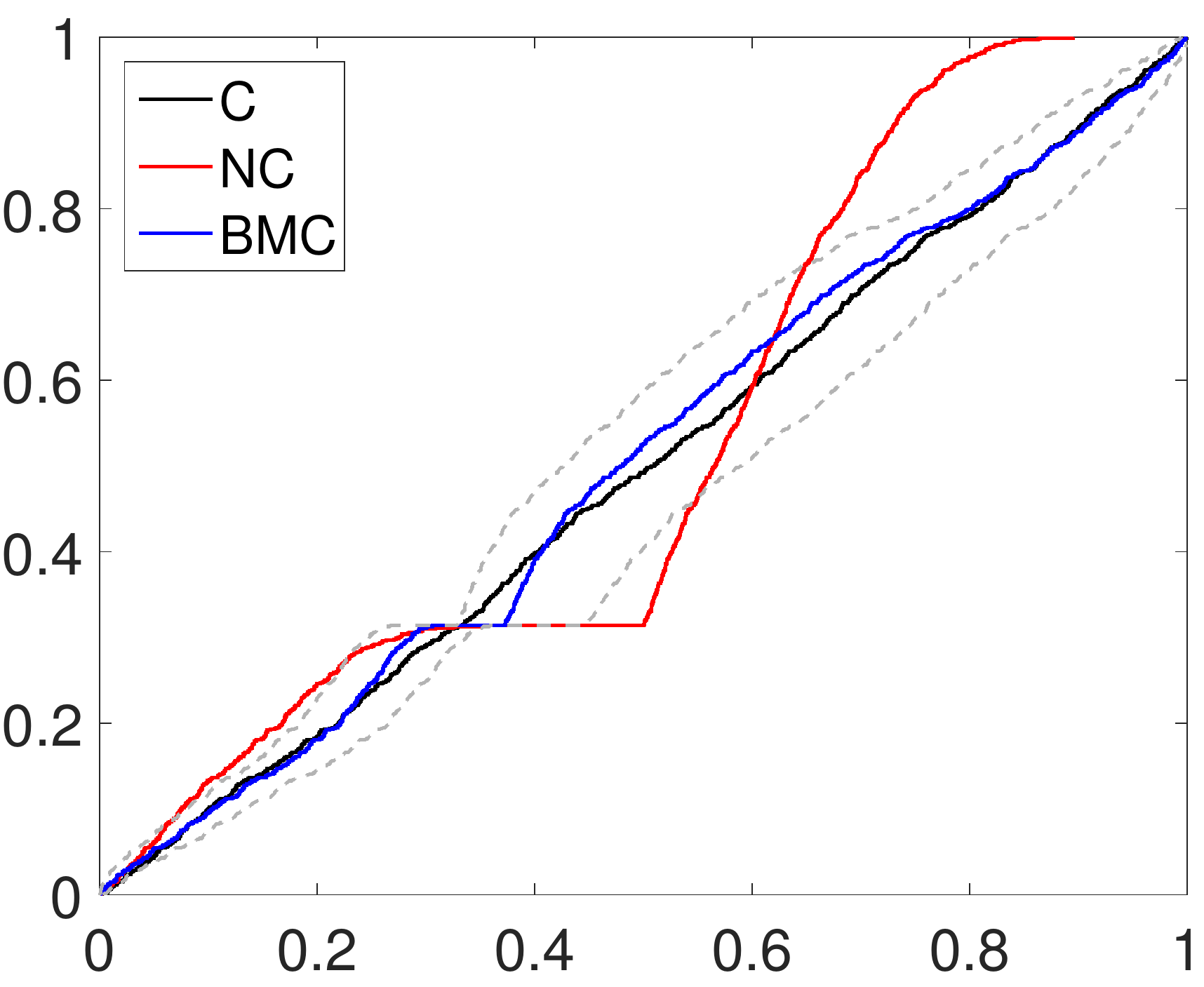} &
\includegraphics[width=3.5cm]{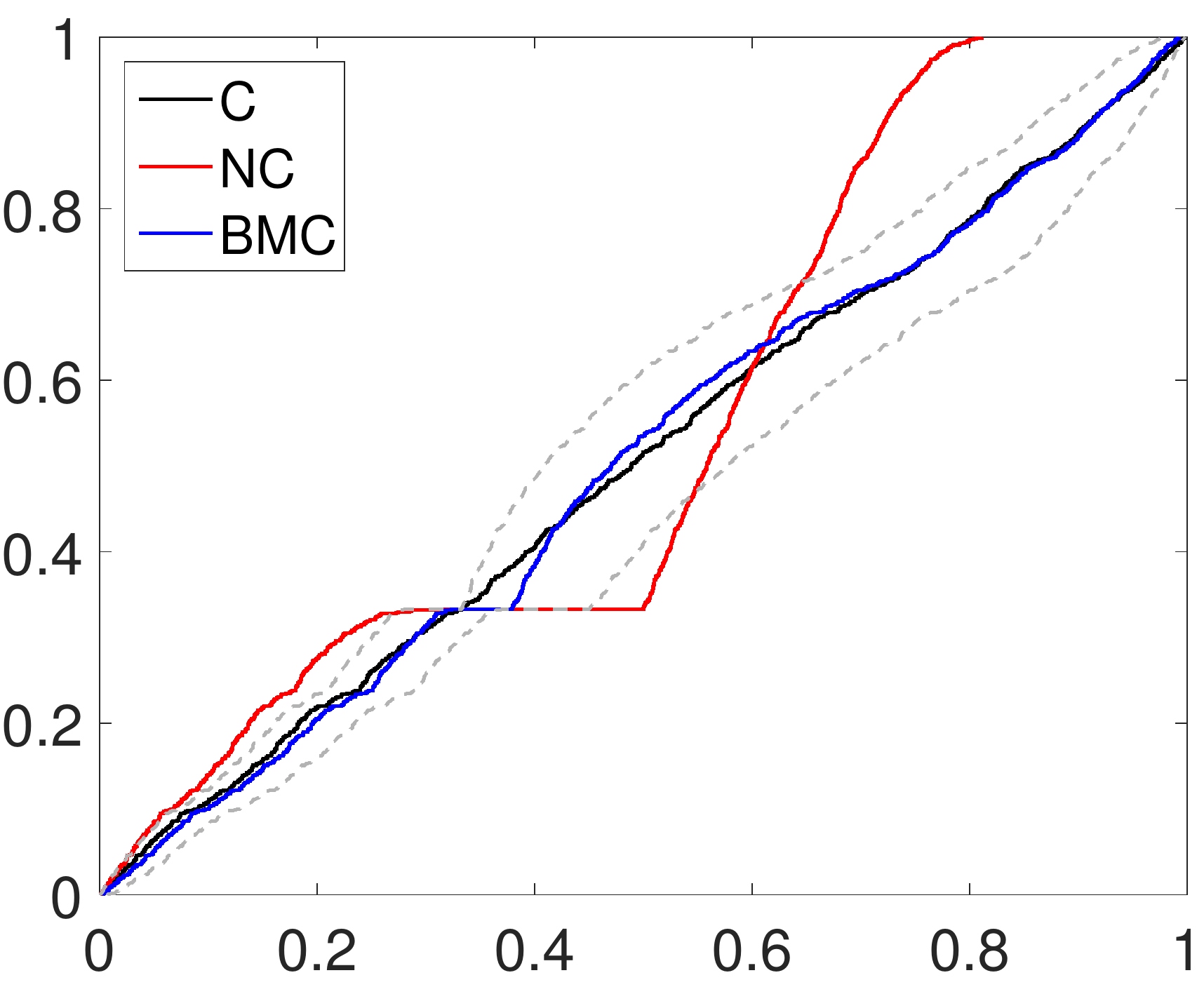} \\
\includegraphics[width=3.5cm]{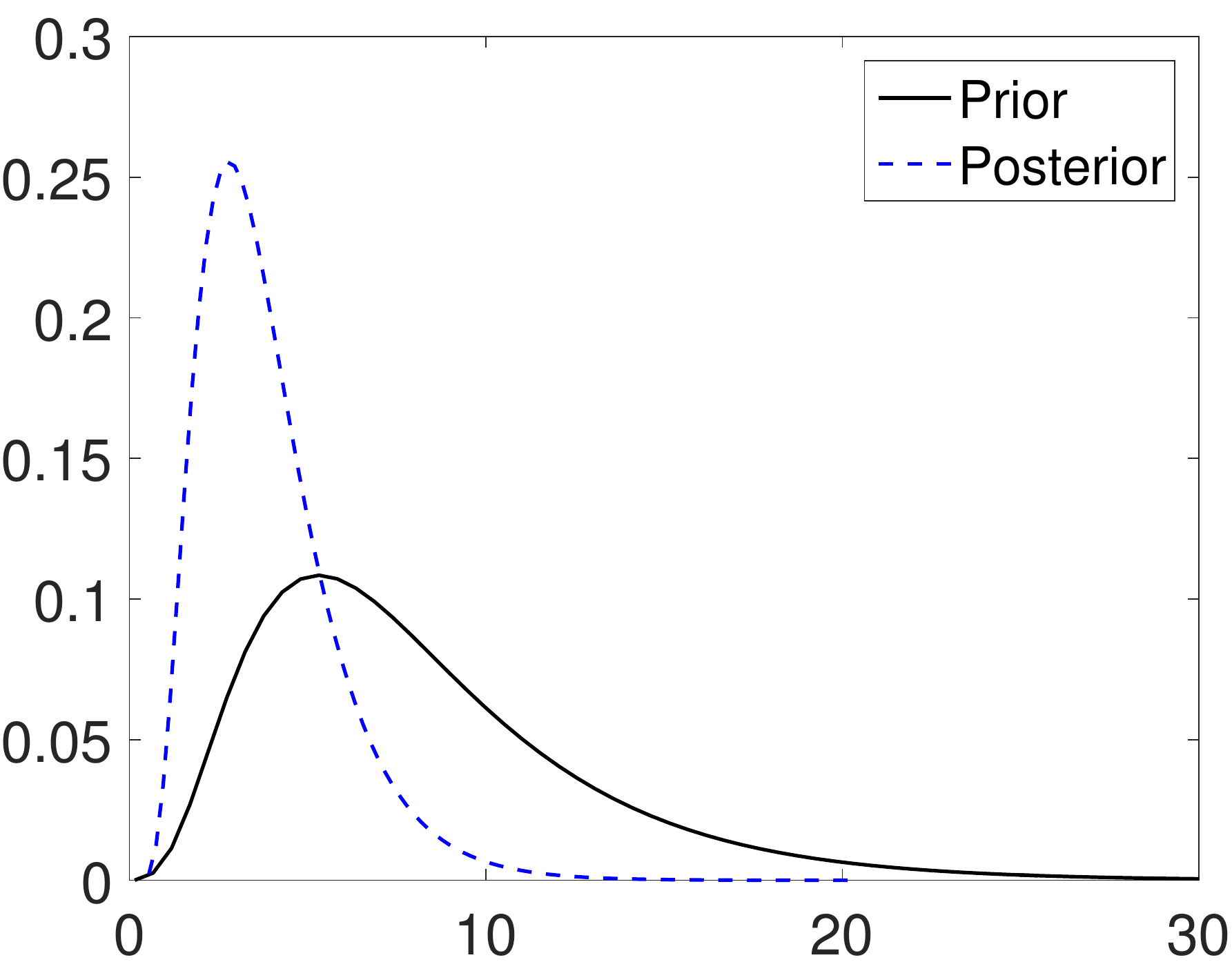} &
\includegraphics[width=3.5cm]{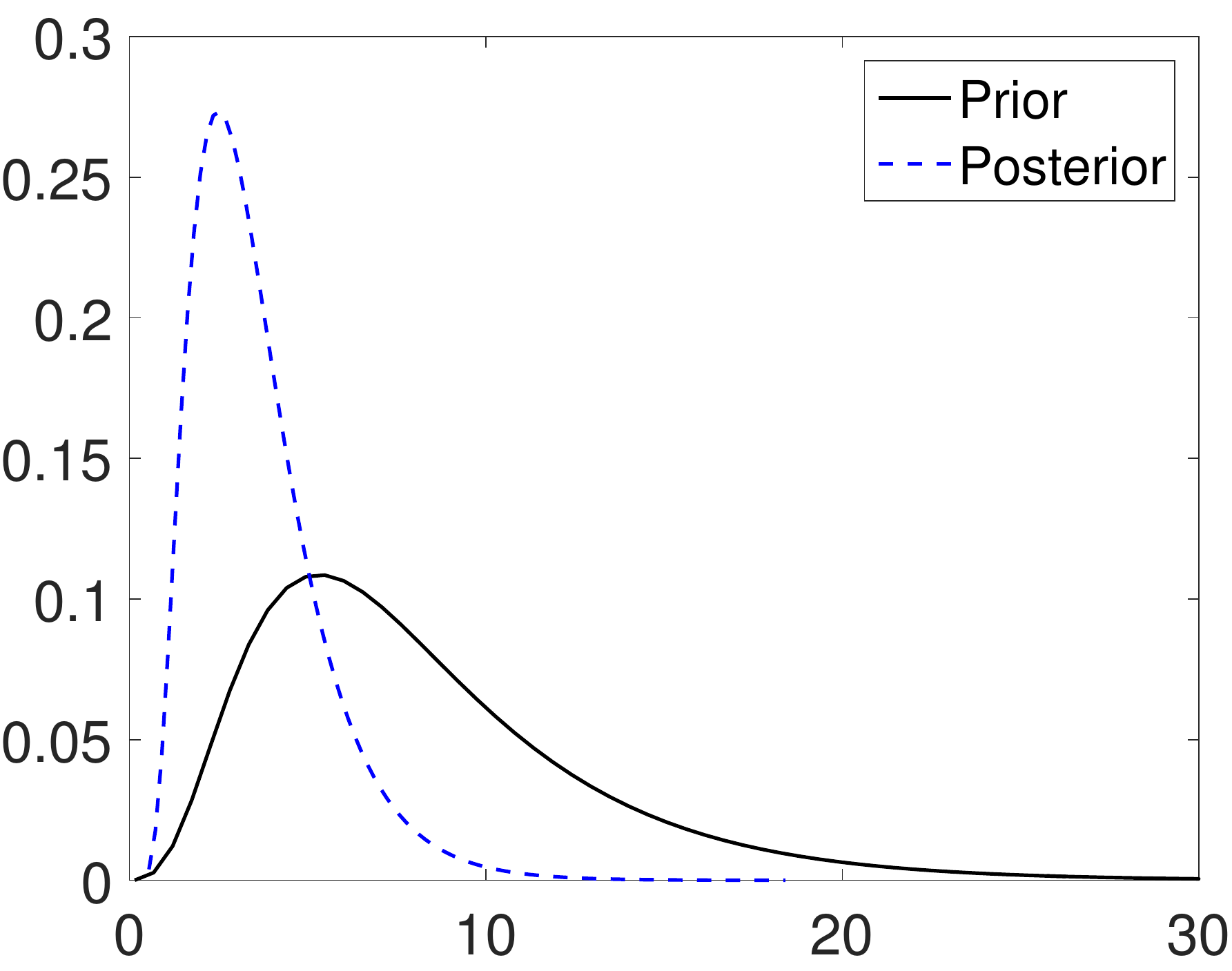} &
\includegraphics[width=3.5cm]{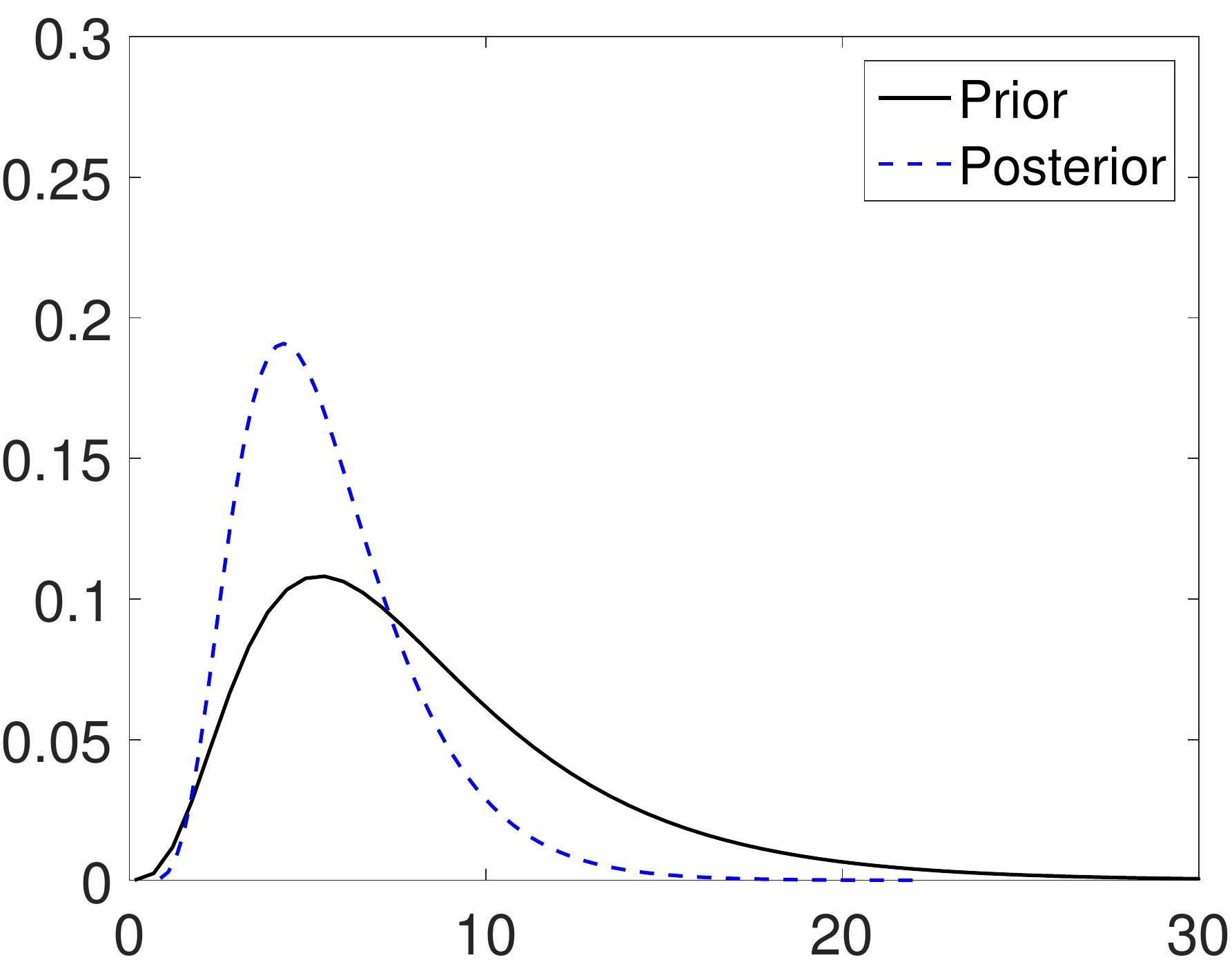} \\
\end{tabular}
\caption{Infinite beta mixture calibrated (BMC), calibrated (C) and
  non calibrated (NC) combinations for a dataset of 1,000 samples from the autoregressive mixture model mAR. Top: PITs cdf of the combination models C (black),
  NC (red) and BMC (blue) and BMC 99\% HPD (gray). Bottom: prior (black) and
  posterior (blue) number of components of the random BMC model.}  \label{fig:AR}
\end{center}
\end{figure}

\section{Empirical applications}\label{sec:empirical}

Then, we investigate relative
predictability accuracy for the out-of-sample period.  Precisely, as
in \cite{GewekeAmisano2010}, \cite{GewekeAmisano2011} and
\cite{MitKap13}, we evaluate the predictive densities using the
Kullback Leibler Information Criterion (KLIC) based measure, utilizing
the expected difference in the Logarithmic Scores of the candidate
forecast densities.  The KLIC computes the distance between the true
density of a random variable and some candidate density.  Even though
the true density is not known, for the comparison of two competing
models, it is sufficient to consider the average Logarithmic Score (AvLS).
The continuous ranked probability score (CRPS) at time $t$ for
model $k$ is defined as:
\begin{equation*}
\hbox{CRPS}_{t,k}
= \int \left(F_{t,k}(z)-\one_{[y_t,+\infty)}(z)\right)^2\text{d}z
\end{equation*}
where $F_{t,k}(y)$ and $f_{t,k}(y)$ are the predictive cdf and
pdf, respectively, for model $k$, conditional to the information 
available up to time $t-1$.

\subsection{Stock returns}  \label{sec:stocks}

The first application considers S\&P500 daily percent log returns data
from 3 January 1972 to 31 December 2008, an updated version of the
database used in studies such as \cite{GewekeAmisano2010},
\cite{GewekeAmisano2011} and \cite{MitKap13}.
\footnote{We thank James Mitchell for providing data.}  
We estimate a Normal GARCH(1,1) model
and a $t$-GARCH(1,1) model via maximum likelihood (ML) using rolling
samples of 1250 trading days (about five years) and produce one day
ahead density forecasts.  The first one day ahead forecast refers to
December 15, 1975.  The predictive densities are formed by
substituting the ML estimates for the unknown parameters.  We combine
the two predictive densities using a linear pooling with recursive log
score weights, see description in Section
\ref{sec:simulation}.\footnote{More flexible weighting schemes, such
as time-varying weights, can also be computed.}  Also in this section,
we refer to it as the non-calibrated model.  Furthermore we consider
our mixture of beta probability density functions (BMC) to achieve
better calibration properties.  We split the sample in two periods.
The data from December 15, 1975 to December 31, 2006 are used for an
in-sample calibration of our method to investigate its properties over
a long period. The data from January 3, 2007 to December 31, 2008 for
a total of 504 observations, are used for our out-of-sample analysis.\footnote{In
the linear pooling, we use equal weights for the first forecast for the value January 3, 2007; then
weights are updated summing recursively previous realized log scores of both models.
Using the forecasts from December 15, 1975 to December 31, 2006 as training sample for log score weights
reduces drastically forecast accuracy.}
Therefore, we extend evidence in \cite{GewekeAmisano2010} and
\cite{GewekeAmisano2011} by focusing on the period related to Great
Financial Crisis, with the first semester of 2007 considered a
tranquil period and the remaining part of the sample corresponding to
the most turbulent times.  In this experiment, we fit the calibration
over a moving window of 250 days and produce one-day ahead
forecasts.\footnote{We also investigated out-of-sample performance over
the period from 15 December 1976 to 16 December 2002, the same sample
applied in \cite{GewekeAmisano2010} and \cite{GewekeAmisano2011}.
Superior performance of our BMC are confirmed in this longer sample.}

\begin{figure}[t]
\begin{center}
\begin{tabular}{cc}
\includegraphics[width=4cm]{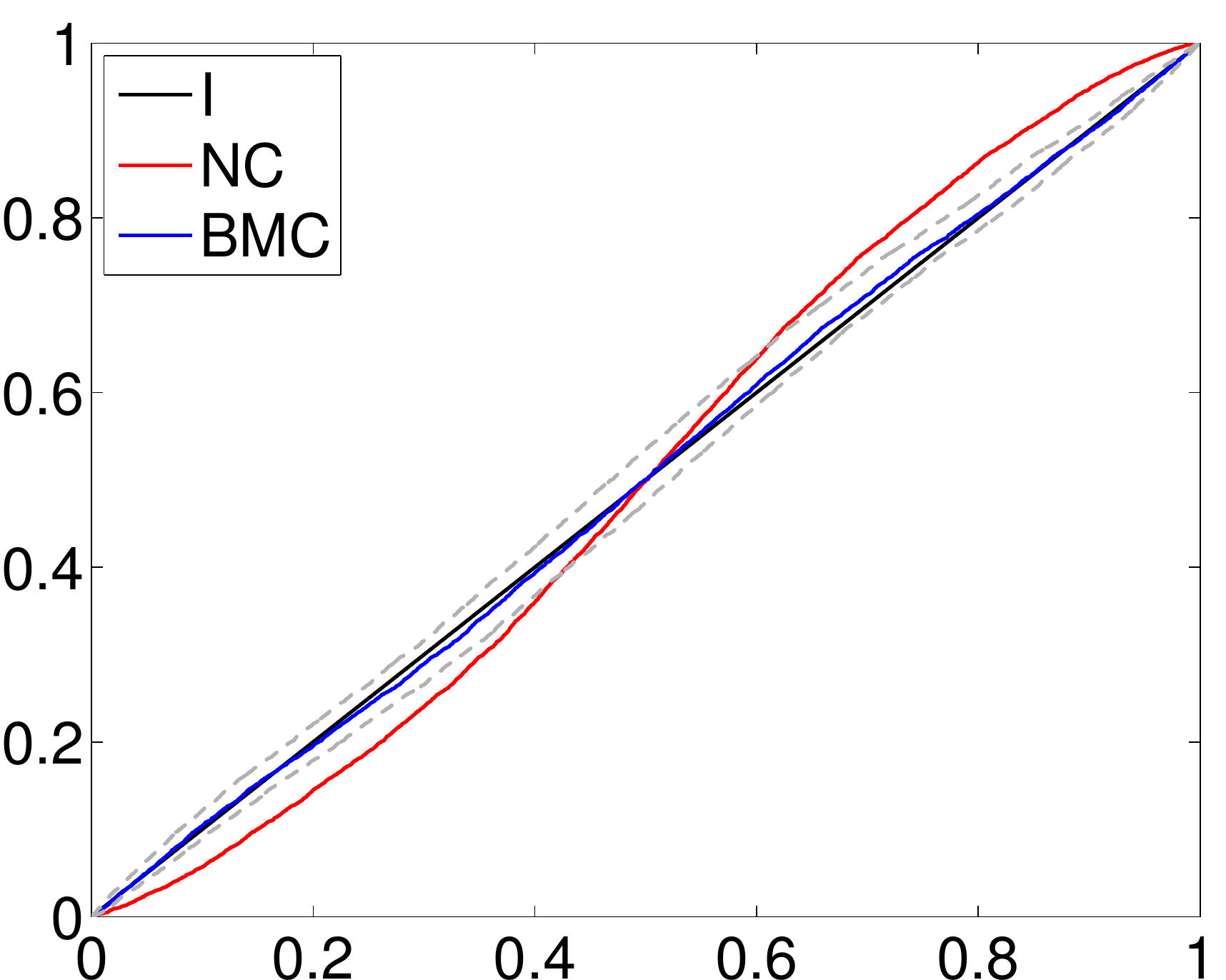}&
\includegraphics[width=4cm]{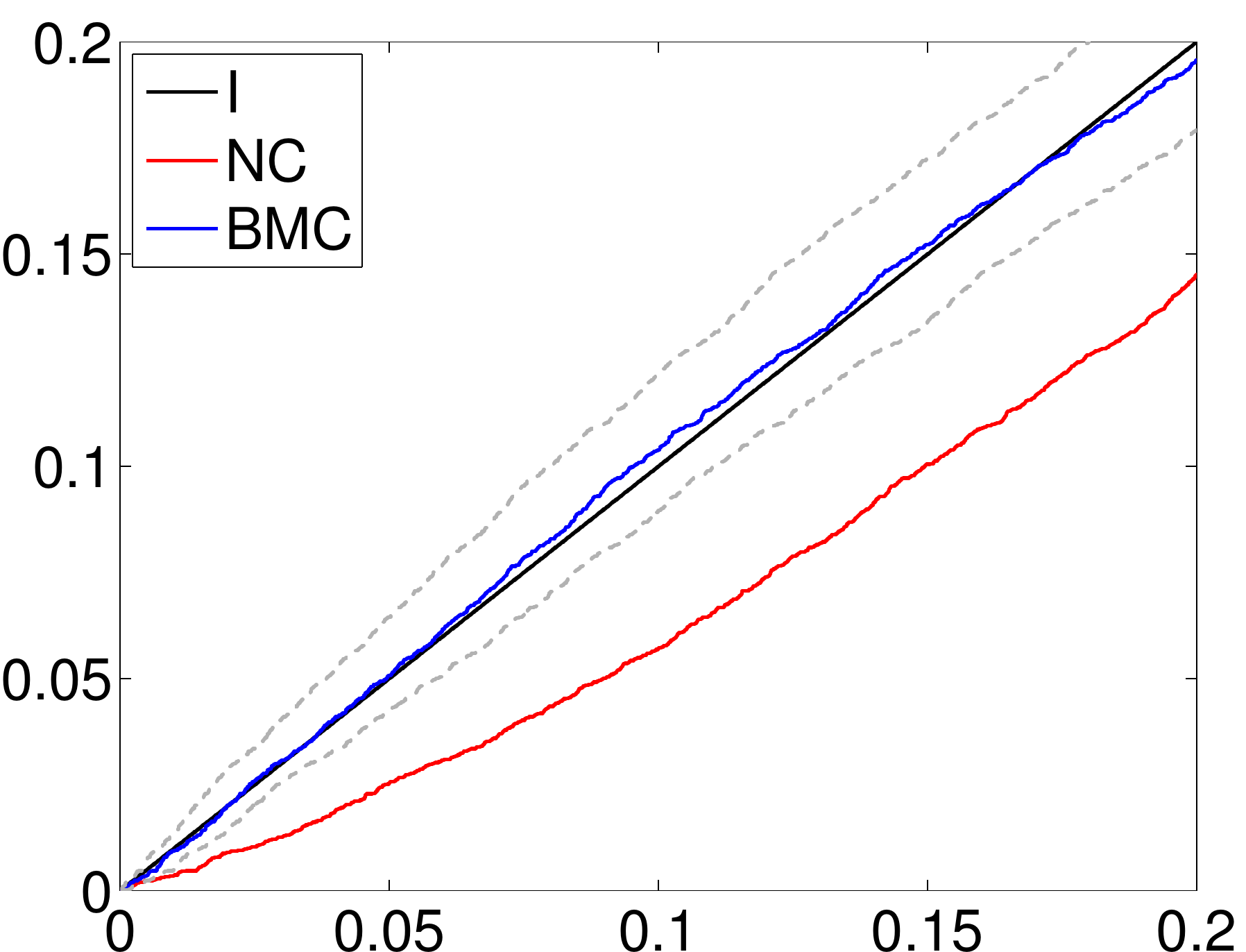}\\
\includegraphics[width=4cm]{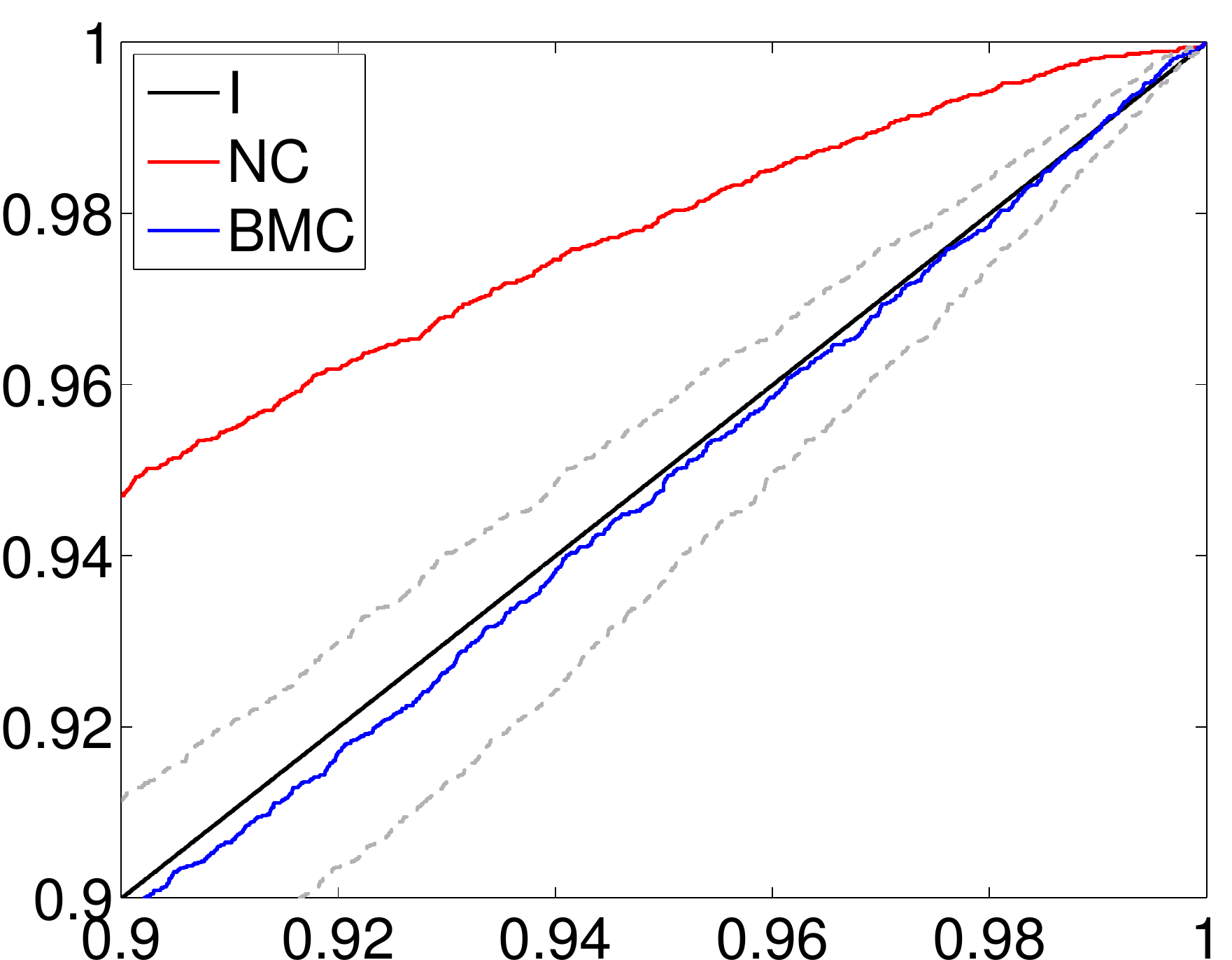}&
\includegraphics[width=4cm]{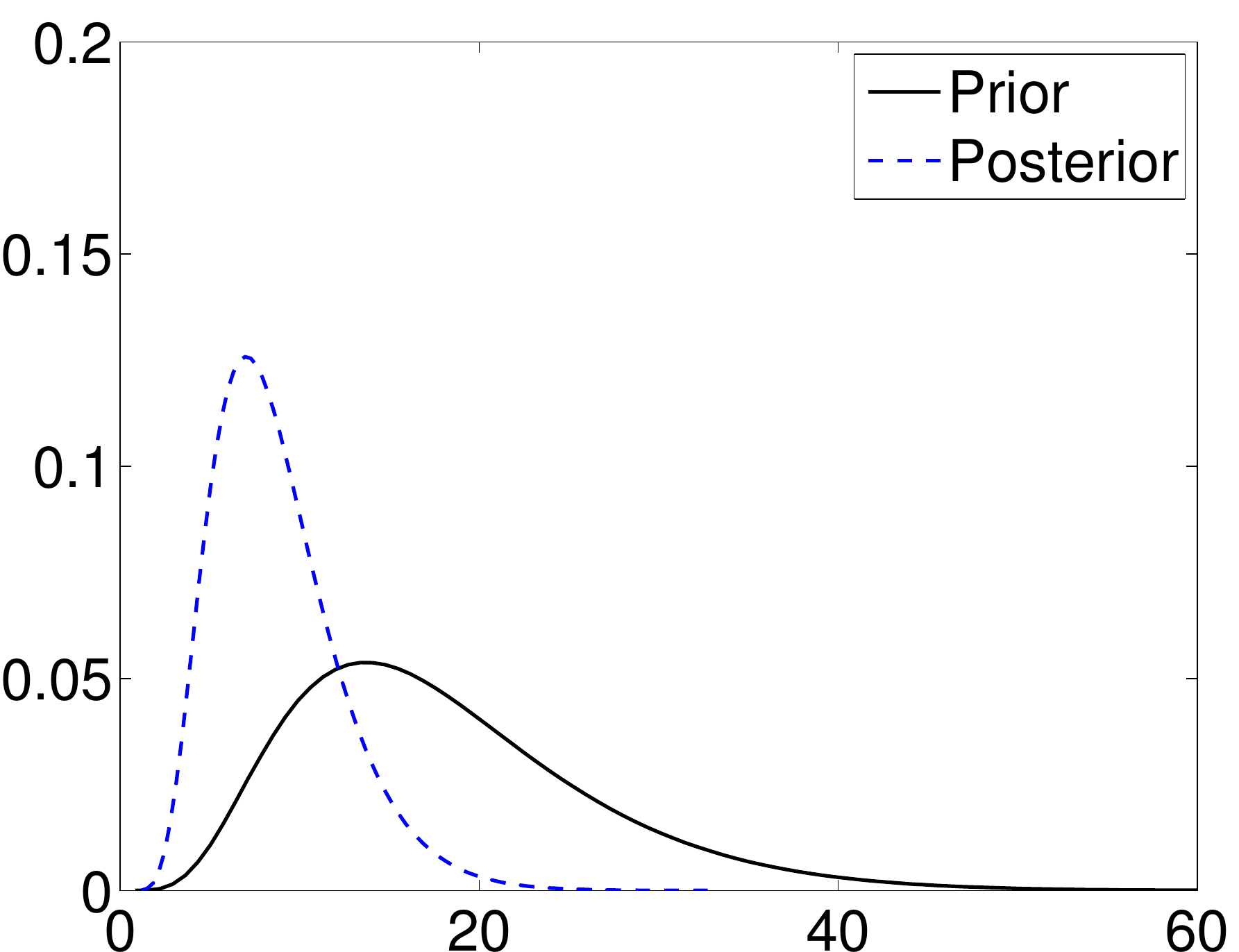}\\
\end{tabular}
\caption{Infinite beta mixture calibrated (BMC), calibrated (C) and
  non calibrated (NC) combinations for the S\&P500 daily percent log
  returns data. PITs cdf (lines from 1 to 3) of the idea model I C
  (black), combination models NC (red) and BMC (blue) and BMC 99\% HPD
  (gray). Prior (black) and posterior (blue) number of components of
  the random BMC model (bottom).}  \label{fig:inf.mix.real1}
\end{center}
\end{figure}

First, we compare the two individual models and the two combinations
in terms of calibration, measured as PIT, over the full sample period
(in-sample and out-of-sample periods).\footnote{Figures focusing only
on the out-of-sample period provide similar evidence and available
upon request from the authors.}  Figure \ref{fig:inf.mix.real1}
reports calibration results for the in-sample analysis.  The BMC line
is the closer to the 45 degree line, which represents the PIT plot for
the unknown true/ideal model.  This 45 degree line always belongs to
the confidence interval of the BMC.  NC is not calibrated for all
quantiles.  In particular, on the upper and lower tails, the NC
differs substantially from the BMC.  As in the simulation exercises
the posterior density for the numbers of beta components in BMC is
more concentrated than the prior.

\begin{table}[t]
\caption{Average log scores for the Normal GARCH model (Normal),
  $t$-GARCH model (Student-$t$), linear pooling (NC) and beta
  mixture calibration (BMC) over the sample period from January 1,
  2007 to December 31, 2008.}  \label{table:LS}
\begin{center}
\begin{tabular}{lrrrr}
  \hline\hline
   & Normal& Student-$t$ & NC& BMC\\
  AvLS & -2.311 & -1.650 & -1.827 & -1.450 \\
  \hline\hline
\end{tabular}
\end{center}
\end{table}

We now turn our attention to the out-of-sample analysis and the log
score results.  Table \ref{table:LS} reports average log scores for
the 4 forecasting methods.  BMC provides the highest score.  Figure
\ref{fig:cum.LS} in Supplementary Materials \ref{appendix_real} shows that after 
the initial weeks of January 2007
where models perform similarly, BMC outperforms the other three
approaches.  The $t$-model provides higher scores than the
normal one and the non-calibrated combination.  The accuracy of the
normal Garch model is very low during our OOS period, in particular on
the extreme events, which results in deteriorating NC performance
after August 2007, the beginning of the turbulent times.  Just
selecting the $t$-GARCH version or, even better, applying local
weights as in our BMC improves accuracy.  Figure \ref{fig:fanchart} in 
\ref{appendix_real}
shows the BMC-based predictive density.

\subsection{Wind speed}  \label{sec:wind}

The second empirical example considers the dataset used in
\cite{LerTho2013}.
\footnote{We thank Sebastian Lerch and Thordis Thorarinsdottir for providing data and forecasts.}  
It consists of 50
ensemble member predictions \citep{MolEt1996} of wind speed at
ten meters above the ground, obtained from the global ensemble
prediction system of the European Centre for Medium-Range Weather
Forecasts (ECMWF).  We restrict our attention to the ensemble
predictions for the maximum wind speed at the station at Frankfurt
airport.  The station ensemble forecasts are obtained by bilinear
interpolation of the gridded model output.

\begin{figure}[t]
\begin{center}
\begin{tabular}{cc}
\includegraphics[width=4cm]{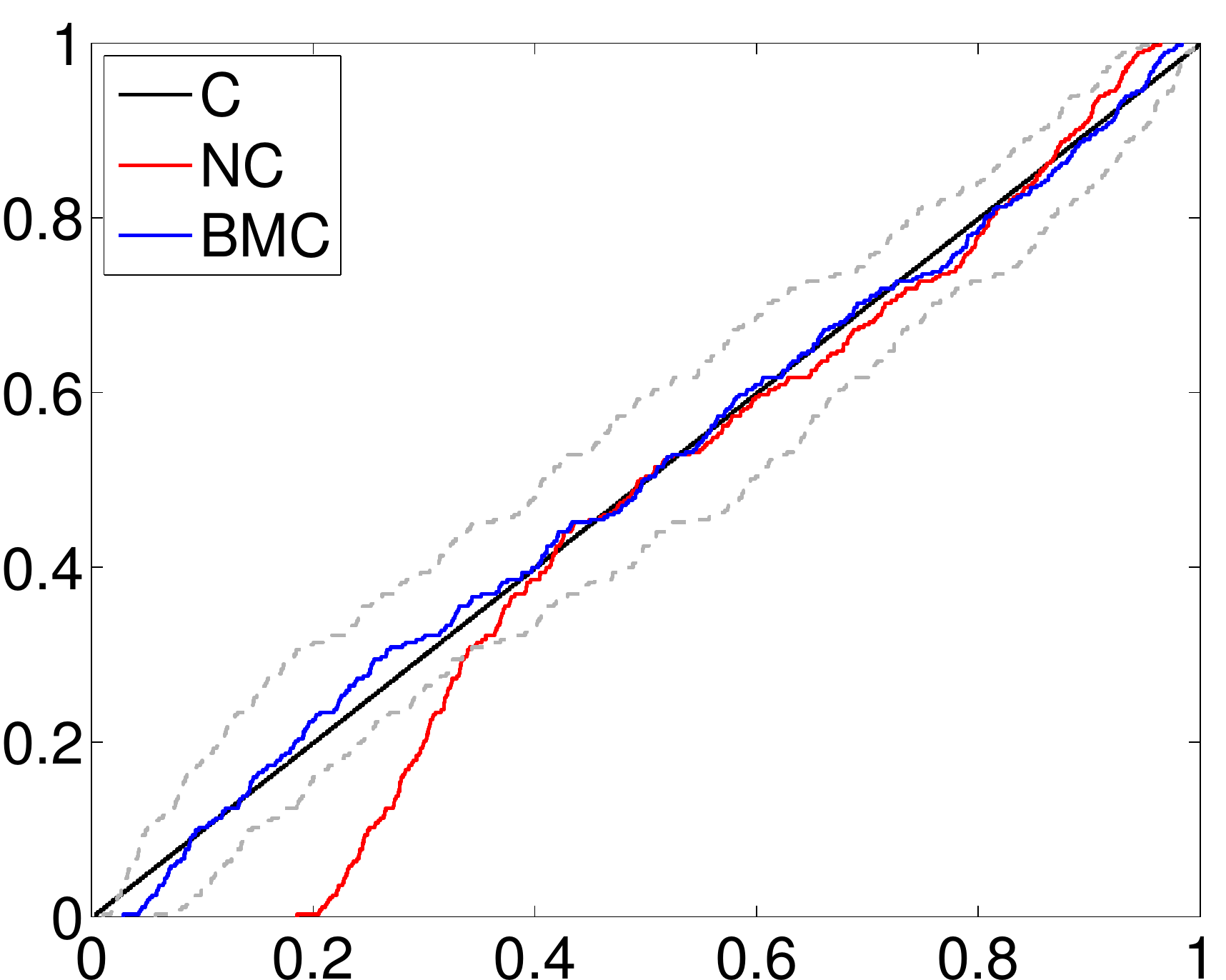}&
\includegraphics[width=4cm]{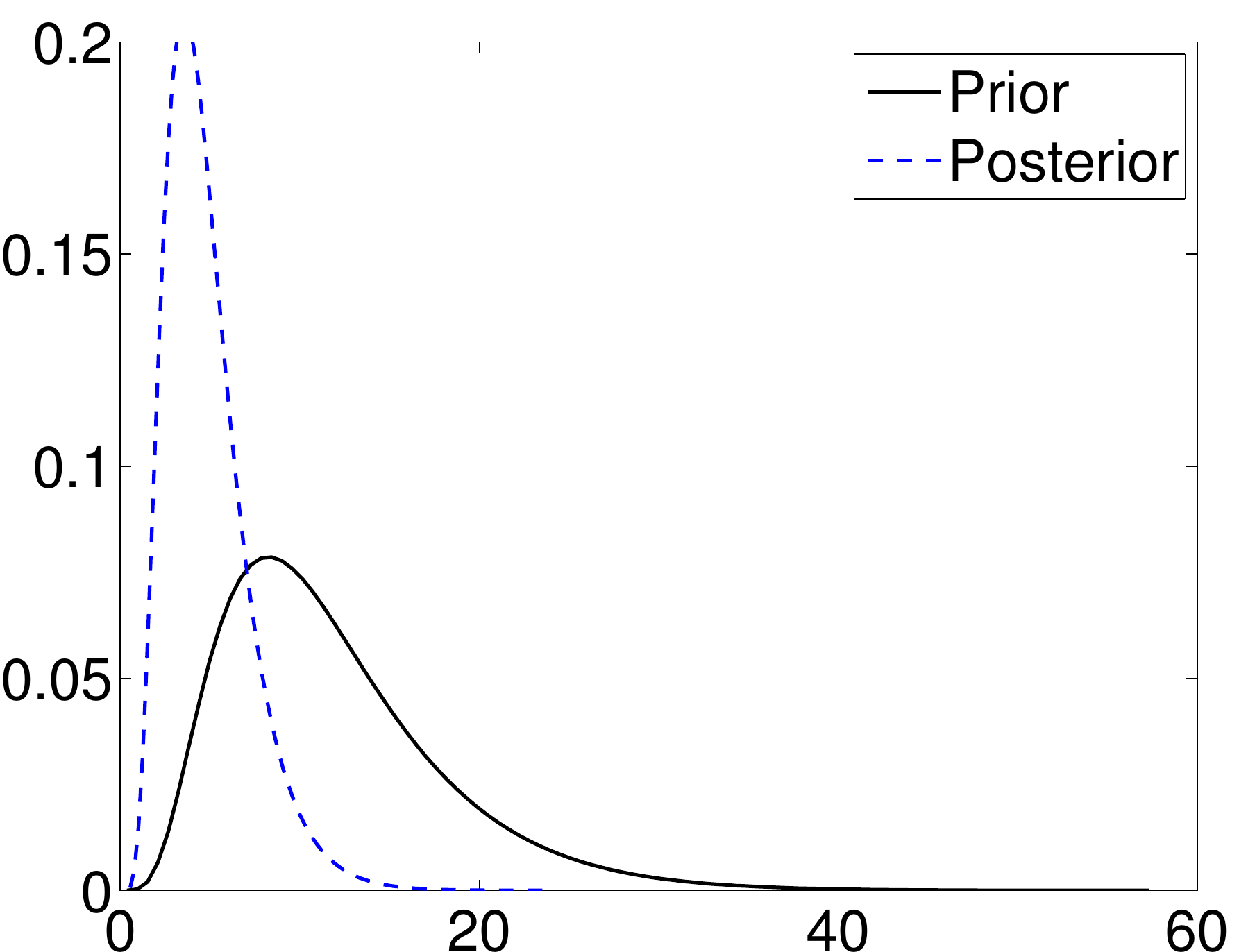}\\
\end{tabular}
\caption{Infinite beta mixture calibrated (BMC), calibrated (C) and
  non calibrated (NC) combinations for the maximum wind speed
  at the station at Frankfurt airport. PITs (left top) of the
  combination models C (black), NC (red) and BMC (blue) and BMC 99\%
  HPD (gray). Prior (black) and posterior (blue) number of components
  of the random BMC model (right top).}  \label{fig:inf.mix.real2}
\end{center}
\end{figure}

We consider the ECMWF ensemble run initialized at 00 hours UTC with a
horizontal resolution of about 33 km, a temporal resolution of 3-6
hours and lead times of 3, 6 and 24 hours.  To obtain predictions of
daily maximum wind speed, we take the daily maximum of each ensemble
member at the Frankfurt location.  One day ahead forecasts are given
by the maximum over lead times.  The observations are hourly
observations of 10-minute average wind speed which is measured over
the 10 minutes before the hour.  To obtain daily maximum wind speed,
from 1 May 2010 to 30 of April 2011, we take the maximum over the 24
hours corresponding to the time frame of the ensemble forecast.

The results presented below are based on a verification period from 9
August 2010 to 30 April 2011, consisting of 263 individual forecast
cases.  Additionally, we use data from 1 February 2010 to 30 April
2011 to obtain training periods of equal length for all days in the
verification period and for model selection purposes and forecasts
from May 1, 2010 to 8 August 2010 (100 observations) as initial
training period for the combination methods.

Following \cite{LerTho2013}, we consider two competing models: the
truncated normal distribution (TN) and the generalized extreme value
distribution (GEV).  The TN model is estimated by minimizing the CRPS.
The GEV model is estimated by maximum likelihood estimation.

First, we report in-sample results over the sample from May 1, 2010 to
30 April 2011.  Then, we implement an out-of-sample exercise for the
period from August 9, 2010 to 30 April 2011.  We report both log score
and CRPS results.

\begin{table}[t]
\caption{Average log scores (AvLS) and average CRPS (AvCRPS) for the
  truncated normal (TN), the generalized extreme value
  distribution (GEV), linear pooling (NC) and beta mixture calibration
  (BMC) over the sample period from August 9, 2010 to 30 April 2011.}
  \label{tab:wind}
\begin{center}
\begin{tabular}{lrrrr}
  \hline\hline
   & TN& GEV& NC& BMC\\
  AvLS &     -2.812 & -2.904 & -2.433 & -1.997 \\
  AvCRPS & 1.346 & 1.802 & 1.314 & 0.982 \\
 \hline\hline
\end{tabular}
\end{center}
\end{table}

Figure \ref{fig:inf.mix.real2} reports in-sample calibration
results. The BMC line is close to the ideal model and always includes
the 45 degree line in the confidence interval.  The NC performs poorly
for small quantiles.  The posterior density for the numbers of beta
components in BMC is more concentrated than the prior, confirming also
in this exercise that data are informative on the number of mixture
components.  When focusing on the OOS exercise, the BMC predictive
distribution predicts accurately and provides the highest average LS
and the lowest average CRPS in Table \ref{tab:wind}.  Gains are
substantial, as Figure \ref{fig:wind} in Supplementary Materials \ref{appendix_real} shows. 
The distribution is often multimodal, see Figure \ref{fig:fanchartwind} in Supplementary Materials \ref{appendix_real}, with the highest mode at low values of wind speed, and a second mode concentrated around values of wind speed greater than 5 meters per second.  The
truncated normal has too many values in the lower and upper tails; the
GEV is too skewed to the upper tail, thus predicting on average too
high values.  The NC is also upper biased by the GEV.  The BMC shifts
the probability mass of the predictive distribution from the upper
tail to the central part and the left tail, thus producing better
calibrated forecasts.

\section{Discussion}  \label{sec:discussion}

We propose a Bayesian approach to predictive density combination and calibration. We build on the predictive density calibration framework of \cite{RanGne10} and \cite{GneRan13} and propose infinite beta mixtures as prior distributions for the calibration function.  We rely upon the flexibility of the infinite beta mixtures to achieve a continuous deformation of the linear density combination.  Each component of the
beta mixture calibrates different parts of the predictive cdf and
uses component-specific combination weights.  Thanks to these
features, our calibration model can also be viewed as a mixture of
local combination models.  Furthermore, our Bayesian framework allows
for including various sources of uncertainty in the predictive density. 

We provide suitable sufficient conditions for weak posterior consistency of our probabilistic calibration. The results imply uniformity of the PITs in the limit, when the number of observations goes to infinity, under both assumptions of i.i.d. and Markovian observations. 

We discuss finite sample properties of our methodology in simulation exercises, showing how the infinite beta components are adequate in applications with multimodal densities, skewness and heavy tails.  In empirical applications to stock returns and wind speed data, our approach provides well-calibrated and accurate density forecasts.

\section*{Acknowledgments}
The authors would like to thank two anonymous referees for their comments and careful reading of the paper.


\newpage
\setcounter{page}{1}

\appendix
\begin{center}

{\Large{Supplementary Materials for:  "Bayesian Nonparametric Calibration and 
       Combination of Predictive Distributions"}\\
      by F.~Bassetti, R.~Casarin, F.~Ravazzolo.
      }

\end{center}
%
%
%
%

\renewcommand{\thesection}{S\arabic{section}}
\renewcommand{\theequation}{S\arabic{equation}}
\renewcommand{\thepage}{S\arabic{page}}

\section{Computational details}  \label{sec:details}

\subsection{Gibbs sampler for the finite beta mixture model}  \label{sec:details.finite}

\par\noindent \textit{1. Full conditional distribution of $D$.}
Samples from the full conditional of $D$ given $(\btheta,Y)$ are
obtained by drawing sequentially over $t$, vectors
$d_t = (d_{1t}, \ldots, d_{Kt})$ from multinomial distributions with
probabilities
\begin{equation*}
\pi(d_{kt}=1|\btheta,Y) \propto
w_kb^{*}_{\mu_k,\nu_k}\left(H(y_t|\bomega_k)\right)h(y_t|\bomega_k)
\end{equation*}
for $k=1, \ldots, K$.

\bigskip
\par\noindent \textit{2. Full conditional distribution of $(\bmu,\bnu)$.}
Samples from the full conditional of $(\bmu,\bnu)$ given
$(\bw,\bomega,D,Y)$ are obtained in a sequence of Metropolis-Hastings (MH) steps on
a transformed space.  Following \cite{BouZioMon06}, we let: $\mu_k=1/(1+\exp\{-\gamma_k\})$ and
$\nu_k=\exp\{\lambda_k\}$, $k=1,\ldots,K$ and draw iteratively
from
\begin{eqnarray*}
&&\pi(\gamma_k,\lambda_k|\bw,\bomega,D,Y)\propto\\
&&\prod_{t\in\cD_k}b^{*}_{\mu_k,\nu_k}
\left(H(y_t|\bomega_k)\right)\mu_k^{\xi_{1\mu}-1}(1-\mu_k)^{\xi_{2\mu}-1}
\nu_k^{\xi_{1\nu}-1}\exp\{-\xi_{2\nu}\nu_k\}J(\mu_k,\nu_k),
\end{eqnarray*}
where $J(\mu_k,\nu_k) = \exp\{-\gamma_k-\lambda_k\}
(1+\exp\{-\gamma_k\})^{-2}(1+\exp\{-\lambda_k\})^{-2}$ is the Jacobian
of the transform.  In the MH we use a Gaussian random walk proposal
distribution with covariance matrix $\Sigma=0.05I_2$, which yields
acceptance rates of about 0.4.

\bigskip
\par\noindent \textit{3. Full conditional distribution of $\bomega$.}
Samples from the full conditional of $\bomega$ given
$(\bw,\bmu,\bnu,D,Y)$ are obtained by drawing iteratively
$\bomega_k$, $k=1,\ldots,K$.  At each step we apply a MH with the
prior distribution as proposal.  The acceptance probability of each MH
step is:
\begin{equation*}
\min
\left\{\prod_{t\in\cD_k}\frac{b^{*}_{\mu_k,\nu_k}(H(y_t|\bomega^*))
h(y_t|\bomega^*)}{b^{*}_{\mu_k,\nu_k}(H(y_t|\bomega_k))h(y_t|\bomega_k)},1\right\},
\end{equation*}
where $\bomega^*\sim\Di(\xi_{\omega},\ldots,\xi_{\omega})$.

\bigskip
\par\noindent \textit{4. Full conditional distribution of $\bw$.}
Samples from the full conditional of $\bw$ given
$(\bomega,\bmu,\bnu,D,Y)$ are obtained by exploiting the conjugacy of
the prior distribution, in that
\begin{equation*}
\pi(w_{1},\ldots,w_k|\bomega,\bmu,\bnu,D,Y)\propto\Di(\xi_{w}+T_{1},\ldots,\xi_{w}+T_k).
\end{equation*}

When $K = 1$, we replace the single-move Gibbs sampler with a global MH sampler with
target distribution obtained by applying to the joint posterior
\begin{eqnarray*}
&& \pi(\mu,\nu,\omega|Y) \propto
   \prod_{t=1}^{T}b^{*}_{\mu,\nu}\left(H(y_t|\omega)\right)
   h(y_t|\omega)\mu^{\xi_{1\mu}-1}(1-\mu)^{\xi_{2\mu}-1} \\
&& \times \; \nu^{\xi_{1\nu}-1} \exp\{-\xi_{2\nu}\nu\}
   \omega^{\xi_{1\omega}-1}(1-\omega)^{\xi_{2\omega}-1}
\end{eqnarray*}
where $Y=(y_{1},\ldots,y_t)$, the change of variable
$\mu=1/(1+\exp\{-\theta_1\})$, $\nu=\exp\{\theta_{2}\}$ and
$\omega=1/(1+\exp\{-\theta_{3}\}$. We consider a random walk proposal
on the transformed parameter space accounting for the Jacobian of the
transformation, that is, $J =
\exp\{\theta_2-\theta_1-\theta_3\}(1+\exp\{-\theta_1\})^{-2}(1+\exp\{-\theta_3\})^{-2}$. Setting the covariance matrix to $\Sigma=\hbox{diag}\{0.1,0.05,0.1\}$, we achieve acceptance rates of about 0.4.

\subsection{Gibbs sampler for the infinite beta mixture model}  \label{sec:details.infinite}
Let $\cD_k=\{t=1,\ldots,T|d_t=k\}$ denote the set of indexes of the observations allocated to the $k$-th component of the mixture and
with $\cD=\{k|\cD_k\neq\emptyset\}$ the set of indexes of the
non-empty mixture components.  Then the cardinality of $\mathcal{D}$, $Card(\cD)$,
gives the number of mixture components and $D^*=\sup\cD$ can be interpreted as
the number of stick-breaking components used in the mixture.  As noted by \cite{walker2011}, the sampling of an infinite numbers of $\Theta$ and
$V$ is not necessary, since only the elements in the full conditional pdfs of $D$ are needed.  The maximum number of atoms and stick-breaking components to sample is
$N^*=\max\{t=1,\ldots,T|N^*_t\}$, where $N^*_t$ is the
smallest integer such that $\sum_{j=1}^{N^*_t}w_{j}>1-u_t$.  Thus
sampling from the joint $\pi(V,U|\Theta,D,Y,\psi)$ is obtained by
splitting $V=(V^*,V^{**})$, where $V^*=(v_{1},\ldots,v_{D^*})$
and $V^{**}=(v_{D^{*}+1},\ldots,v_{N^*})$, and by further collapsing the
Gibbs, that is by sampling from $\pi(V^*|\Theta,D,Y,\psi)$ and
$\pi(U|V^*,\Theta,D,Y,\psi)$ and then from
$\pi(V^{**}|V^*,U,\Theta,D,Y,\psi)$.

\par\noindent \textit{1. Full conditional distribution of $V^*$.}
Sampling from the full conditional of $V^*$ given
$(D,\Theta,Y,\psi)$ is obtained by drawing $v_k$, with $k\leq D^*$,
from the full conditionals
\begin{equation*}
\pi(v_k|D,Y)\propto (1-v_k)^{\psi+b_k-1} v_k^{a_k},
\end{equation*}
that is, the PDF of a $\Beta(a_k+1,b_k+\psi)$ with
$a_k=\sum_{t=1}^{T}\one_{\{d_t=k\}}$ and
$b_k=\sum_{t=1}^{T}\one_{\{d_t>k\}}$.

\bigskip
\par\noindent \textit{2. Full conditional distribution of $U$.}
Samples from the full conditional of $U$ given
$(V,D,\Theta,Y,\psi)$ is obtained by simulating from the uniform
\begin{equation*}
\pi(u_t|V,D,Y)\propto \frac{1}{w_{d_t}}\one_{\{u_t<w_{d_t}\}}
\end{equation*}
for $t=1,\ldots,T$.

\bigskip
\par\noindent \textit{3. Full conditional distribution of $V^{**}$.}
Sampling from the full conditional of $V^{**}$ given
$(V^*,U,D,\Theta,Y,\psi)$ is obtained by sampling from
\begin{equation*}
\pi(v_k|U,D,Y)\propto (1-v_k)^{\psi-1},
\end{equation*}
that is, the PDF of a $\Beta(1,\psi)$, with $k=D^*+1,\ldots,N^*$.

\bigskip
\par\noindent \textit{4. Full conditional distribution of $\Theta$.}
Sample the elements $k$, $k=1,\ldots,N^*$, of $\Theta$
given $(U,D,V,Y,\psi)$, from the full conditional
\begin{eqnarray*}
&& \!\!\!\!\!\!\!\!\!\pi(\btheta_k|U,D,V,Y) \propto
\prod_{t\in\cD_k}\!\!b^{*}_{\mu_k,\nu_k}\!\!\left(H(y_t|
\bomega_k)\right)h(y_t|\bomega_k)\\
&& \times \; \mu_k^{\xi_{\mu}-1}(1-\mu_k)^{\xi_{\mu}-1}
\nu_k^{\xi_{\nu}/2}\exp\{-\xi_{\nu}\nu_k/2\}
\prod_{i=1}^{M}\omega_{ik}^{\xi_{\omega}-1}
\one_{ \{ \bomega_k \in \DM \} }
\end{eqnarray*}
for $k\in\cD$, and from the prior $G_{0}$ for $k\notin\cD$.  We sample
from the full conditional by iterating over the following steps:
\begin{enumerate}
\item[(a)] $\pi(\mu_k,\nu_k|\bomega_k,U,D,V,Y,\psi)$
\item[(b)] $\pi(\bomega_k|\mu_k,\nu_k,U,D,V,Y,\psi)$.
\end{enumerate}
We apply here the same sampling strategy described for the parameter
of the finite beta mixture model.

\bigskip
\par\noindent \textit{5. Full conditional distribution of $D$.}
Samples from the full conditional of $D$ given
$(V,U,\Theta,Y,\psi)$ are obtained by sampling from
\begin{equation*}
\pi(d_t|V,U,Y) \propto
\one_{\{u_t<w_{d_t}\}}f_{\mu_{d_t},\nu_{d_t}}
\left(H(y_t|\bomega_{d_t})\right)h(y_t|\bomega_{d_t}),
\end{equation*}
with $d_t\in\{1,\ldots,N_t^*\}$, where $N_t^*$ is
defined above.

\bigskip
\par\noindent\textit{6. Full conditional distribution of $\psi$.}
If the dispersion parameter $\psi$ is assumed to be random with $\mathcal{G}a(c,d)$ prior, then an extra step is needed in the Gibbs sampler.  More specifically, the full conditional distribution of $\psi$ given $U$, $D$, $V$ and $\Theta$ has density
\begin{equation*}
\pi(\psi|K,T)\propto B(\psi,T)\psi^{K+c-1}\exp\{-d \psi\}\one_{\psi\in (0,+\infty)},
\end{equation*}
which depends only on the number of observations $T$ and the number of
mixture components $N^*$, which has been defined above.

\bigskip
The Gibbs sampler can used to generate draws from the predictive distribution $\hat{F}_{T+1}(y_{T+1})$. At each iteration a uniform random variable $u^{(i)}$ is sampled from the unit interval and $\btheta_j^{(i)}$ is used such that $w_{j-1}^{(i)}<u^{(i)}<w_{j}^{(i)}$. If $j>N^{*\,(i)}$, then more weights are required than currently exist, and they can be
sampled from $\mathcal{B}e(1,\psi)$ and the additional $\btheta_j^{(i)}$ from $G_0$. Having taken $\btheta_j^{(i)}$, $y_{T+1}^{(i)}$ can be sampled from $B^{*}_{\mu_{j}^{(i)},\nu_j^{(i)}}\left(H(y_{T+1}|\bomega_{j}^(i))\right)$.

\section{Calibration consistency }\label{additionalconsistency}
If the pooling parameters  are fixed, say $\boldsymbol{\omega}=\boldsymbol{\omega}_0$,
the inference is   necessarily  limited to  the calibration parameters $\boldsymbol{\theta}_{c}=(\mu,\nu)$, hence $\Theta=[0,1]\times\RE^+$
and $G$ is a DP process on $\mathcal{M}([0,1]\times\RE^+)$ with base measure $G_0$ and (known) concentration parameter $\psi$.

In this special case  $\Pi^{*}$ turns out to be the prior induced by
\[
G \mapsto  \int b^*_{\boldsymbol{\theta}_{c}}(H(y|\boldsymbol{\omega_0})) G(d\boldsymbol{\theta}_{c}) h(y|\boldsymbol{\omega_0})
\]
when $G \sim DP(\psi,G_0)$. Then, the analogous of Theorem \ref{corollario} is given below.

\begin{theorem}\label{teocons2}
Let $\boldsymbol{\omega_0}$ be a given point in $\Delta_M$ such that
$h(\cdot|\boldsymbol{\omega_0})$ is continuous and
for every compact set $C \subset \mathcal{Y}$,
\begin{equation}\label{lowerboundh0}
 \inf_{y \in C} h(y|\boldsymbol{\omega_0})>0.
\end{equation}
Assume that  $f_0$ is a continuous density on $\mathcal{Y}$
such that \eqref{entropy2} holds with $\boldsymbol{\omega_0}$ in place of  $\boldsymbol{\omega^*}$.  If $G_0$ has full support, then
$f_{0}\in KL(\Pi^{*})$.
\end{theorem}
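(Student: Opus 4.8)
The plan is to reduce the claim to the well-known Kullback--Leibler support property of Dirichlet process mixtures of beta densities on the unit interval, by exploiting the probability integral transform attached to the \emph{fixed} combination $\boldsymbol{\omega_0}$. Since $h(\cdot|\boldsymbol{\omega_0})$ is continuous and, by \eqref{lowerboundh0}, strictly positive on every compact subset of $\mathcal{Y}$, the cdf $z = H(y|\boldsymbol{\omega_0})$ is a strictly increasing $C^1$ bijection from the interior of $\mathcal{Y}$ onto an open sub-interval of $(0,1)$, with inverse $y = H^{-1}(z)$ and derivative $h(H^{-1}(z)|\boldsymbol{\omega_0})$. First I would push every density forward through this map. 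Writing $p_0(z) = f_0(H^{-1}(z))/h(H^{-1}(z)|\boldsymbol{\omega_0})$ for the transform of $f_0$, and noting that for any mixing measure $G$ on $[0,1]\times\RE^+$ the model density $f_G(y) = h(y|\boldsymbol{\omega_0})\int b^*_{\mu,\nu}(H(y|\boldsymbol{\omega_0}))\,G(d\mu\,d\nu)$ is carried to the \emph{pure beta mixture} $q_G(z) = \int b^*_{\mu,\nu}(z)\,G(d\mu\,d\nu)$, the change of variables strips away the combination factor entirely.

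The key observation is that the Kullback--Leibler divergence is invariant under this diffeomorphism, so that $KL(f_0, f_G) = KL(p_0, q_G)$ for every $G$; hence $\Pi^*(\mathcal{K}_\varepsilon(f_0)) = \Pi(\{G : KL(p_0, q_G)\leq\varepsilon\})$, and it suffices to establish the beta-mixture KL property at $p_0$. Under the same transform the two hypotheses in \eqref{entropy2} become conditions on $p_0$ alone: because $h(\cdot|\boldsymbol{\omega_0})$ is itself carried to the uniform density on $(0,1)$, one has $KL(f_0, h(\cdot|\boldsymbol{\omega_0})) = \int_0^1 p_0(z)\log p_0(z)\,dz$, so the second condition says that $p_0$ has finite entropy; and the first condition becomes $\int_0^1 (|\log z| + |\log(1-z)|)\,p_0(z)\,dz < +\infty$, an integrability control on $p_0$ at the two endpoints. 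Continuity of $f_0$, $H$ and $h$ guarantees that $p_0$ is a continuous density on its support.

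With $p_0$ identified as a continuous density of finite entropy, I would invoke the beta approximation theorem (Theorem~1 in \cite{RobRou02}, with \cite{DiaYlv85} as a supplement): given $\varepsilon>0$ there is a \emph{finite} beta mixture $q^* = q_{G^*}$, with $G^* = \sum_{j=1}^J \lambda_j\,\delta_{(\mu_j,\nu_j)}$ a discrete mixing measure, such that $KL(p_0, q^*) < \varepsilon/2$. The endpoint moment condition on $p_0$ is precisely what makes this KL-approximation legitimate near $0$ and $1$, where beta densities may diverge or vanish. Finally, since $G_0$ has full support, $DP(\psi,G_0)$ charges every weak neighbourhood of $G^*$ with positive probability; combining this with an upper semicontinuity argument for $G \mapsto KL(p_0, q_G)$ at $G^*$ shows that an entire weak neighbourhood of $G^*$ is contained in $\{G : KL(p_0, q_G) < \varepsilon\}$, which therefore has positive $\Pi$-probability. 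Translating back gives $\Pi^*(\mathcal{K}_\varepsilon(f_0)) > 0$ for every $\varepsilon > 0$, i.e. $f_0 \in KL(\Pi^*)$.

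The step I expect to be the main obstacle is the last one: passing from the deterministic finite approximant $q^*$ to a positive-probability set of mixing measures. The difficulty is that weak convergence $G_n \to G^*$ does not by itself control $\int_0^1 p_0 \log(q^*/q_{G_n})$ near the endpoints, because $b^*_{\mu,\nu}$ is unbounded there when $\mu\nu<1$ or $(1-\mu)\nu<1$. To handle this I would choose the atoms $(\mu_j,\nu_j)$ of $G^*$ so that $q^*$ admits a two-sided bound by beta densities whose shape parameters are small enough that $|\log q_G|$ is dominated, uniformly over $G$ in a weak neighbourhood, by a constant multiple of $|\log z| + |\log(1-z)|$; the finite endpoint moments of $p_0$ then render the dominating function $p_0$-integrable, and dominated convergence delivers the required semicontinuity. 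This is exactly the mechanism used by Wu and Ghosal \cite{WuGhosal2008,WuGhosal2008correction} for kernel mixtures, specialised here to the beta kernel and simplified by the absence of a varying combination weight.
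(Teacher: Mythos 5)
Your proposal is correct and follows essentially the same route as the paper: the paper proves this result as a direct modification of its proof of Theorem \ref{teocons3}, which likewise performs the change of variables $z=H(y|\boldsymbol{\omega_0})$ to reduce everything to a Dirichlet mixture of betas on $(0,1)$, invokes Theorem 1 of \cite{RobRou02} to produce the finite beta-mixture approximant in Kullback--Leibler divergence, and then passes to a positive-probability weak neighbourhood of mixing measures via the Wu--Ghosal conditions, using exactly the two-sided beta-density bounds and the endpoint log-moment integrability in \eqref{entropy2} that you identify as the key domination step. The only difference is presentational: you work entirely on the unit interval after transforming, whereas the paper verifies the Wu--Ghosal hypotheses in the original $y$-space and uses the transform only to compute the relevant integrals.
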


In the previous theorem, contrary to Theorem \ref{corollario},  $\boldsymbol{\omega_0}$ represents the true parameter. We do not assume $\boldsymbol{\omega_0}$ is in the interior of $\Delta_M$, which means that the set of models in the combination scheme can be complete. In some situations, it is useful to consider a base measure $G_0$ without full support. In this spirit,  following the techniques of \cite{GhosalRoyTang2007}, we can prove the next result.

\begin{theorem}\label{thm_cons1}  
 Let $\boldsymbol{\omega_0}$ be a given point in
  $\Delta_M$ and let
  \[
  f_{0}(y)=u_0(H(y|\boldsymbol{\omega_0}))h(y|\boldsymbol{\omega_0})
  \] with
$u_0(x)=w_0 b_{\mu_0,\nu_0}^*(x)+(1-w_0) \int_{(0,1)\times \RE^+} b_{\mu,\nu}^*(x) P_0(d\mu d\nu)$,
$P_0$ being a probability measure on $(0,1)\times \RE^+$.
If $(\mu_0,\nu_0)$ belong to $supp(G_0)$,
$\text{supp}(P_0) \subset \text{supp}(G_0)$, and for some $\zeta>0$
and $0<\eta<\min(\mu_0,1-\mu_0,\nu_0,w_0)$ one has
\begin{equation}\label{cond_moment}
\int_0^1 \frac{u_0(x)^{\zeta+1}}{x^{\zeta A }(1-x)^{\zeta B}}dx<+\infty,
\end{equation}
for $A=(\mu_0+\eta)(\nu_0+\eta)-1$ and $B=(1-\mu_0+\eta)(\nu_0+\eta)-1$,
 then $f_{0}\in KL(\Pi^{*})$.
\end{theorem}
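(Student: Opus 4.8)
The plan is to reduce the statement to the Kullback-Leibler property for mixtures of beta densities on the unit interval and then to invoke the arguments of \cite{GhosalRoyTang2007}. First I would change variables through the fixed pooling cdf, setting $x=H(y|\boldsymbol{\omega_0})$, so that $dx=h(y|\boldsymbol{\omega_0})\,dy$. For any mixing measure $G$ on $(0,1)\times\RE^+$ the induced density factorizes as $f_G(y)=g_G(H(y|\boldsymbol{\omega_0}))\,h(y|\boldsymbol{\omega_0})$, where $g_G(x):=\int b^*_{\mu,\nu}(x)\,G(d\mu\,d\nu)$. Since $f_0(y)=u_0(H(y|\boldsymbol{\omega_0}))\,h(y|\boldsymbol{\omega_0})$ by hypothesis, the factor $h(y|\boldsymbol{\omega_0})$ cancels inside the logarithm and simultaneously supplies the Jacobian, giving
\[
KL(f_0,f_G)=\int_0^1 u_0(x)\log\frac{u_0(x)}{g_G(x)}\,dx=KL(u_0,g_G).
\]
Thus $u_0$ is a bona fide density on $(0,1)$ (the PIT density of $f_0$ under $\boldsymbol{\omega_0}$), it is the beta mixture with mixing measure $Q_0:=w_0\,\delta_{(\mu_0,\nu_0)}+(1-w_0)P_0$, and the pooling part disappears from the problem. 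It therefore suffices to show that $u_0$ lies in the Kullback-Leibler support of the Dirichlet-process mixture-of-betas prior on densities over $(0,1)$.

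Next I would use the support hypotheses. Because $(\mu_0,\nu_0)\in\text{supp}(G_0)$ and $\text{supp}(P_0)\subset\text{supp}(G_0)$, the mixing measure $Q_0$ is supported inside $\text{supp}(G_0)$, so by the full weak support of $DP(\psi,G_0)$ on $\mathcal{M}(\text{supp}(G_0))$ every weak neighbourhood of $Q_0$ has positive prior probability. For $G$ in such a neighbourhood, continuity of $(\mu,\nu)\mapsto b^*_{\mu,\nu}(x)$ yields $g_G(x)\to u_0(x)$ pointwise, hence $\log(u_0/g_G)\to0$ pointwise. To control the endpoints I would also retain, inside the neighbourhood, a uniform lower bound: restricting the mixing mass to the $\eta$-box $B_\eta$ around $(\mu_0,\nu_0)$ and using that $b^*_{\mu,\nu}(x)\propto x^{\mu\nu-1}(1-x)^{(1-\mu)\nu-1}$ together with the monotonicity of $s\mapsto x^{s}$ on $(0,1)$, the choice $0<\eta<\min(\mu_0,1-\mu_0,\nu_0,w_0)$ gives exactly $\mu\nu-1\le A$ and $(1-\mu)\nu-1\le B$ on $B_\eta$, so that $g_G(x)\ge c\,x^{A}(1-x)^{B}$ for a constant $c>0$ and all $G$ with $G(B_\eta)$ bounded away from $0$ (which holds near $Q_0$ since $Q_0(B_\eta)\ge w_0$).

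The main obstacle, and the precise role of \eqref{cond_moment}, is upgrading the pointwise convergence $\log(u_0/g_G)\to0$ to convergence of the integral $KL(u_0,g_G)$. For the positive part I would dominate $u_0\log^{+}(u_0/g_G)\le u_0\log^{+}\!\big(u_0/(c\,x^{A}(1-x)^{B})\big)$ using the lower bound above, and then apply $\log t\le t^{\zeta}/\zeta$ to obtain the fixed envelope
\[
u_0(x)\log^{+}\!\Big(\frac{u_0(x)}{c\,x^{A}(1-x)^{B}}\Big)\le \frac{1}{\zeta c^{\zeta}}\,\frac{u_0(x)^{\zeta+1}}{x^{\zeta A}(1-x)^{\zeta B}},
\]
which is integrable on $(0,1)$ precisely by \eqref{cond_moment}; dominated convergence then drives the positive part of the integrand to $0$. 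The negative part needs no separate estimate: writing $KL(u_0,g_G)=\int u_0\log^{+}(u_0/g_G)-\int u_0\log^{-}(u_0/g_G)$, the vanishing of the positive part gives $\limsup_G KL(u_0,g_G)\le 0$, while $KL(u_0,g_G)\ge 0$ always, so $KL(u_0,g_G)\to 0$. Intersecting the weak neighbourhood of $Q_0$ with the event $\{G(B_\eta)\ge w_0/2\}$ then yields a set of positive $DP(\psi,G_0)$ probability on which $KL(f_0,f_G)<\varepsilon$, which is exactly $f_0\in KL(\Pi^{*})$. The delicate point throughout is the boundary integrability near $x=0$ and $x=1$, and this is where \eqref{cond_moment}, with its tailored exponents $A$ and $B$, is designed to enter.
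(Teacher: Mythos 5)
Your overall strategy coincides with the paper's up to the last step: the reduction to the Kullback--Leibler property for beta mixtures on $(0,1)$ via the change of variables $x=H(y|\boldsymbol{\omega_0})$, the polynomial lower bound $g_G(x)\geq c\,x^{A}(1-x)^{B}$ extracted from the mass $G$ places on the $\eta$-box around $(\mu_0,\nu_0)$, and the use of \eqref{cond_moment} through $\log t\leq t^{\zeta}/\zeta$ to control the integrand near the endpoints are all exactly the paper's ingredients. Your remark that the negative part of the integrand needs no separate estimate because $KL\geq 0$ is also correct.

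The gap is the step ``continuity of $(\mu,\nu)\mapsto b^*_{\mu,\nu}(x)$ yields $g_G(x)\to u_0(x)$ pointwise'' as $G$ ranges over shrinking weak neighbourhoods of $Q_0$. A weak neighbourhood only constrains integrals of finitely many \emph{bounded} continuous test functions, while $(\mu,\nu)\mapsto b^*_{\mu,\nu}(x)$ is unbounded on $(0,1)\times\mathbb{R}^+$ and $P_0$ need not have compact support; so weak closeness of $G$ to $Q_0$ does not by itself give pointwise (nor $L_1$) closeness of $g_G$ to $u_0$, and your dominated convergence argument has no pointwise limit to feed on. This is precisely where the paper takes a different route: it uses the self-similarity of the Dirichlet process to write $G=w_1\delta_{\theta_1}+(1-w_1)G_1$ with $(w_1,\theta_1)$ independent of $G_1\sim DP(\psi,G_0)$, treats the atom by hand (so that $|u_{\delta_{\theta_1}}-u_{\delta_{\theta_0}}|_1\to 0$ as $\theta_1\to\theta_0$), invokes the standard argument of \cite{GhosalRoyTang2007} to force $|u_{G_1}-u_{P_0}|_1\leq\eta'$ on a weak neighbourhood of $P_0$, and then converts $L_1$ closeness into a KL bound via Lemma 7 of \cite{GhosalvanderVaart2007}, namely $KL(u_0,u_G)\leq C_1 d_H^2(u_G,u_0)\bigl[1+\max\bigl(0,\log d_H^{-1}(u_G,u_0)\bigr)\bigr]\leq C_2\,|u_G-u_0|_1^{1/2}$, where the constant is finite exactly because of the $\zeta$-moment condition \eqref{cond_moment}. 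To rescue your dominated-convergence route you would first have to establish $L_1$ convergence of $g_G$ to $u_0$ over the chosen neighbourhoods (essentially the same lemma the paper cites) and then run a subsequence/contradiction argument to extract a single fixed neighbourhood on which the KL divergence is below $\varepsilon$; as written, that link is missing.
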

We shall notice that the specification of the calibration function $u_0$ in the true density $f_0$ can be used to build a statistical procedure for testing the hypothesis of well calibrated PITs against the alternative of not well-calibrated PITs.


\section{Proofs}\label{appendix_consistencyl}

\subsection{Proofs of the results of Sections \ref{sec:consistency_iid} and \ref{additionalconsistency}} 

The proof of Theorem \ref{corollario} is based on an application of Theorem 1 and Lemma 3 of \cite{WuGhosal2008,WuGhosal2008correction}.
For the shake of clarity we report the
statements of these results in Theorems 
 \ref{ThmWG2} below.
%
%
%
%
%

\begin{theorem}[Theorem 1 and Lemma 3 of \cite{WuGhosal2008}]\label{ThmWG2} 
Let $\Theta$ be a separable metric space. If for any $\eps>0$
there is a probability measure $G_\eps\in supp(\Pi)$ and a closed set $D_\eps$  such that 
\begin{itemize}
\item[(H1)] $\int \log(f_0/f_{G_\eps})f_0<\eps$,
\item[(H2)] $D_\eps$ contains $\sup(G_\eps)$ in its interior and
\[
 \int \log\Big( \frac{f_{G_\eps}(y)}{\inf_{\theta \in D_\eps} K(y;\theta)} \Big ) f_0(y)dy<+\infty,
\]
\item[(H3)] $\inf_{y \in C} \inf_{\theta \in D_\eps} K(y;\theta)>0$ for every compact set $C \subset \mathcal{Y}$,
\item[(H4)] $\{ \theta \mapsto K(y;\theta): y \in C\}$
is uniformly equicontinuous on $D_\eps$,
\end{itemize}
then 
$f_0 \in KL(\Pi^{*})$.
\end{theorem}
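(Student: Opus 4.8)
The plan is to reproduce the Wu--Ghosal argument and establish the Kullback--Leibler property directly. Since $\Pi^{*}$ is the image of $\Pi$ under the map $G\mapsto f_{G}$, we have $\Pi^{*}(\mathcal{K}_{\varepsilon}(f_{0}))=\Pi(\{G:KL(f_0,f_G)\le\varepsilon\})$, so it suffices, for each fixed $\varepsilon>0$, to produce a weak neighbourhood $W$ of some $G^{\star}\in supp(\Pi)$ with $W\subset\{G:KL(f_0,f_G)<\varepsilon\}$; positivity of $\Pi(W)$ is then automatic because $G^{\star}$ lies in the weak support. First I would invoke the hypotheses with tolerance $\varepsilon/2$, obtaining the measure $G^{\star}:=G_{\varepsilon/2}\in supp(\Pi)$ and the closed set $D^{\star}:=D_{\varepsilon/2}$ of (H1)--(H4), so that $KL(f_0,f_{G^{\star}})<\varepsilon/2$.

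Writing $KL(f_0,f_G)=KL(f_0,f_{G^{\star}})+\int f_0\log(f_{G^{\star}}/f_G)$ and discarding the nonpositive part, it remains to force $\int f_0[\log(f_{G^{\star}}/f_G)]^{+}<\varepsilon/2$ for $G\in W$. The engine is a uniform lower bound on $f_G$: by Urysohn I would choose a continuous $\phi$ with $\one_{supp(G^{\star})}\le\phi\le\one_{\mathrm{int}\,D^{\star}}$ and put the single constraint $\int\phi\,dG>1-\delta$ into $W$, which gives $G(\mathrm{int}\,D^{\star})>1-\delta$ and hence $f_G(y)\ge(1-\delta)\,m(y)$, where $m(y):=\inf_{\theta\in D^{\star}}K(y;\theta)$. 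On the complement of a large compact $C\subset\mathcal{Y}$ this yields $[\log(f_{G^{\star}}/f_G)]^{+}\le[\log(f_{G^{\star}}/m)]^{+}+\log\frac{1}{1-\delta}$; since $f_{G^{\star}}\ge m$ the first summand is nonnegative and, by (H2), $f_0$-integrable, so its tail over $C^{c}$ is below $\varepsilon/8$ once $C$ is large enough, while the constant term contributes less than $\varepsilon/8$ for $\delta$ small.

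The crux is the estimate on $C$, which needs $f_G\to f_{G^{\star}}$ uniformly on $C$ as $G\to G^{\star}$ weakly, and this is where I would exploit (H4). After reducing to a compact $D'\subset\mathrm{int}\,D^{\star}$ carrying all but $\eta$ of the mass of $G^{\star}$ (so that continuity makes $K$ bounded by some $\bar K$ on $C\times D^{\star}$), I would partition $D'$ into finitely many cells $A_i$ of diameter small enough that equicontinuity gives $|K(y;\theta)-K(y;\theta_i)|<\rho$ for $\theta\in A_i$, uniformly in $y\in C$. Adjoining to $W$ finitely many bounded--continuous constraints that force each $G(A_i)$ close to $G^{\star}(A_i)$, a triangle--inequality estimate routed through the finite sum $\sum_i K(y;\theta_i)G(A_i)$ produces $f_G(y)\ge f_{G^{\star}}(y)-s$ on $C$, with $s$ (governed by $\rho,\eta,\bar K$ and the constraints) as small as desired. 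Because (H3) gives $f_{G^{\star}}(y)\ge\inf_{y\in C}m(y)=:m_C>0$ on $C$, choosing $s<m_C/2$ yields $\log(f_{G^{\star}}/f_G)\le-\log(1-s/m_C)$ uniformly on $C$, whence $\int_C f_0[\log(f_{G^{\star}}/f_G)]^{+}<\varepsilon/4$. I expect this equicontinuity-plus-finite-partition step, together with the care needed to absorb the mass that $G$ may place outside $D'$, to be the main obstacle; the reduction to a compact $D'$ on which $K$ is bounded is precisely what makes the partition estimate close.

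Combining the two regions, every $G\in W$ satisfies $\int f_0[\log(f_{G^{\star}}/f_G)]^{+}<\varepsilon/4+\varepsilon/4=\varepsilon/2$ and therefore $KL(f_0,f_G)<\varepsilon$. Since $W$ is a weak neighbourhood of $G^{\star}\in supp(\Pi)$ defined by finitely many bounded continuous functions, $\Pi(W)>0$, so $\Pi^{*}(\mathcal{K}_{\varepsilon}(f_{0}))\ge\Pi(W)>0$; as $\varepsilon>0$ was arbitrary, $f_{0}\in KL(\Pi^{*})$.
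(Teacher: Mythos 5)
You should first note that the paper contains no proof of this statement: it is imported verbatim from Wu and Ghosal (2009a,b) --- the paper even flags that (H4) is the \emph{corrected} version of their original assumption (A9) --- so the comparison is with the Wu--Ghosal argument itself. Your architecture matches it: reduce to $KL(f_0,f_G)=KL(f_0,f_{G_{\eps/2}})+\int f_0\log(f_{G_{\eps/2}}/f_G)$, build a weak neighbourhood from finitely many bounded continuous constraints, split $\mathcal{Y}$ into a compact $C$ and its tail, use the Urysohn constraint to get $f_G\geq (1-\delta)m$ with $m(y)=\inf_{\theta\in D_{\eps/2}}K(y;\theta)$, dominate the tail by the (H2)-integrable $\log(f_{G_{\eps/2}}/m)$, and use (H3)--(H4) with a finite partition on $C$. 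All of that, and the final positivity step via $G_{\eps/2}\in supp(\Pi)$, is sound.

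There is, however, a genuine gap at the crux, and it sits exactly where the original Wu--Ghosal Lemma 3 needed its published correction. Your estimate on $C$ runs entirely through a finite bound $\bar K$ on $C\times D_{\eps/2}$ (used both for $\sum_i K(y;\theta_i)|G(A_i)-G_{\eps/2}(A_i)|\leq \bar K\sum_i|\cdots|$ and for bounding the mass outside $D'$ by $\bar K\eta$), and you justify it by ``continuity.'' But (H1)--(H4) assume no continuity or local boundedness of $K$ in $y$: $K(\cdot;\theta)$ is merely a density and may be unbounded on the compact $C$; moreover $D_{\eps/2}$ is only closed, not compact, and even on a compact $D'\subset\mathrm{int}\,D_{\eps/2}$ uniform equicontinuity in $\theta$ cannot manufacture $\sup_{y\in C}K(y;\theta_i)<+\infty$ from nothing. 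So the additive conclusion $f_G\geq f_{G_{\eps/2}}-s$ on $C$ does not follow as written. The known repair is multiplicative rather than additive: pick the cell centres $\theta_i$ in $supp(G_{\eps/2})$, so that equicontinuity gives $K(y;\theta_i)\leq \rho+f_{G_{\eps/2}}(y)/G_{\eps/2}(B(\theta_i,\delta_\rho))$ with $\min_i G_{\eps/2}(B(\theta_i,\delta_\rho))>0$, making the partition error a small \emph{multiple} of $f_{G_{\eps/2}}(y)$; and control the contribution of the mass outside the finitely many cells not by $\bar K\eta$ but by dominated convergence, using the (H2) envelope $\log(f_{G_{\eps/2}}/m)+\log(1/(1-\eta))$. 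A secondary slip: extracting a compact $D'$ with $G_{\eps/2}(D')>1-\eta$ uses tightness, which a bare separable metric $\Theta$ does not guarantee (Ulam's theorem needs completeness); fortunately compactness is unnecessary --- separability alone lets you cover $supp(G_{\eps/2})$ by countably many balls of radius $\delta_\rho/2$ and retain finitely many carrying mass at least $1-\eta$, which is all the finite-partition step requires.
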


Assumption (H1)  corresponds to (A1) in Theorem 1 of
\cite{WuGhosal2008}. 
Assumptions (H2)-(H3) correspond to assumptions (A7)-(A8) of Lemma 3  of \cite{WuGhosal2008},
while (H5) is slightly different from the original assumption (A9), see \cite{WuGhosal2008correction}. In order to apply the results given above, 
it is useful to reformulate Theorem \ref{corollario} as follows.
\begin{theorem}\label{teocons3}
Assume that the functions $f_m(\cdot)$ are continuous on $\mathcal{Y}$.
Let $u_0$ be a continuous density on $(0,1)$
such that
\begin{equation}\label{entropyu0bis}
\begin{split}
& \int_0^1 [|\log(x)|+|\log(1-x)|]u_0(x)dx<+\infty \\
& \text{and} \quad \int_0^1 \log(u_0(x)) u_0(x)dx <+\infty.\\
\end{split}
\end{equation}
Let
$f_0(y)=u_0(H(y|\boldsymbol{\omega^*}))h(y|\boldsymbol{\omega^*})$ with $\boldsymbol{\omega^*}$ in the interior of $\Delta_M$
and assume that, for every compact set $C \subset \mathcal{Y}$,
\begin{equation}\label{lowerboundh0bis}
 \inf_{y \in C} h(y|\boldsymbol{\omega^*})>0.
\end{equation}
Then
$f_{0}\in KL(\Pi^{*})$ whenever $G_0$ has full support.
\end{theorem}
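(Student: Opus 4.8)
The plan is to verify the four hypotheses (H1)--(H4) of Theorem~\ref{ThmWG2} and then invoke it directly. The guiding observation is that the factorization $f_0(y)=u_0(H(y|\boldsymbol{\omega^*}))h(y|\boldsymbol{\omega^*})$ is exactly a change-of-variables identity: since $h(\cdot|\boldsymbol{\omega^*})$ is the derivative of the cdf $H(\cdot|\boldsymbol{\omega^*})$, if $Y\sim f_0$ then $X:=H(Y|\boldsymbol{\omega^*})$ has density $u_0$ on $(0,1)$, so the whole problem collapses onto the unit interval and the conditions \eqref{entropyu0bis} become integrability statements about $u_0$.

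First I would build the candidate mixing measure $G_\eps$. Because $u_0$ is continuous on $(0,1)$ with finite entropy by the second line of \eqref{entropyu0bis}, Theorem~1 of \cite{RobRou02} supplies a finite beta mixture $\tilde u_\eps(x)=\sum_{j} p_j\, b^*_{\mu_j,\nu_j}(x)$ with $KL(u_0,\tilde u_\eps)<\eps$. I would then set $G_\eps=\sum_j p_j\,\delta_{(\boldsymbol{\omega^*},\mu_j,\nu_j)}$, a finitely supported measure whose atoms all lie in $\{\boldsymbol{\omega^*}\}\times(0,1)\times(0,\infty)\subset \text{supp}(G_0)$, since $\boldsymbol{\omega^*}$ is interior to $\Delta_M$ and $G_0$ has full support; hence $G_\eps\in\text{supp}(\Pi)$. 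With this choice $f_{G_\eps}(y)=h(y|\boldsymbol{\omega^*})\,\tilde u_\eps(H(y|\boldsymbol{\omega^*}))$, the factor $h(\cdot|\boldsymbol{\omega^*})$ cancels in the ratio $f_0/f_{G_\eps}$, and the change of variables $x=H(y|\boldsymbol{\omega^*})$ turns $\int\log(f_0/f_{G_\eps})f_0$ into $KL(u_0,\tilde u_\eps)<\eps$, which is (H1).

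For (H2) I would take $D_\eps=N\times[\mu_-,\mu_+]\times[\nu_-,\nu_+]$, where $N$ is a closed neighbourhood of $\boldsymbol{\omega^*}$ contained in the interior of $\Delta_M$ and the rectangle is a compact interval containing every $(\mu_j,\nu_j)$ in its interior, so $\text{supp}(G_\eps)$ sits in the interior of $D_\eps$. The core estimate is a lower bound $\inf_{\theta\in D_\eps}K(y;\theta)\ge c\,H(y|\boldsymbol{\omega^*})^{a}\,(1-H(y|\boldsymbol{\omega^*}))^{b}\,h(y|\boldsymbol{\omega^*})$ for a constant $c>0$ and exponents $a,b$ read off from the extreme beta exponents over $[\mu_-,\mu_+]\times[\nu_-,\nu_+]$ (the normalizing $\Gamma$-factor is continuous and positive on the compact box, hence bounded below). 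This uses two comparisons valid for $\boldsymbol{\omega}\in N$: since every coordinate of such $\boldsymbol{\omega}$ is bounded below, each of $h(y|\boldsymbol{\omega})$, $H(y|\boldsymbol{\omega})$ and $1-H(y|\boldsymbol{\omega})$ is comparable, up to constants, to its value at $\boldsymbol{\omega^*}$, so the $h$-factors cancel up to a constant and the beta argument stays a fixed multiple away from the endpoints. Taking logarithms and bounding $\tilde u_\eps$ near $0$ and $1$ by the same power type, the integrand is dominated by a constant plus multiples of $|\log H(y|\boldsymbol{\omega^*})|$ and $|\log(1-H(y|\boldsymbol{\omega^*}))|$, whose $f_0$-integrals are finite by the first line of \eqref{entropyu0bis} after the same change of variables.

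Finally (H3) and (H4) are the routine compactness steps: on a compact $C\subset\mathcal{Y}$, continuity of the $f_m$ together with $\inf_{y\in C}h(y|\boldsymbol{\omega^*})>0$ forces $H(\cdot|\boldsymbol{\omega^*})$ to stay bounded away from $0$ and $1$, and the comparisons above keep $H(y|\boldsymbol{\omega})$ inside a fixed compact subinterval of $(0,1)$ for all $\boldsymbol{\omega}\in N$; joint continuity and positivity of $b^*_{\mu,\nu}$ there give (H3), and uniform equicontinuity of $\theta\mapsto K(y;\theta)$ over $y\in C$ on the compact $D_\eps$ gives (H4). Assembling (H1)--(H4), Theorem~\ref{ThmWG2} yields $f_0\in KL(\Pi^*)$. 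I expect the only delicate point to be (H2): one must enlarge $\boldsymbol{\omega^*}$ to a neighbourhood $N$ wide enough to place the approximating atoms in the interior of $D_\eps$, yet keep the endpoint exponents under control, and then lower-bound the infimum of the beta kernel near $x=0,1$ --- this is exactly where the first moment condition in \eqref{entropyu0bis} is consumed.
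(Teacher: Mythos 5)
Your proposal is correct and follows essentially the same route as the paper's own proof: the Robert--Rousseau finite beta-mixture approximation of $u_0$ in Kullback--Leibler divergence, the product measure $G_\eps=\delta_{\boldsymbol{\omega^*}}\times\tilde P_\eps$, the change of variables $x=H(y|\boldsymbol{\omega^*})$ collapsing (H1) onto the unit interval, and the two-sided comparison of $h(\cdot|\boldsymbol{\omega})$, $H(\cdot|\boldsymbol{\omega})$ and $1-H(\cdot|\boldsymbol{\omega})$ with their values at $\boldsymbol{\omega^*}$ over a small compact neighbourhood inside the interior of $\Delta_M$, which yields the power-type lower bound on the kernel infimum used for (H2)--(H3). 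No substantive differences from the published argument.
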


\begin{proof}[Proof of Theorem \ref{teocons3}]
Here we need to think $\Delta_M$ as the set
$\{(\omega_1,\dots,\omega_{M-1}) \in [0,1]^{M-1}: \sum_{i=1}^{M-1}\omega_i\leq 1\}$ endowed with
the topology induced by the euclidean norm. Clearly, $\omega_{M}$  will denote $1- \sum_{i=1}^{M-1}\omega_i$.

{\it Verification of H1 of Theorem  \ref{ThmWG2}.}
Since $u_0$ is continuous on $(0,1)$ and $\int_0^1 \log(u_0(x)) u_0(x)dx <+\infty$
by Theorem 1 in \cite{RobRou02} there is
\[
 u_{\eps}(x)=u_{\tilde P_\eps}(x)=\sum_{i=1}^{K_\eps} w_{i,\eps} b^*_{\mu_{i,\eps},\nu_{i,\eps}}(x),
\]
where $\tilde P_\eps(d\mu d\nu):=\sum_{i=1}^{K_\eps} w_{i,\eps} \delta_{\mu_{i,\eps},\nu_{i,\eps}}(d\mu d\nu)$,
such that $KL(u_0,u_\eps) \leq \eps$.
If $G_\eps(d\boldsymbol{\omega} d\mu d\nu):=\delta_{\boldsymbol{\omega^*}}(d\boldsymbol{\omega})\times \tilde P_\eps(d\mu d\nu)$,
then
\[
 f_{G_\eps}(y)=\int b^*_{\mu,\nu}(H(y|\boldsymbol{\omega} )) h(y|\boldsymbol{\omega})G_\eps(d\boldsymbol{\omega} d\mu d\nu) =
u_{\eps}(H(y|\boldsymbol{\omega^*}))h(y|\boldsymbol{\omega^*}).
\]
By a simple change of variables,
\[
\begin{split}
KL(f_0,f_{G_\eps})& =\int u_0(H(y|\boldsymbol{\omega^*}))h(y|\boldsymbol{\omega^*}) \log\Big( \frac{u_0(H(y|\boldsymbol{\omega^*}))h(y|\boldsymbol{\omega^*})}{u_\eps(H(y|\boldsymbol{\omega^*}))h(y|\boldsymbol{\omega^*})} \Big) dy\\
& 
=\int_0^1 u_0(z) \log\Big( \frac{u_0(z)}{u_\eps(z)} \Big)dz. \\
\end{split}
\]
That is
\[
 KL(f_0,f_{G_\eps})= KL(u_0,u_\eps) \leq \eps.
\]
 Note that $supp(G_\eps)=\{ \boldsymbol{\omega^*}\}\times \cup_{i=1}^{K_\eps} \{(\mu_{i,\eps},\nu_{i,\eps})\}$
and, since $G_0$ has full support, $G_\eps \subset supp(Dir(\psi,G_0))$.

{\it Verification of H2 of Theorem  \ref{ThmWG2}.}
One can find a compact set $D_\eps^*$  in $(0,1)\times (0,+\infty)$ such that
$D_\eps^*$ contains $\cup_{i=1}^{K_\eps} \{(\mu_{i,\eps},\nu_{i,\eps})\}$ in its interior. Moreover,
recalling that $h(y|\boldsymbol{\omega})=\sum_{i=1}^M \boldsymbol{\omega}_i f_i(y)$ and that
$\boldsymbol{\omega^*}$ is in the interior of $\Delta_M$,
one can find
a (sufficiently small) compact set $\Delta_\eps^* \subset
\Delta_M$ containing $\boldsymbol{\omega^*}$ in its interior such that if $\boldsymbol{\omega} \in \Delta_\eps^*$
then $ C_{1,\eps} h(y|\boldsymbol{\omega^*}) \leq h(y|\boldsymbol{\omega})\leq C_{2,\eps} h(y|\boldsymbol{\omega^*})$ for every $y$.
It follows that $D_\eps=\Delta_\eps^* \times D_\eps^*$ is a compact set containing $supp(G_\eps)$ in its interior.
Noticing that if $\boldsymbol{\omega} \in \Delta_\eps^*$ then $ C_{2,\eps}  H(y|\boldsymbol{\omega^*}) \geq H(y|\boldsymbol{\omega}) \geq C_{1,\eps}  H(y|\boldsymbol{\omega^*})$
and $ C_{2,\eps} (1- H(y|\boldsymbol{\omega^*})) \geq (1-H(y|\boldsymbol{\omega})) \geq C_{1,\eps} (1- H(y|\boldsymbol{\omega^*}))$, one can write
\[
\begin{split}
I_\eps(y)&:=\inf_{(\boldsymbol{\omega},\mu,\nu) \in D_\eps} K(y;\boldsymbol{\omega},\mu,\nu)  \\
& =\inf_{(\boldsymbol{\omega},\mu,\nu) \in D_\eps} h(y|\boldsymbol{\omega}) \frac{H(y|\boldsymbol{\omega})^{\mu\nu-1}(1-H(y|\boldsymbol{\omega}))^{(1-\mu)\nu-1}}{B(\mu\nu,(1-\mu)\nu)} \\
&
\geq
C_{3,\eps} h(y|\boldsymbol{\omega^*}) H(y|\boldsymbol{\omega^*})^{A_\eps-1}(1-H(y|\boldsymbol{\omega^*}))^{B_\eps-1}=:I^*_\eps(y)\\
\end{split}
\]
where $C_{3,\eps}=C_{1,\eps} C_{2,\eps}^{-2} \inf\{ C_{1,\eps}^{\mu\nu+(1-\mu)\nu} /B(\mu\nu,(1-\mu)\nu)
:(\mu,\nu) \in D_\eps^* \}  >0$, $A_\eps=\sup\{\mu\nu: (\mu,\nu) \in D_\eps^*\}>0$ and $B_\eps=\sup\{(1-\mu)\nu: (\mu,\nu) \in D_\eps^*\}>0$.
Hence, one the one hand $f_{G_\eps}(y) \geq  I_\eps(y)$ and hence $\log( {f_{G_\eps}(y)}/I_\eps(y))\geq 0$, on the other hand
\[
\begin{split}
\int \log\Big( \frac{f_{G_\eps}(y)}{I_\eps(y)}  \Big ) f_0(y)dy  &
\leq \int \log\Big( \frac{f_{G_\eps}(y)}{I^*_\eps(y)}  \Big ) f_0(y)dy \\
& \leq  \int_0^1 \log\Big( \frac{u_{\eps}(x)}{x^{A_\eps-1}(1-x)^{B_\eps-1}}\Big ) u_0(x)dx+|\log(C_{3,\eps})|. \\
\end{split}
\]
Since $C_{4,\eps} x^{A_\eps'-1}(1-x)^{B_\eps'-1} \leq u_{\eps}(x) \leq C_{5,\eps} x^{A_\eps''-1}(1-x)^{B_\eps''-1}$ for suitable constants, it follows that
\[
 \int \Big|\log\Big(   \frac{u_{\eps}(x)}{x^{A_\eps-1}(1-x)^{B_\eps-1}}\Big )\Big| u_0(x)dx \leq
C_{6,\eps} \int [|\log(x)|+|\log(1-x)|]u_0(x)dx<+\infty
\]
by assumption \eqref{entropyu0bis}. Hence
\[
 0< \int \log\Big( \frac{f_{G_\eps}(y)}{\inf_{(\boldsymbol{\omega},\mu,\nu) \in D_\eps} K(y;\boldsymbol{\omega},\mu,\nu)} \Big ) f_0(y)dy<+\infty.
\]

{\it Verification of H3 of Theorem  \ref{ThmWG2}.} It follows immediately that, for every compact set $C$,
\[
 \inf_{y \in C} \inf_{(\boldsymbol{\omega},\mu,\nu) \in D_\eps} K(y;\boldsymbol{\omega},\mu,\nu)
\geq \inf_{y \in C}  I^*_\eps(y)
\]
and the right hand side is strictly positive by  \eqref{lowerboundh0bis}.

{\it Verification of H4 of Theorem  \ref{ThmWG2}.}
The function $(\boldsymbol{\omega},\mu,\nu,y) \mapsto K(y;\boldsymbol{\omega},\mu,\nu)$
is continuous and hence uniformly continuous on the compact set $C\times D_\eps$. It follows
that the family $\{ (\boldsymbol{\omega},\mu,\nu) \mapsto K(y;\boldsymbol{\omega},\mu,\nu): y \in C\}$
is uniformly equicontinuous on $D_\eps$.
\end{proof}

\begin{proof}[Proof of Theorem \ref{corollario}]
Write $H_0$ and $h_0$ for $H(\cdot|\boldsymbol{\omega^*})$ and its density.
By assumptions one gets that $H_0$ is continuous and strictly increasing.
Hence, if one defines
\[
 u_0(x):=\frac{f(H_0^{-1}(x))}{h_0(H_0^{-1}(x))},
\]
it follows that $f_0(y)=u_0(H_0(y))h_0(y)$. Note that $u_0$ turns out to be a continuous function on $(0,1)$. It remains to check that
assumption \eqref{entropy2} yields \eqref{entropyu0bis}.
Now, a change of variable gives
\[
\int |\log(H(y|\boldsymbol{\omega^*}))|f_0(y)dy=\int |\log(H(y|\boldsymbol{\omega^*}))|u_0(H_0(y))h_0(y)dy=
\int |\log(x)|u_0(x)dx.
\]
Similarly for $\int |\log(1-H(y|\boldsymbol{\omega^*}))|]f_0(y)dy$. Finally
\[
 KL(f_0,h(\cdot|\boldsymbol{\omega^*})) =\int  \log(u_0(H_0(y))u_0(H_0(y))h_0(y)  dy=\int u_0(x) \log(u_0(x))dx.
 \]
\end{proof}

\begin{proof}[Proof of Theorem \ref{teocons2}]
The proof is a simple modification of the proof of Theorem \ref{teocons3}.
\end{proof}

\begin{proof}[Proof of Theorem \ref{thm_cons1}]
Given any measure $Q$ on $(0,1)\times \RE^+$ recall that
$f_Q(x)=u_Q(H(y|\boldsymbol{\omega_0}))h(y|\boldsymbol{\omega_0})$
where $u_Q(x)=\int b_{\mu,\nu}^*(x) Q(d\mu d\nu)$.
%
Again, by a simple change of variables,
$KL(f_G,f_0)=KL(u_G,u_0)$.
Hence to prove that $f_{0}\in KL(\Pi^{*})$ it suffices to prove that
for every $\epsilon>0$, $P\{KL (u_G,u_0)\leq \epsilon \}>0$.

Now recall that if $G \sim DP(\psi,G_0)$ then
$G$ admits the representation
$G=w_1 \delta_{\theta_1}+(1-w_1) G_1$ where $w_1,\theta_1=(\mu_1,\nu_1)$ and $G_1$ are stochastically
independent, $G_1  \sim DP(\psi,G_0)$, $w_1 \sim Beta(1,\phi_1)$ and $\theta_1 \sim G_0$.

Given $\eta,\eta'>0$ define $\mathcal{U}_\eta:=\{ (w,\mu,\nu) \in (0,1)^2\times \RE^+: |w-w_0|\leq \eta, |\mu-\mu_0|\leq \eta,
|\nu-\nu_0|\leq \eta\}$ and $\mathcal{U}_{\eta,\eta'}^*:=
\{ G=w_1 \delta_{\theta_1}+(1-w_1)G_1: (w_1,\theta_1)\in \mathcal{U}_{\eta}, |u_{G_1}-u_{P_0}|_1\leq\eta'  \}$,
where we denote by $|u_1-u_2|_1=\int|u_1-u_2|dx$ be the $L_1$ distance between two densities $u_1$ and $u_2$.

Note that if $G \in \mathcal{U}_{\eta,\eta'}^* $ then
\[
 u_G(x)\geq w_1 b_{\mu_1,\nu_1}^*(x)\geq  c_\eta x^{A_\eta}(1-x)^{B_\eta}
\]
where
\[
\begin{split}
& c_\eta:=\frac{w_0-\eta}{B((\mu_0-\eta)(\nu_0-\eta),(1-\mu_0-\eta)(\nu_0-\eta))},  \\
& A_\eta:=(\mu_0+\eta)(\nu_0+\eta)-1, \qquad B_\eta:=(1-\mu_0+\eta)(\nu_0+\eta)-1, \\
\end{split}
\]
provided that $\mu_0-\eta,1-\mu_0-\eta,\nu_0-\eta,w_0-\eta$ are positive.
Hence, for any $\zeta>0$,
\[
 \Big [ \frac{u_0(x)}{u_G(x)}  \Big ]^{\zeta} \leq c_{\eta,\zeta}^*
\frac{u_0(x)^{\zeta}}{x^{A_\eta\zeta}(1-x)^{B_\eta\zeta}}=:g^*(x)
\]
for a suitable constant $c_{\eta,\zeta}^*$.
By assumption \eqref{cond_moment}, there is $\zeta$ such that $C_0:=\int g^*(x)u_0(x)dx<+\infty$.
Hence for such $(\eta,\zeta)$,  by Lemma
7 of \cite{GhosalvanderVaart2007},
\[
KL(u_0,u_G)\leq C_1 d_H^2(u_G,u_0)[1+ \max(0,\log(d_H^{-1}(u_G,u_0)))]
\]
where $d_H(u_G,u_0)=(\int (\sqrt{u_G}-\sqrt{u_0})^2dx)^{1/2}$ is the Hellinger distance between
$u_0$ and $u_G$. Note that the constant $C_1$
depends on $C_0,\eta$ and $\zeta$ only.
Since $d_H(u_G,u_0)^2\leq |u_G-u_0|_1$ (see, e.g.,  Corollary 1.2.1 in \cite{GhoshRamamoorthiNP})
it follows that
\[
KL(u_0,u_G)\leq C_2 |u_G-u_0|_1^{1/2}
\]
for every $G \in \mathcal{U}_{\eta,\eta'}^*$ when $\eta'$ is small enough.
Now, it is easy to check that
\[
 |u_G-u_0|_1 \leq 2|w_1-w_0|+|u_{\delta_{\theta_1}}-u_{\delta_{\theta_0}}|_1+|u_{G_1}-u_{P_0}|_1
\]
and that $|u_{\delta_{\theta_1}}-u_{\delta_{\theta_0}}|_1$ goes to zero as $|\theta_1-\theta_0| \to 0$.
Since if $\eta''\leq \eta$ then
$\mathcal{U}_{\eta'',\eta'}^* \subset \mathcal{U}_{\eta,\eta'}^*$,
using the previous results, for every $\epsilon>0$, one can find sufficiently small $\eta'$ and $\eta''\leq \eta$ such that
if $G \in \mathcal{U}_{\eta'',\eta'}^*$  then
$KL(u_G,u_0)\leq \epsilon$.
By standard argument (see e.g. the proof of Thm.2 in \cite{GhosalRoyTang2007}) if $G_1$ is in a sufficiently small  weak neighbourhood $V_{\eta'}$ of $P_0$ then
$|u_{P_0}-u_{G_1}|_1\leq \eta'$, hence
\[
\{G=w_1\delta_{\theta_1}+(1-w_1)G_1: G_1 \in V_{\eta'},(w_1,\theta_1)\in \mathcal{U}_{\eta''}\} \subset
  \mathcal{U}_{\eta'',\eta'}^* \subset \{KL(u_G,u_0)\leq \epsilon\}.
\]
Moreover, $supp(P_0) \subset supp(G_0)$ yields that
$P_0$ belongs to the support of $Dir(\phi,G_0)$ and hence $P(G_1 \in V_{\eta'})>0$,
while the fact that $\theta_1 \in supp(G_0)$ yields that $P((w_1,\theta_1)\in \mathcal{U}_{\eta''})>0$.
Using the independence of $(w_1,\theta_1)$ and $G_1$, one concludes
\[
 P(KL(u_0,u_G)\leq \epsilon)\geq P(G \in \mathcal{U}_{\eta'',\eta'}^*)
\geq P(G_1 \in V_{\eta'}) P((w_1,\theta_1)\in \mathcal{U}_{\eta''})>0.
\]
\end{proof}

\subsection{Proofs of the results of Section \ref{ConMarkov}} %

As for the proof of the results for Markovian observations, the starting 
point is a suitable  adaptation  of Theorem 1 and Lemma 3 of \cite{WuGhosal2008}.
In the Markovian setting,  the prior  $\Pi^*$ on $\mathcal{F}$ is induced via the map
$G\mapsto f_{G}(y|x) :=\int_\Theta K(y| x,\boldsymbol{\theta})G(d\boldsymbol{\theta})$,
where
$\Theta$ is  the mixing parameter space, $K(y| x,\boldsymbol{\theta})$ is a transition kernel and
$G$ has  a prior $\Pi$ on the  space  $\mathcal{M} (\Theta)$ of probability measures on $\Theta$.
Recall that in our model $\Theta=\Delta_M \times \Theta_1$, 
where $\Delta_M$ and $\Theta_1=(0,1) \times  \mathbb{R}^+$   are the combination  and  calibration
parameter spaces, respectively. 

\begin{theorem}\label{ThmWGMarkov2bis} Let $\Theta=\Delta_M \times \Theta_1$, $\Theta_1$  being a separable metric space. If for any $\eps>0$
there is a probability measure $G_\eps\in supp(\Pi)$ such that
\begin{itemize}
\item[(M1)] $\int \int \log(f_0(y|x)/f_{G_\eps}(y|x))f_0(y|x) dy \pi(dx)<\eps$,
\item[(M2)] there is a closed set $D^*_\eps$ which contains $\sup(G_\eps)$ in its interior or   
$D_\eps^*=\Delta_M \times  D_\eps$ with $D_\eps$ is closed in $\Theta_1$ 
and  $\sup(G_\eps) \subset 
\Delta_M \times  D_\eps^\circ $ ($ D_\eps^\circ$ being the interior of $D_\eps$), for which  
\[
 \int \int \log\Big( \frac{f_{G_\eps}(y|x)}{\inf_{\theta \in D_\eps^*} K(y|x,\theta)} \Big ) f_0(y|x)dy\pi(dx)<+\infty,
\]
\item[(M3)] $\inf_{ (x,y) \in C} \inf_{\theta \in D_\eps^*} K(y|x,\theta)>0$ for every compact set $C \subset \mathcal{Y} \times \mathcal{Y}$,
\item[(M4)] $\{ \theta \mapsto K(y|x,\theta): (x,y) \in C\}$
is uniformly equicontinuous on $D_\eps^*$,
\end{itemize}
then $f_0 \in KL(\Pi^{*})$.
\end{theorem}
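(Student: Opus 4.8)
The plan is to transpose to the transition-density setting the two-step mechanism that drives the proof of the i.i.d. result (Theorem \ref{teocons3}, via Theorem \ref{ThmWG2} of \cite{WuGhosal2008}): first select, for a given $\eps>0$, a mixing measure $G_\eps\in\mathrm{supp}(\Pi)$ whose induced density is within $\eps$ of $f_0$ for the $\pi$-integrated Kullback--Leibler functional, and then show that an entire \emph{weak} neighbourhood of $G_\eps$ is carried by $G\mapsto f_G(\cdot|\cdot)=\int_\Theta K(\cdot|\cdot,\theta)G(d\theta)$ into the slightly larger KL neighbourhood $\mathcal{K}_{2\eps}(f_0)$. Because every weak neighbourhood of a point of $\mathrm{supp}(\Pi)$ has positive $\Pi$-mass, this yields $\Pi^*(\mathcal{K}_{2\eps}(f_0))>0$, and letting $\eps\downarrow0$ gives $f_0\in KL(\Pi^*)$. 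Concretely, I would write
\[
\int\!\!\int f_0\log\frac{f_0}{f_G}\,dy\,\pi(dx)
=\int\!\!\int f_0\log\frac{f_0}{f_{G_\eps}}\,dy\,\pi(dx)
+\int\!\!\int f_0\log\frac{f_{G_\eps}}{f_G}\,dy\,\pi(dx),
\]
so that (M1) disposes of the first term and the whole problem reduces to producing a weak neighbourhood $W\ni G_\eps$ on which the cross term stays below $\eps$.

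The key quantitative input is (M2). Since $\mathrm{supp}(G_\eps)$ is contained in the interior of $D_\eps^*$ (in the second form, interior relative to the product topology, with the full compact factor $\Delta_M$ occupying the combination coordinate), the portmanteau theorem furnishes a sub-basic weak neighbourhood $W$ on which $G(D_\eps^*)\ge\tfrac12$. There one has the lower bound $f_G(y|x)\ge\tfrac12\inf_{\theta\in D_\eps^*}K(y|x,\theta)$, whence
\[
\log\frac{f_{G_\eps}(y|x)}{f_G(y|x)}\le\log2+\log\frac{f_{G_\eps}(y|x)}{\inf_{\theta\in D_\eps^*}K(y|x,\theta)}=:\Psi(y,x),
\]
and $\Psi$ is integrable against $f_0(y|x)\,dy\,\pi(dx)$ exactly by the finiteness assertion in (M2). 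This gives a one-sided integrable majorant for the cross-term integrand on all of $W$.

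With the majorant secured, I would drive the cross term to a nonpositive limit. The measure $f_0(y|x)\,dy\,\pi(dx)$ is a probability measure on $\CY\times\CY$, so after shrinking $W$ it suffices to show that $\log(f_{G_\eps}/f_G)$ becomes small off a set of small mass. On a compact block $C\subset\CY\times\CY$ I would invoke (M3), giving a strictly positive floor for $K$ on $C\times D_\eps^*$, together with the uniform equicontinuity (M4) of $\theta\mapsto K(y|x,\theta)$ over $(x,y)\in C$, which upgrades the weak convergence $G\to G_\eps$ to uniform convergence $f_G\to f_{G_\eps}$ on $C$ (equivalently, the one-sided portmanteau bound $\liminf f_G(y|x)\ge f_{G_\eps}(y|x)$ for nonnegative lower-semicontinuous $K(y|x,\cdot)$, which alone forces $\limsup\log(f_{G_\eps}/f_G)\le0$ pointwise); on $C^c$ the majorant $\Psi$ and absolute continuity of the integral make the contribution as small as desired. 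A reverse Fatou estimate against $\Psi$ then yields $\limsup\int\!\int f_0\log(f_{G_\eps}/f_G)\,dy\,\pi(dx)\le0<\eps$ along $W$, so that $W\subset f_G^{-1}(\mathcal{K}_{2\eps}(f_0))$, and $\Pi(W)>0$ closes the argument. The only structural novelty relative to the i.i.d. proof is carrying the extra outer integration against the stationary law $\pi$ through each step; since $\mathcal{M}(\Theta)$ is metrizable ($\Theta=\Delta_M\times\Theta_1$ being separable metric), the sequential bounds transfer to neighbourhood statements without difficulty.

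The step I expect to be the main obstacle is precisely converting weak convergence of the mixing measure into control of $f_G(\cdot|\cdot)$ \emph{uniformly in} $(x,y)$, because $K(y|x,\cdot)$ is continuous but unbounded in $\theta$ (the beta kernel blows up as $\nu\to\infty$ or near the endpoints), so naive weak continuity fails and one must lean on the $G(D_\eps^*)\ge\tfrac12$ confinement plus the lower-semicontinuous portmanteau bound to handle the uncontrolled mass that $G$ may place where $K$ is not floored. Handling the second form of $D_\eps^*$ in (M2)---where only the calibration coordinate is confined to a compact $D_\eps\subset\Theta_1$ while the combination coordinate ranges over all of $\Delta_M$---requires checking that the positivity and equicontinuity estimates hold uniformly in $\boldsymbol{\omega}\in\Delta_M$, which is exactly where the compactness of $\Delta_M$ is used decisively.
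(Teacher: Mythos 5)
Your proposal is correct and follows exactly the route the paper takes: the paper's own proof of this theorem is a one-line deferral to the proofs of Theorem 1 and Lemma 3 of Wu and Ghosal, and your argument is a faithful, correctly detailed transposition of that mechanism --- the KL decomposition at $G_\eps$ so that (M1) handles the first term, the confinement $G(D_\eps^*)\ge \tfrac12$ on a weak neighbourhood yielding the integrable majorant from (M2), and the portmanteau/reverse-Fatou step via (M3)--(M4) --- carried through the extra outer integration against $\pi(dx)$. No gaps worth flagging beyond those already implicit in the cited original.
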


\begin{proof}
The proof follows the same lines of the  proofs of Theorem 1 and Lemma 3 of \cite{WuGhosal2008}.
\end{proof}

\begin{theorem}\label{teoconsMarkov2}
Under the assumption of Theorem \ref{teoconsMarkov3}
 then  $f_{0}\in KL(\Pi^{*})$.
  \end{theorem}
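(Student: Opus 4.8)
The plan is to deduce $f_0 \in KL(\Pi^*)$ from the Markovian Kullback--Leibler criterion of Theorem \ref{ThmWGMarkov2bis}, in direct parallel with the proof of Theorem \ref{teocons3} for the i.i.d.\ case. Thus for each $\eps>0$ I must exhibit a mixing measure $G_\eps \in supp(\Pi)$ and a closed set $D_\eps^*$ of the product form $\Delta_M \times D_\eps$ satisfying (M1)--(M4). Note that the tail bound \eqref{upperboundF} plays no role for the Kullback--Leibler property itself; it is only used in Theorem \ref{teoconsMarkov3} to upgrade that property to weak consistency through Corollary 4.1 of \cite{TangGhosal07}.

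First I would construct $G_\eps$. Since $f_0=f_{G^*}$ is already a mixture of the kernel $K$ against the probability measure $G^*$, and $supp(G^*)\subset supp(G_0)$, I would approximate $G^*$ weakly by a finitely supported measure $G_\eps=\sum_{i=1}^{N_\eps} w_i\,\delta_{(\mu_i,\nu_i,\boldsymbol{\omega}_i)}$ whose atoms lie in $supp(G_0)$; such discrete measures belong to the weak support of $DP(\psi,G_0)$, so $G_\eps\in supp(\Pi)$. The crucial structural fact is that the $(\mu,\nu)$-marginal of $G_0$ has compact support $D\subset (0,1)\times\RE^+$, so the calibration parameters of both $G^*$ and every $G_\eps$ lie in $D$, on which the reparametrised beta densities admit uniform two-sided bounds $c_1\,z^{A-1}(1-z)^{A-1}\le b^*_{\mu,\nu}(z)\le c_2\,z^{a-1}(1-z)^{a-1}$, with constants and exponents depending only on $D$.

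Combining these bounds with the hypotheses $R_-(y|x)\le H(y|x,\boldsymbol{\omega})(1-H(y|x,\boldsymbol{\omega}))\le R_+(y|x)$ and $h(y|x,\boldsymbol{\omega})\ge r_-(y|x)$, I would bound the kernel, and hence $f_{G_\eps}$, from below by a quantity whose logarithm is controlled by $|\log R_-|+|\log(R_+/r_-)|+\mathrm{const}$; since $f_0=f_{G^*}$ obeys the very same lower bound, condition \eqref{eq:condmarkov1} (which also includes $\log f_0$) supplies a single $f_0(\cdot|x)\,\pi(dx)$-integrable envelope for $|\log(f_0/f_{G_\eps})|$, uniformly in the refining family $G_\eps$. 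Because $\boldsymbol{\theta}\mapsto K(y|x,\boldsymbol{\theta})$ is continuous, $f_{G_\eps}(y|x)\to f_0(y|x)$ pointwise as the weak approximation improves, and dominated convergence yields (M1): the integrated divergence $\int\!\int \log(f_0/f_{G_\eps})\,f_0\,dy\,\pi(dx)$ can be made $<\eps$. Taking $D_\eps$ a compact neighbourhood of the atoms' $(\mu,\nu)$-coordinates inside $(0,1)\times\RE^+$ and $D_\eps^*=\Delta_M\times D_\eps$, the same two-sided bounds give a positive lower bound for $\inf_{D_\eps^*}K$ and an integrable logarithm, verifying (M2) in its product form.

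Finally, (M3) and (M4) are the soft part. The assumption $\min_m \inf_C f_m>0$ forces each $f_m$, hence $h(\cdot|\cdot,\boldsymbol{\omega})=\sum_m\omega_m f_m$, to be bounded below by a positive constant on every compact $C\subset\mathcal{Y}\times\mathcal{Y}$; continuity and strict positivity of the $f_m$ also keep each $F_m(y|x)$, and therefore $H$ and $1-H$, bounded away from $0$ and $1$ on $C$, so the continuous beta factor is bounded below on $C\times D_\eps^*$, which gives (M3). Joint continuity of $(y,x,\boldsymbol{\theta})\mapsto K(y|x,\boldsymbol{\theta})$ together with compactness of $C\times D_\eps^*$ yields uniform equicontinuity, that is (M4). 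With all four conditions checked, Theorem \ref{ThmWGMarkov2bis} delivers $f_0\in KL(\Pi^*)$. I expect the verification of (M1)--(M2)---assembling the uniform kernel bounds and producing the integrable envelope required for dominated convergence---to be the main obstacle, the remaining steps being routine continuity and compactness arguments.
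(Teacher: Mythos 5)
Your overall strategy coincides with the paper's: both proofs apply Theorem \ref{ThmWGMarkov2bis} with $D_\eps^*=\Delta_M\times D_\eps$, exploit the compact support of the $(\mu,\nu)$-marginal of $G_0$ to get the lower envelope $I^*(y|x)\geq C\,r_-(y|x)R_-(y|x)^{A}/R_+(y|x)$ for the kernel, and verify (M2)--(M4) exactly as you describe. The one place you diverge is (M1), and there your route is needlessly roundabout: since $f_0=f_{G^*}$ and $supp(G^*)\subset supp(G_0)$, the measure $G^*$ itself lies in the weak support of $DP(\psi,G_0)$, so the paper simply takes $G_\eps=G^*$, making $f_{G_\eps}=f_0$ and (M1) hold trivially with the divergence equal to zero. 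Your discrete approximation of $G^*$ buys nothing here.

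Moreover, your version of (M1) as written has a gap. You claim a single $f_0\,\pi$-integrable envelope for $|\log(f_0/f_{G_\eps})|$ uniformly over the approximating family, but the hypotheses of Theorem \ref{teoconsMarkov3} only furnish the \emph{lower} bound $f_{G_\eps}\geq I^*$ (hence an integrable upper bound on $\log(f_0/f_{G_\eps})$); controlling $\log(f_0/f_{G_\eps})$ from below would require an upper bound on $f_{G_\eps}$, and in particular an integrable bound on $\log h(y|x,\boldsymbol{\omega})$ from above, which is not among the assumptions (only $h\geq r_-$ is). The argument can be repaired without two-sided domination — each $\int\!\int \log(f_0/f_{G_\eps})f_0\,dy\,\pi(dx)$ is a nonnegative KL-type quantity, so one-sided domination plus pointwise convergence and Fatou's lemma already force the limit superior to be $\leq 0$, hence the integrals tend to $0$ — but as stated the dominated-convergence step does not go through. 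All of this evaporates if you make the paper's choice $G_\eps=G^*$; the remainder of your verification of (M2)--(M4) is correct and matches the paper.
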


\begin{proof} We apply   Theorem \ref{ThmWGMarkov2bis} with 
 $\Theta_1=(0,1) \times  \mathbb{R}^+$, 
 $
K(y|x, \boldsymbol{\theta})=b^*_{\mu,\nu}(H(y|x,\boldsymbol{\omega}))h(y|x,\boldsymbol{\omega})
$
with $\boldsymbol{\theta}=(\boldsymbol{\theta}_{p},\boldsymbol{\theta}_{c})$ 
  and 
$G_\eps(d\boldsymbol{\omega}d\mu d\nu):=G^*(d\boldsymbol{\omega}d\mu d\nu)$. 
In this case, since $f_{G_\eps}=f_{G^*}=f_0$,  M1 is automatically satisfied, so it remains to check the other assumptions. 

{\it Verification of M2 of Theorem  \ref{ThmWGMarkov2bis}.}
One can find a compact set $D_\eps$  in $(0,1)\times (0,+\infty)$ such that
the support of $G_0$ is contained in 
$\Delta_M \times  D_\eps^\circ$. At this stage one can check that
\[
\inf_{(\mu,\nu) \in D_\eps} 
H(y|x,\boldsymbol{\omega})^{\mu\nu}(1-H(y|x,\boldsymbol{\omega}))^{(1-\mu)\nu} 
\geq H(y|x,\boldsymbol{\omega})^{A}(1-H(y|x,\boldsymbol{\omega}))^{A}
\geq R_-(y|x)^A 
\]
where $A:=\max\{ \sup\{\mu\nu: (\mu,\nu) \in D_\eps\}, \sup\{(1-\mu)\nu: (\mu,\nu) \in D_\eps\}\}>0$.
Hence, setting $D_\eps^*:=\Delta_M \times  D_\eps$,
\[
I(y|x) :=\inf_{(\boldsymbol{\omega},\mu,\nu) \in D_\eps^*} K(y|x,\boldsymbol{\omega},\mu,\nu)  
\geq
\frac{C_1r_-(y|x)  R_-(y|x)^{A} }{R_+(y|x)} =:I^*(y|x)
\]
where $C_{1}= \inf\{ 1/B(\mu\nu,(1-\mu)\nu)
:(\mu,\nu) \in D_\eps \}  >0$.
Clearly $f_{G_\eps}(y|x) \geq  I(y|x)$ and hence $\log( {f_{G_\eps}(y|x)}/I_\eps(y|x))\geq 0$, moreover, 
recalling that $f_{G_\eps}(y|x)= f_0(y|x)$, 
one can write
\[
\begin{split}
\int \log & \Big( \frac{f_{G_\eps}(y|x)}{I(y|x)}  \Big ) f_0(y|x)dy  
\leq \int \log\Big( \frac{f_{0}(y|x)}{I^*(y|x)}  \Big ) f_0(y|x)dy \\
& \leq  \int  |\log (f_0(y|x))| f_0(y|x)dy + \int  |\log(I^*(y|x)) |  f_0(y|x)dy . \\
\end{split}
\]
Combining this with the assumptions, 
one gets 
\[
0< \int \int \log\Big( \frac{f_{G_\eps}(y|x)}{\inf_{(\boldsymbol{\omega},\mu,\nu) \in D_\eps} K(y|x,\boldsymbol{\omega},\mu,\nu)} \Big ) f_0(y|x)dy\pi(dx)<+\infty.
\]

{\it Verification of M3 of Theorem  \ref{ThmWGMarkov2bis}.} Using the positivity assumption of the densities $h_m$, it follows immediately that, for every compact set $C$,
$ \inf_{(x,y) \in C} \inf_{(\boldsymbol{\omega},\mu,\nu) \in D_\eps} K(y|x,\boldsymbol{\omega},\mu,\nu)
>0$.

{\it Verification of M4 of Theorem  \ref{ThmWGMarkov2bis}.} 
The function $(\boldsymbol{\omega},\mu,\nu,x,y) \mapsto K(y|x,\boldsymbol{\omega},\mu,\nu)$
is continuous and hence uniformly continuous on the compact set $C\times D_\eps$. It follows
that the family $\{ (\boldsymbol{\omega},\mu,\nu) \mapsto K(y;\boldsymbol{\omega},\mu,\nu): (x,y) \in C\}$
is uniformly equicontinuous on $D_\eps$.
\end{proof}

\begin{proof}[Proof of Theorem \ref{teoconsMarkov3}]
Let $A:\{f: \int|\int g(y)[f(y|x)-f_0(y|x)]dy|  \lambda(dx)>\eps \}$ where $g$ is a bounded continuous function, say $|g(y)|\leq 1$. 
Let $C \subset \CY$ a compact set such that $ \lambda(C^c) <\eps/4$ and set $M=2 \lambda(C)$. If $f \in A$ then
$\int_C|\int g(y)[f(y|x)-f_0(y|x)]dy|  \lambda(dx)>\eps/2$ and $\sup_{x \in C} \int g(y)[f(y|x)-f_0(y|x)]dy >\eps/M$. 
Let $x_f=argmax\{\int g(y)[f(y|x)-f_0(y|x)]dy :x \in C  \}$, then if $x \in C$
\[
 \Big |\int g(y)[f(y|x)-f_0(y|x)]dy  \Big | > \frac{\eps}{M}-R(f,x)
\]
where 
$
 R(f,x):= \Big |\int g(y)[f(y|x)-f(y|x_f)]dy  \Big|+ \Big|\int g(y)[f_0(y|x)-f_0(y|x_f)]dy  \Big|$.
Now for every compact set $K \subset (0,1) \times \RE^+$ and every compact set $C' \subset \CY$
the functions $K \times C  \times C' \times \Delta_M \ni (\mu,\nu,x,y,\boldsymbol{\omega})
\mapsto b_{\mu,\nu}^*(H(y|x,\boldsymbol{\omega}))$  and
$C \times  C' \times \Delta_M \ni (x,y,\boldsymbol{\omega})
\mapsto h(y|x,\boldsymbol{\omega})$  are uniformly continuous. Hence for every $\eps'$ there is $\delta=\delta(\eps,K,C,C')$ such that
\[
\begin{split}
& \sup_{(\mu,\nu,y,\boldsymbol{\omega}) \in K \times  C' \times \Delta_M}
 |b_{\mu,\nu}^*(H(y|x,\boldsymbol{\omega}))-b_{\mu,\nu}^*(H(y|x',\boldsymbol{\omega}))|\leq \eps',  \\ 
 & \sup_{(y,\boldsymbol{\omega}) \in C' \times \Delta_M} |h(y|x,\boldsymbol{\omega})-h(y|x',\boldsymbol{\omega})|
\leq \eps' \\
\end{split}
\]
whenever $|x-x'|\leq \delta$. Moreover $S:=\sup_{(\mu,\nu,x,y,\boldsymbol{\omega}) \in K \times C \times  C' \times \Delta_M}
 |b_{\mu,\nu}^*(H(y|x,\boldsymbol{\omega}))| <+\infty$.
Now for $f$ in the support of the prior, recalling that $f(y|x)=\int b_{\mu,\nu}^*(H(y|x,\boldsymbol{\omega})) h(y|x,\boldsymbol{\omega}) G(d\mu d\nu d \boldsymbol{\omega})$, 
for some $G$ with $supp(G) \subset supp(G_0) \subset K \times \Delta_M$, it follows that
whenever $|x-x_f |\leq \delta$ one can write
\[
\begin{split}
  \Big | & \int_{C'} g(y)[f(y|x)-f(y|x_f)]dy  \Big|  \\  
& \leq \int_{K \times \Delta_M}
\int_{C'} \Big [ | b_{\mu,\nu}^*(H(y|x,\boldsymbol{\omega})) -b_{\mu,\nu}^*(H(y|x_f,\boldsymbol{\omega}))| h(y|x,\boldsymbol{\omega})  \\
& + b_{\mu,\nu}^*(H(y|x_f,\boldsymbol{\omega}))| h(y|x,\boldsymbol{\omega}) - h(y|x_f,\boldsymbol{\omega})|  \Big ]
dy  G(d\mu d\nu d \boldsymbol{\omega})\\
& \leq  \eps' \int_{K \times \Delta_M} 
\int_{C'}    h(y|x,\boldsymbol{\omega})[1 + \frac{  S  }{h(y|x,\boldsymbol{\omega}) } ] dy  G(d\mu d\nu d \boldsymbol{\omega}) \\
& \leq  \eps' \int_{K \times \Delta_M} \int_{C'} h(y|x,\boldsymbol{\omega})(1+ \frac{S}{c^*}) dy G(d\mu d\nu d \boldsymbol{\omega})\leq  \eps'(1+ \frac{S}{c^*})
\end{split}
\]
where $c^*=\min_{m=1,\dots,M} \inf_{ (x,y) \in C\times C'}f_m(y|x)>0$ by assumption.

Consider now $C'=[-L,L]$, then by a simple change of variables 
\[
\int_{(C')^c} f(y|x)dy = \int_{K \times \Delta_M}\int_{(0,H(-L|x,\boldsymbol{\omega}))\cup (H(L|x,\boldsymbol{\omega}),1)}
 b_{\mu,\nu}^*(z) dz  G(d\mu d\nu d \boldsymbol{\omega})
\]
hence by \eqref{upperboundF}
\[
  \int_{(C')^c} g(y)[f(y|x)-f(y|x_f)]dy \leq 2 \int_{K \times \Delta_M}\int_{(0,\eta)\cup (1-\eta,1)}
 b_{\mu,\nu}^*(z) dz  G(d\mu d\nu d \boldsymbol{\omega}).
\]
Now it is easy to see that 
$\sup_{(\mu,\nu) \in K} b_{\mu,\nu}^*(z) \leq A b^*(z)$ for a suitable density $b^*$ and a suitable constant $A$. 
Hence, for every $\eps''$ one can find $C'=[-L,L]$ such that 
\[
 |\int_{(C')^c} g(y)[f(y|x)-f(y|x_f)]|dy \leq \eps''
\]
for every $f$ in the support of $\Pi$. Combining this statement with the first part of the proof 
it follows that there is $\delta$ such that for every $f \in A$ which is in the support of $\Pi$ one has 
$R(f,x) \leq \eps/2M$ whenever $|x-x_f| \leq \delta$. Hence for such $f$ and $x$ 
one has $|\int g(y)[f(y|x)-f_0(y|x)]dy  | > \eps/(2M)$.
Thus one can partition $C=\cup_{r=1}^{R} C_r$ and $A=\cup_{r=1}^{R} A_r$ such that the length of $C_r$ is
at most $\delta$ and 
if $f \in A_r$ then $\inf_{x \in C_r} |\int g(y)[f(y|x)-f_0(y|x)]dy  | > \eps/(2M)$. 
Recalling that by Theorem \ref{teoconsMarkov2} we already know that 
the Kullback-Leibler property holds, 
by Corollary 4.1 in \cite{TangGhosal07}
it follows that $\Pi^*_n(A_r) \to 0$ almost surely and hence $\Pi^*_n(A) \to 0$ a.s..
\end{proof}

{

\begin{lemma}\label{lemma:gaussiantailmixture}
If $0<\sigma_{-} < \min_{m=1,\dots,M}  \sigma_m \leq \max_m \sigma_m< \sigma_{+}$, then
for every $y$ in $\mathbb{R}$ and $\mu_1,\dots,\mu_M$
\begin{equation}\label{dis_gauss}
\frac{\sigma_-}{\sigma_+}e^{-\frac{C^-}{2} \mu_m^2} \varphi(y|0,\sigma_-^2)  \leq \varphi(y|{\mu_m,\sigma_{m}^2}) \leq \frac{\sigma_+}{\sigma_-}
e^{\frac{C^+}{2} \mu_m^2} \varphi(y|0,\sigma_+^2) 
\end{equation}
with $C^+:=\max_{m=1,\dots,M} (\sigma_m^2-\sigma_-^2)^{-1}$
and $C^+:=\max_{m=1,\dots,M} (\sigma_+^2-\sigma_m^2)^{-1}$.
\end{lemma}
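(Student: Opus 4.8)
The plan is to prove each of the two inequalities by forming the ratio of the relevant Gaussian densities and studying the resulting quadratic exponent in $y$. Writing the Gaussian density explicitly, for any comparison variance $\sigma$ the ratio $\varphi(y|\mu_m,\sigma_m^2)/\varphi(y|0,\sigma^2)$ equals $(\sigma/\sigma_m)\exp\{Q(y)\}$, where $Q(y)=-\tfrac{(y-\mu_m)^2}{2\sigma_m^2}+\tfrac{y^2}{2\sigma^2}$ is a quadratic whose leading coefficient is $\tfrac12(\sigma^{-2}-\sigma_m^{-2})$. The sign of this coefficient is dictated by the ordering of the variances, which is precisely why the choices $\sigma=\sigma_+$ and $\sigma=\sigma_-$ produce an upper and a lower bound, respectively. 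Throughout, I fix $m$ and let $y$ range over $\mathbb{R}$.

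For the upper bound I would take $\sigma=\sigma_+$. Since $\sigma_m<\sigma_+$, the leading coefficient of $Q$ is negative, so $Q$ is a downward parabola attaining a finite maximum. A routine completion of the square gives $\max_y Q(y)=\mu_m^2/\big(2(\sigma_+^2-\sigma_m^2)\big)$, so that
\[
\varphi(y|\mu_m,\sigma_m^2)\le \frac{\sigma_+}{\sigma_m}\exp\!\Big\{\frac{\mu_m^2}{2(\sigma_+^2-\sigma_m^2)}\Big\}\,\varphi(y|0,\sigma_+^2).
\]
I then pass to the uniform constants: in the prefactor I use $\sigma_m>\sigma_-$ to get $\sigma_+/\sigma_m\le\sigma_+/\sigma_-$, and in the exponent I bound $(\sigma_+^2-\sigma_m^2)^{-1}\le\max_{m}(\sigma_+^2-\sigma_m^2)^{-1}=C^+$. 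This yields the stated upper inequality.

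For the lower bound I would instead take $\sigma=\sigma_-$. Now $\sigma_-<\sigma_m$ makes the leading coefficient of $Q$ positive, so $Q$ is an upward parabola attaining a finite minimum; completing the square gives $\min_y Q(y)=-\mu_m^2/\big(2(\sigma_m^2-\sigma_-^2)\big)$, whence
\[
\varphi(y|\mu_m,\sigma_m^2)\ge \frac{\sigma_-}{\sigma_m}\exp\!\Big\{-\frac{\mu_m^2}{2(\sigma_m^2-\sigma_-^2)}\Big\}\,\varphi(y|0,\sigma_-^2).
\]
Using $\sigma_m<\sigma_+$ (so $\sigma_-/\sigma_m\ge\sigma_-/\sigma_+$) in the prefactor and bounding $(\sigma_m^2-\sigma_-^2)^{-1}\le C^-:=\max_{m}(\sigma_m^2-\sigma_-^2)^{-1}$ in the negative exponent delivers the lower inequality.

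There is essentially no hard part here: the whole argument reduces to completing the square and making the monotonicity substitutions in the prefactor and the exponent. The only point demanding a little care is tracking the correct direction of each inequality when replacing the $m$-dependent quantities $\sigma_m$ and $(\sigma_\pm^2-\sigma_m^2)^{-1}$ by the uniform bounds $\sigma_\pm$ and $C^\pm$. I would also flag the apparent typo in the statement: the first constant should read $C^-:=\max_{m}(\sigma_m^2-\sigma_-^2)^{-1}$, matching the exponent in the lower bound, rather than being labelled $C^+$.
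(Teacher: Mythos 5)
Your proof is correct and follows essentially the same route as the paper's: both bound the ratio $\varphi(y|\mu_m,\sigma_m^2)/\varphi(y|0,\sigma_\pm^2)$ by completing the square in the exponent, replacing the $y$-dependent quadratic by its extremum, and then passing to the uniform constants $\sigma_\pm$ in the prefactor and $C^\pm$ in the exponent. Your flag of the typo is also right: the first constant should indeed be labelled $C^-:=\max_{m}(\sigma_m^2-\sigma_-^2)^{-1}$, an error the paper's own statement (and proof) reproduces by writing $C^+$ twice.
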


\begin{proof} 
It is easy to check that if $\sigma_a^2 \leq \sigma_b^2$ 
\[
\begin{split}
\frac{\varphi(y|\mu_a,\sigma_a^2)}{\varphi(y|\mu_b,\sigma_b^2)}  & =
\frac{\sigma_b}{\sigma_a} e^{-\frac{1}{2 \sigma^2_*}(y-\mu^*)^2}e^{ -\frac{1}{2} \Big[ \frac{\mu_a^2}{\sigma_a^2}
-\frac{\mu_b^2}{\sigma_b^2}-\big(  \frac{\mu_a}{\sigma_a^2}- \frac{\mu_b}{\sigma_b^2}\big)^2 \sigma_*^2 \Big ] } \\
& \leq \frac{\sigma_b}{\sigma_a}e^{ -\frac{1}{2} \Big[ \frac{\mu_a^2}{\sigma_a^2}
-\frac{\mu_b^2}{\sigma_b^2}-\big(  \frac{\mu_a}{\sigma_a^2}- \frac{\mu_b}{\sigma_b^2}\big)^2 \sigma_*^2 \Big ] } \\
\end{split}
\]
with $1/\sigma_*^2:=1/\sigma_a^2-1/\sigma_b^2$ and $\mu_*:=\sigma^2_*(\mu_a/\sigma_a^2-\mu_b/\sigma_b^2)$
By using the previous inequalit  with $\mu_a=0,\sigma_a^2=\sigma_-^2$ and $\mu_b=\mu_m,\sigma_b^2=\sigma_m^2$  (
$\mu_b=0,\sigma_b^2=\sigma_+^2$ and $\mu_a=\mu_m,\sigma_a^2=\sigma_m^2$, respectively) after  some algebra 
one gets the first (second, respectively) inequality in the thesis. 
\end{proof}

{\bf Details of Example \ref{exMix}.}
With the notation of Example \ref{exMix}, 
if  $h(y_t|y_{t-1},\boldsymbol{\omega})=\sum_{m=1}^M\omega_{m} \varphi(y_t |{\mu_m +\phi_m y_{t-1}  ,\sigma_m^2}) $
it follows from Lemma \ref{lemma:gaussiantailmixture} that
\[
 C_1 e^{-C_2 x^2} \varphi(y|0,\sigma_-^2) \leq h(y|x,\boldsymbol{\omega}) \leq  C_3  e^{C_4 x^2} \varphi(y|0,\sigma_+^2),
\]
 for suitable constants $C_1,C_2,C_3,C_4$.
Hence it is easy to see that \ref{upperboundF} of  Theorem  \ref{teoconsMarkov3} holds true. 
In order verify \eqref{eq:condmarkov1} of Theorem  \ref{teoconsMarkov3} we set $r^-(y|x):= C_1 e^{-C_2 x^2} \varphi(y|0,\sigma_-^2)$,
\[
R_+(y|x) := C_3^2 e^{2C_4 x^2} \Phi(y|x,\sigma_+^2)(1- \Phi(y|x,\sigma_+^2)), 
\]
and
\[
R_-(y|x) := C_1^2 e^{2C_2 x^2} \Phi(y|x,\sigma_-^2)(1- \Phi(y|x,\sigma_-^2)).
\]
Using (ii) of Example \ref{ex:completeincomplete}, after some computations, one can  show that 
\[
|\log(R_+(y|x))|+|\log(R_-(y|x))|+|\log(r_-(y|x))| \leq K[1 + x^2+y^2]
\]
for a suitable constant $K$.  Analogously, using again Lemma \ref{lemma:gaussiantailmixture} and recalling the definition of $f_0$, one can prove that
\[
|\log(f_0(y|x))| \leq K'[1 + x^2+y^2] \quad \text{and} \quad f_0(y|x) \leq K'' \varphi_0(y|x,\sigma^2_*) 
\]
for a suitable $\sigma^2_*$ and $K', K''$.
Hence, to check  that \eqref{eq:condmarkov1} in Theorem  \ref{teoconsMarkov3} is satisfied it suffices to prove that 
\[
 \int\int   [1 + x^2+y^2] \varphi_0(y|x,\sigma^2_*)  dy \pi(dx) \leq K^{'''}\int [1+x^2]  \pi(dx)<+\infty.
 \]
 Given the specific form of the mAR process we are dealing with, the existence of a stationary solution 
 with finte second moment is well-known. 
 }
 
\section{Further simulation results}\label{appendix_simulation}
In order to complete the simulation study, the following 
combination-calibration schemes have been considered in addition to 
the infinite component beta mixture (BM$_{\infty}$).

\begin{itemize}
\item Beta-transformed linear pool (BM$_1$)
\begin{equation*}
f(y|\btheta)=f_{\alpha,\beta}\left(H(y|\omega)\right)h(y|\omega),
\end{equation*}
where $\btheta=(\alpha,\beta,\omega)$, $h(y|\omega)=\omega
\varphi(y|-1,1)+(1-\omega) \varphi(y|2,1)$ and $H(y|\omega)=\omega
\nu(y|-1,1)+(1-\omega) \nu(y|2,1)$.
\item Two-component finite beta mixture model (BM$_2$)
\begin{equation*}
f(y|\btheta) = w f_{\alpha_1,\beta_1}\left(H(y|\omega_1)\right)
h(y|\omega_1)+(1-w)f_{\alpha_2,\beta_2}\left(H(y|\omega_2)\right)h(y|\omega_2),
\end{equation*}
where $\btheta = (w, \alpha_1, \alpha_2, \beta_1, \beta_2, \omega_1,
\omega_2)$, and $h(y|\omega)$ and $H(y|\omega)$ have been defined as
in the BC model.
\end{itemize}
In the simulation experiments, the hyper\-parameter setting for the BC
and BMC model is $\xi_{j\mu}=2$, $\xi_{j\nu}=0.1$ and
$\xi_{j\omega}=1$, and $\xi_{jw}=1$, $j=1,2$. The priors are
informative, but with a large prior variance, thus one can expect
posterior inference should not be affected by the hyper\-parameter
settings. Our experiments show that the results, in terms of
calibration, do not change when considering less informative prior
settings, and secondly that the use of improper prior distributions in
mixtures model, even if possible, still remains an open issue. See
e.g.~\cite{Was00} for a discussion on the use of improper prior in
mixture modelling.

\subsection{Multimodality}
Figure \ref{fig:mot1} shows the empirical cdfs of different sequences of
probability integral transform (PIT).  In all the experiments, the PIT
of the non-calibrated model (red lines) is far from the standard
uniform (black lines).  In these datasets, the BC clearly lacks
calibration.  The BC cdf (green line) is closer to uniformity than the
NC model, but it has difficulties in deforming the combination density
some parts of the support.

\begin{figure}[t]
\begin{center}
\setlength{\tabcolsep}{1pt}
\begin{tabular}{ccc}
{\small $\bp=(1/5,1/5,3/5)$} &&
{\small $\bp=(1/7,1/7,5/7)$} \\
\includegraphics[width=5cm]{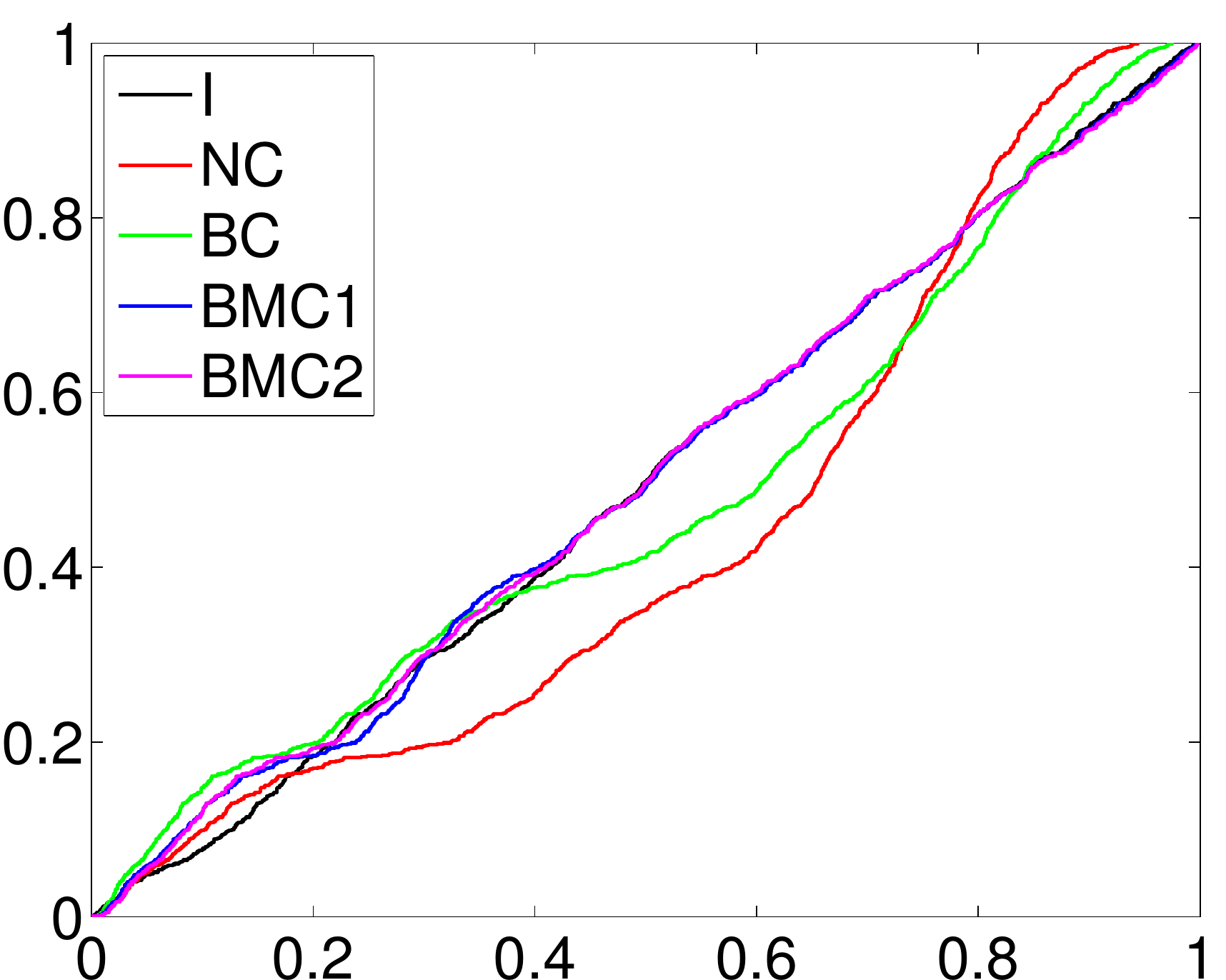} &&
\includegraphics[width=5cm]{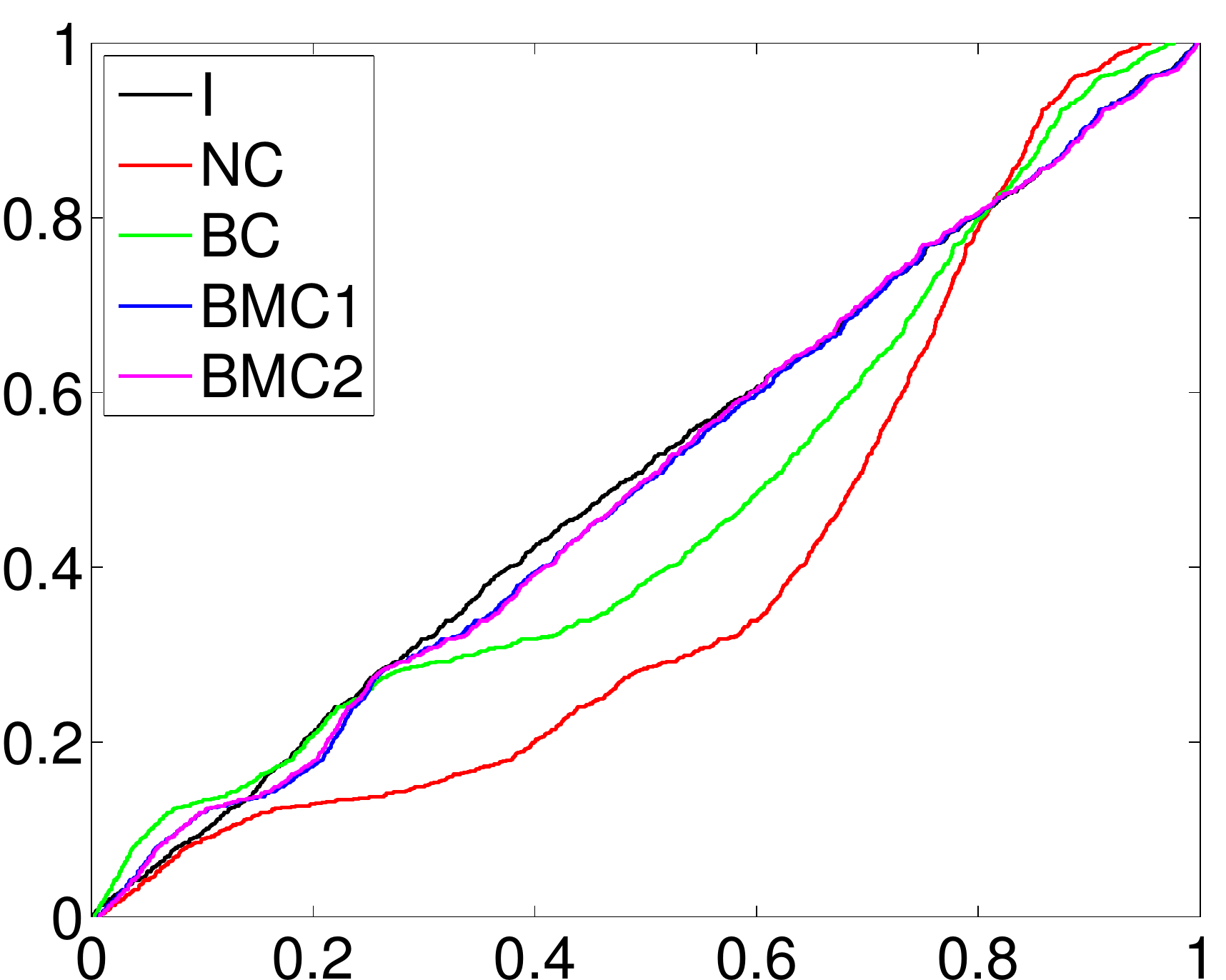} \\
{\small $\bp=(3/5,1/5,1/5)$} &&
{\small $\bp=(5/7,1/7,1/7)$ \rule{0mm}{6mm}} \\
\includegraphics[width=5cm]{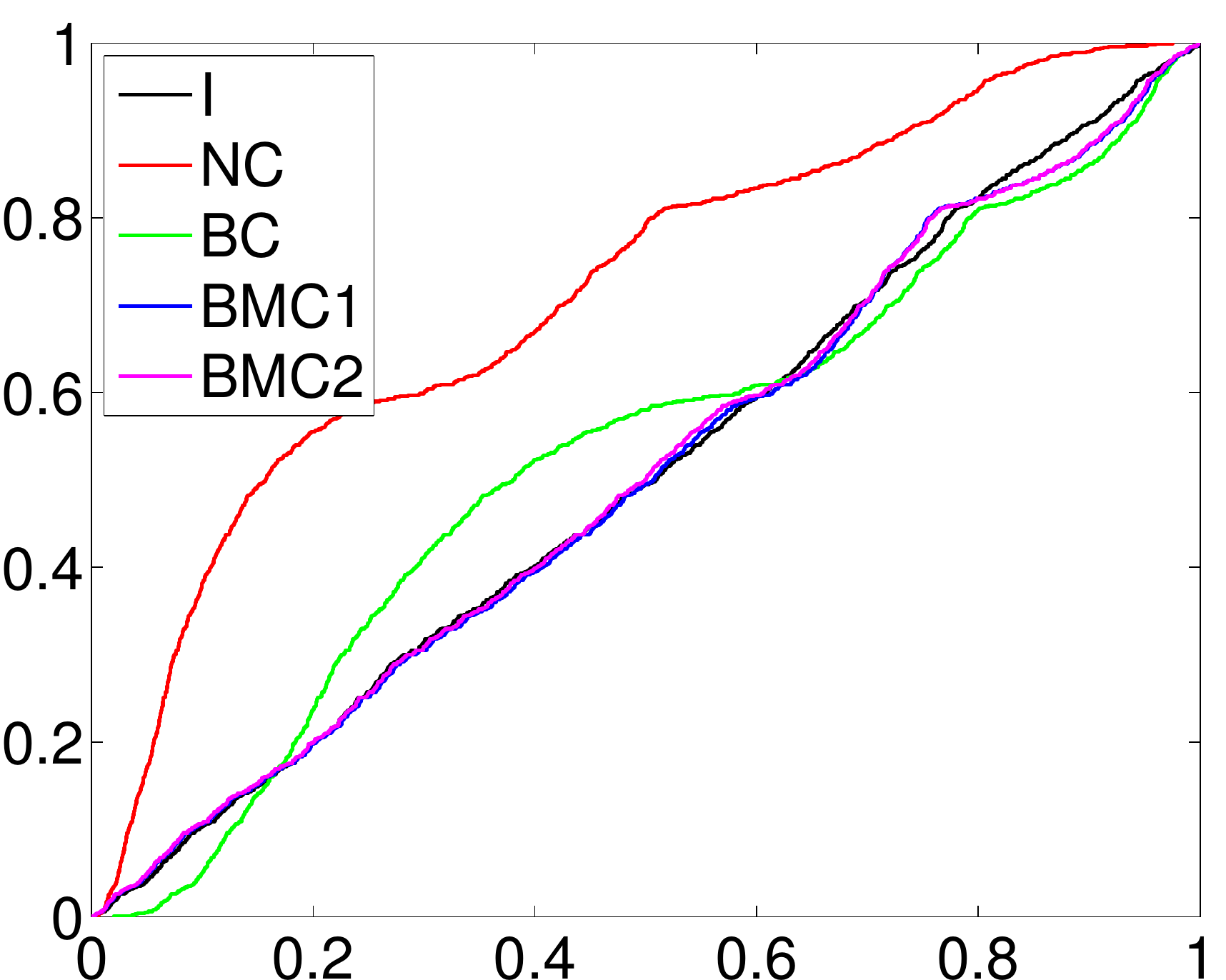} &&
\includegraphics[width=5cm]{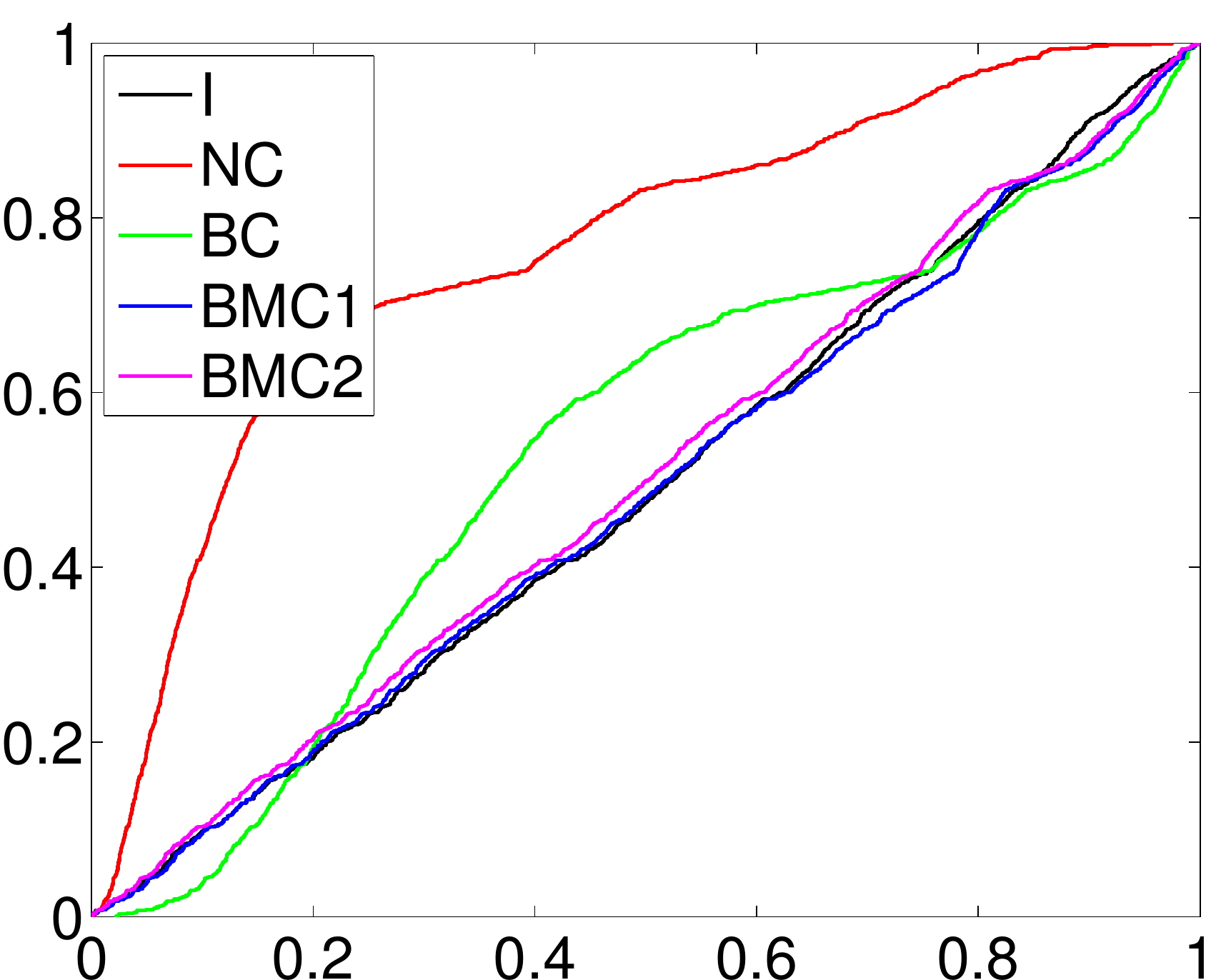}\\
\end{tabular}
\end{center}
\caption{PITs cdf for different calibration models and different datasets.}
\label{fig:mot1}
\end{figure}

More specifically, the two-component beta calibrations are able to
achieve a more flexible deformation of the cdf linear combination
providing a calibrated cdf (blue and magenta lines) which is close to
the uniform cdf.  Figure \ref{fig:mot3} shows the results of the
calibration and combination procedure decomposed along the different
components of the mixture.  As an example consider the first dataset,
generated with $\bp=(1/5,1/5,3/5)$.  The solid and dashed blue lines
in the top-left plot of Figure \ref{fig:mot3} show the contribution of
the first and second component respectively of the BMC1 mixture model
to the calibration of the density.  The first component mainly calibrates
the pdf on the positive part of the support and the second
component calibrates the pdf on the negative part of the support.
Both components assign the same weights ($\omega=0.449$) to the first model 
in the pool, i.e. $\cN(-1,1)$.  This weight is higher than in the BC model, which
has a less flexible calibration function and thus assigns a lower
weight $\omega=0.202$ to the first model in the pool.  The solid and
dashed magenta lines in the top-left plot of Figure \ref{fig:mot3}
show a behaviour similar to the BMC1 components.

\begin{figure}[t]
\begin{center}
\vspace{-20pt}
\setlength{\tabcolsep}{1pt}
\begin{tabular}{ccc}
{\small$\bp=(1/5,1/5,3/5)$}& $\,$&{\small$\bp=(1/7,1/7,5/7)$} \\
\includegraphics[width=5cm]{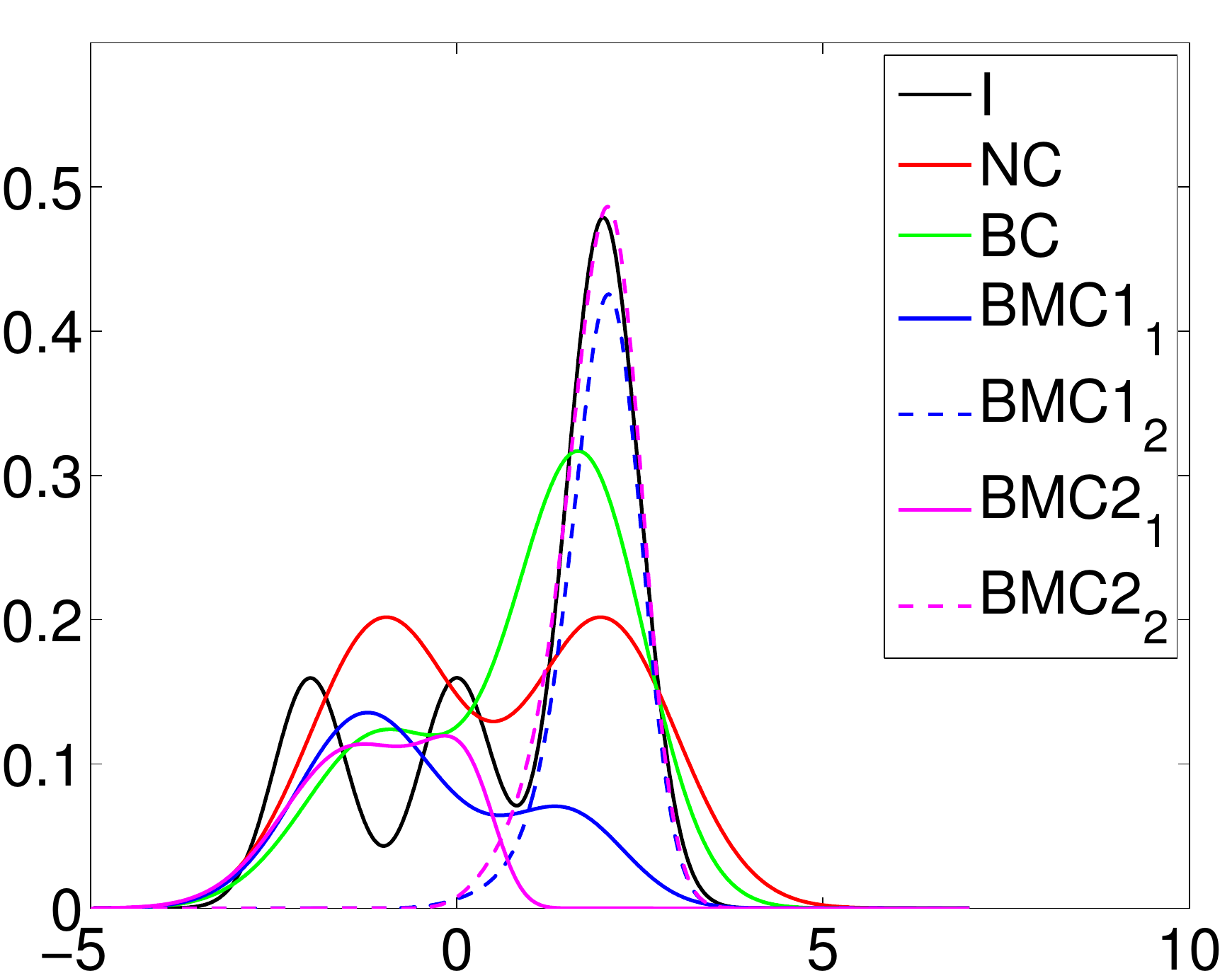}& $\,$ &
\includegraphics[width=5cm]{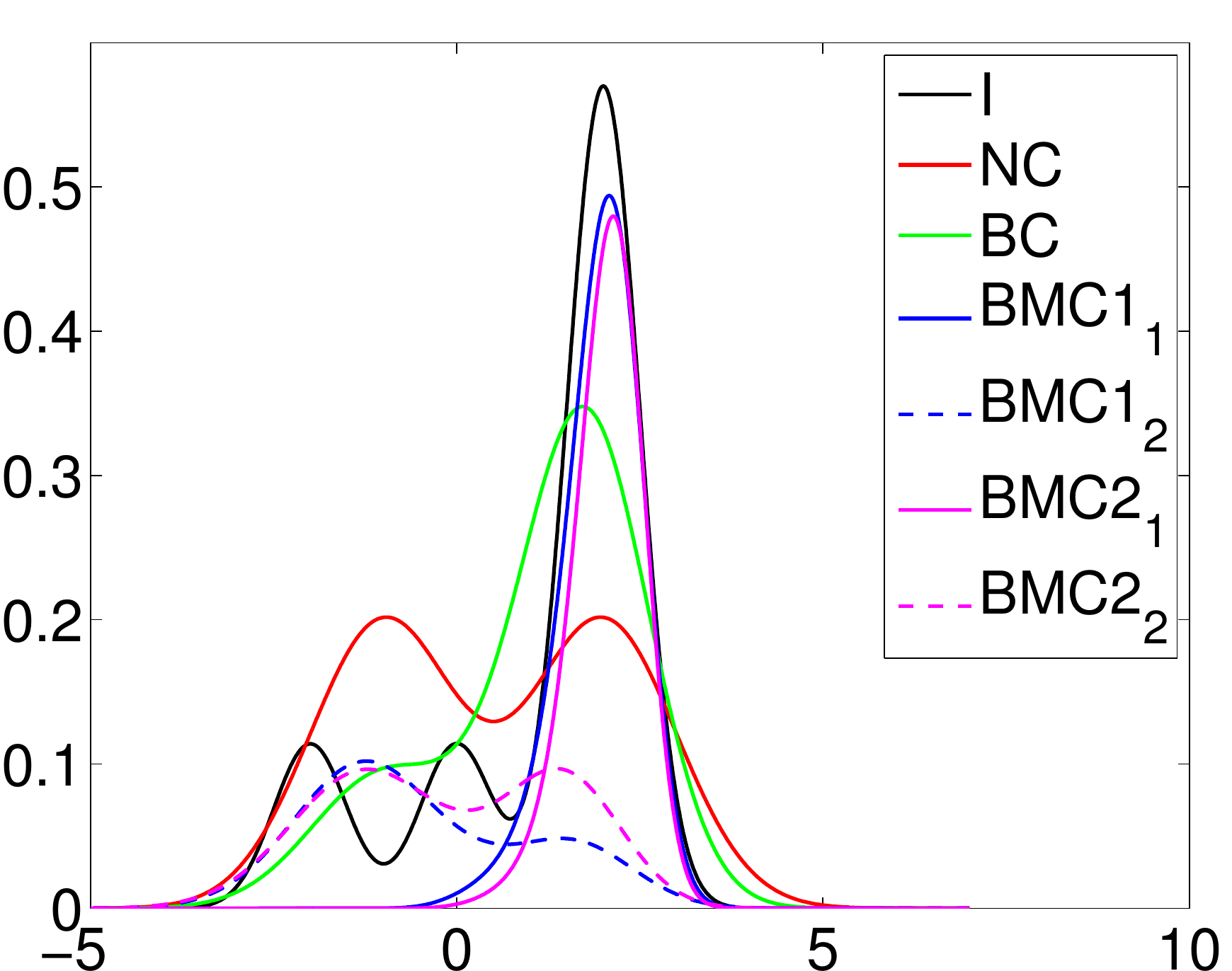} \\
{\small$\bp=(3/5,1/5,1/5)$}& $\,$&{\small$\bp=(1/7,1/7,5/7)$} \\
\includegraphics[width=5cm]{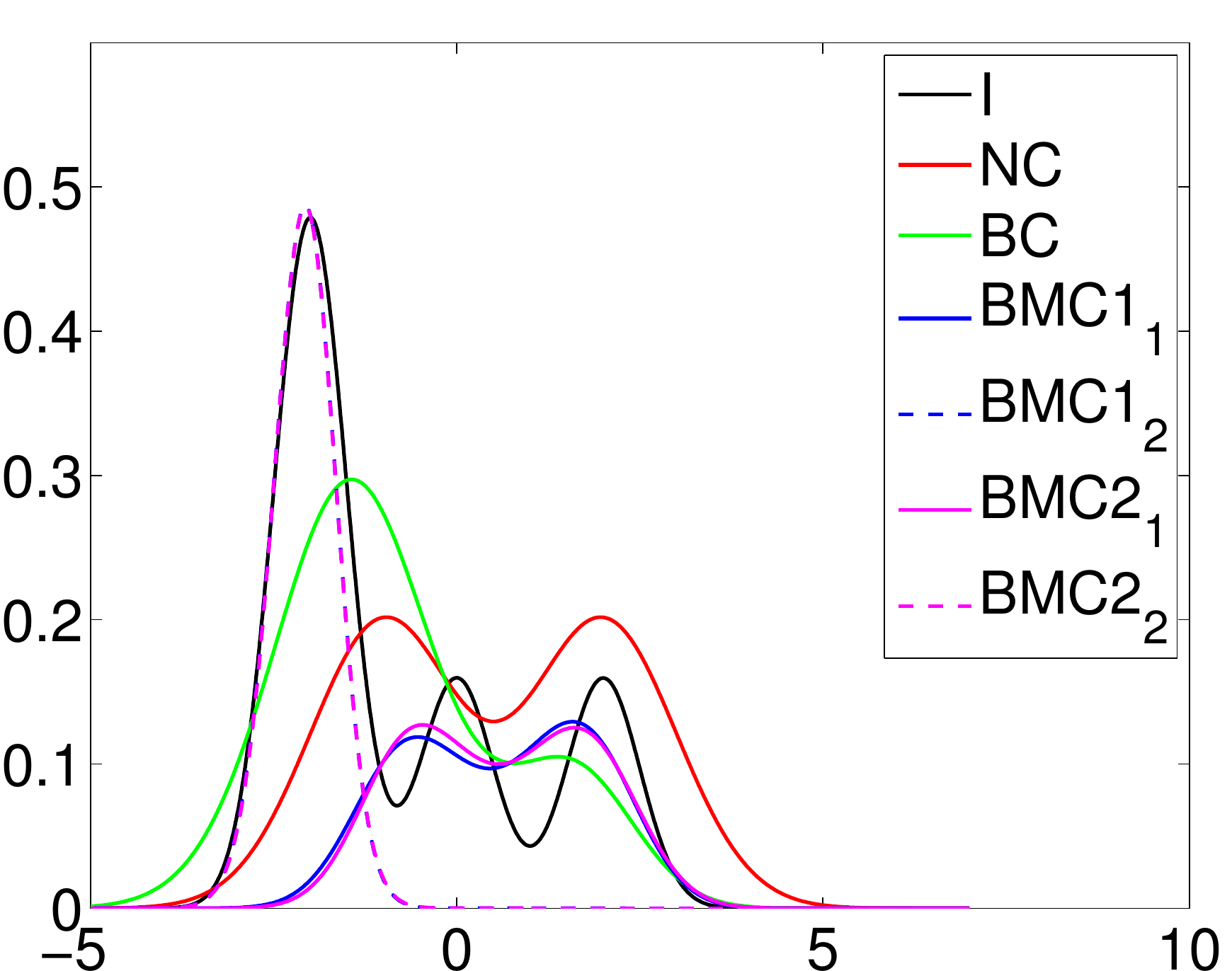} & $\,$ &
\includegraphics[width=5cm]{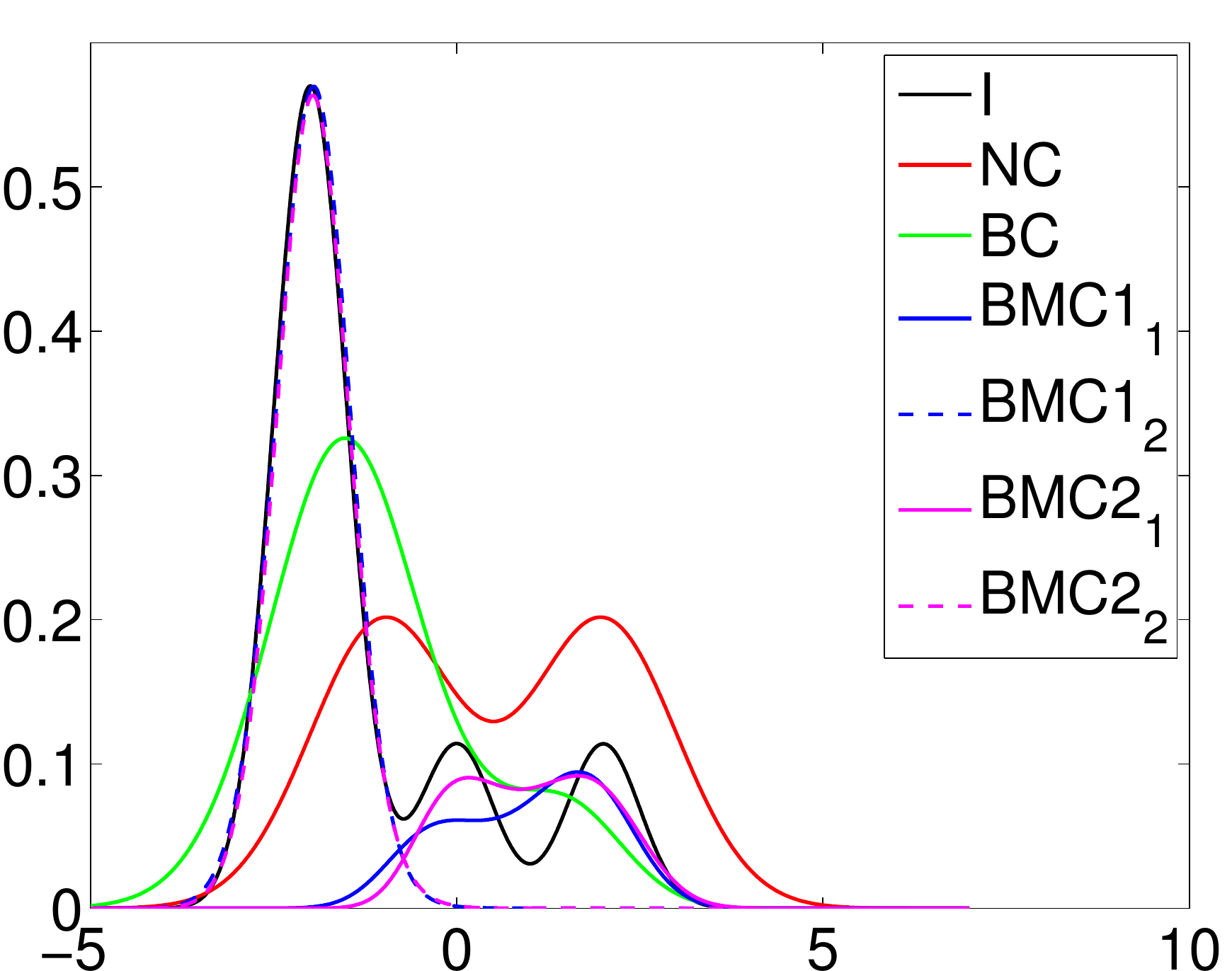}\\
\end{tabular}
\caption{Contribution of the calibration components for different
  models BC, BMC1 (first and second mixture component, $BMC1_1$ and
  $BMC1_2$) and BMC2 (first and second mixture component, $BMC2_1$ and
  $BMC2_2$), and different datasets.}  \label{fig:mot3}
\end{center}
\end{figure}

The first BMC2 component assigns weight $\omega_{1}=0.043$ to the first 
model in the pool. This means that the calibration on the negative part 
of the support set is done mainly using the predictive distribution of the second
model, $\cN(2,1)$.  The calibration of the positive part of the set is
obtained thanks to the second BMC2 component which assigns
weight $\omega_{2}=0.667$ to the first model in the pool.

\subsection{Heavy Tails}
Fig \ref{fig:mot1.heavy} focuses on the
right tail of the predictive pdf and shows results for the calibrated
and beta calibrated PITs cdf and their 99\% HPD. There is strong
evidence of the difficulties of the BC model in calibrating the tails.
The BC underestimates the tail probability and over-estimates the
central part of the distribution.  The BMC1 and BMC2 models are able
instead to provide well-calibrated PITs on the tails of the
distribution.

\begin{figure}[t]
\vspace{-20pt}
\begin{center}
\setlength{\tabcolsep}{1pt}
\begin{tabular}{ccc}
\includegraphics[width=5cm]{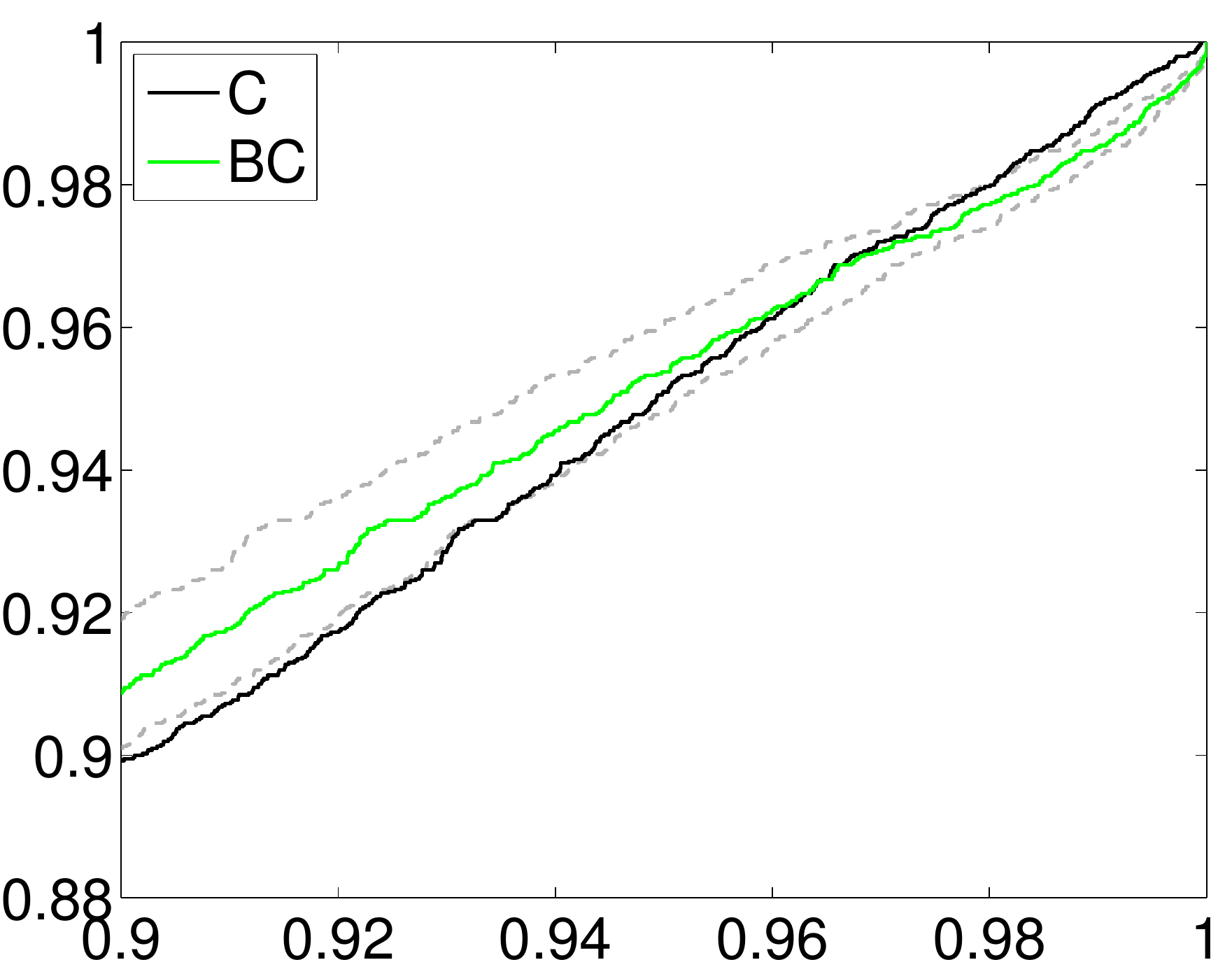} \\
\includegraphics[width=5cm]{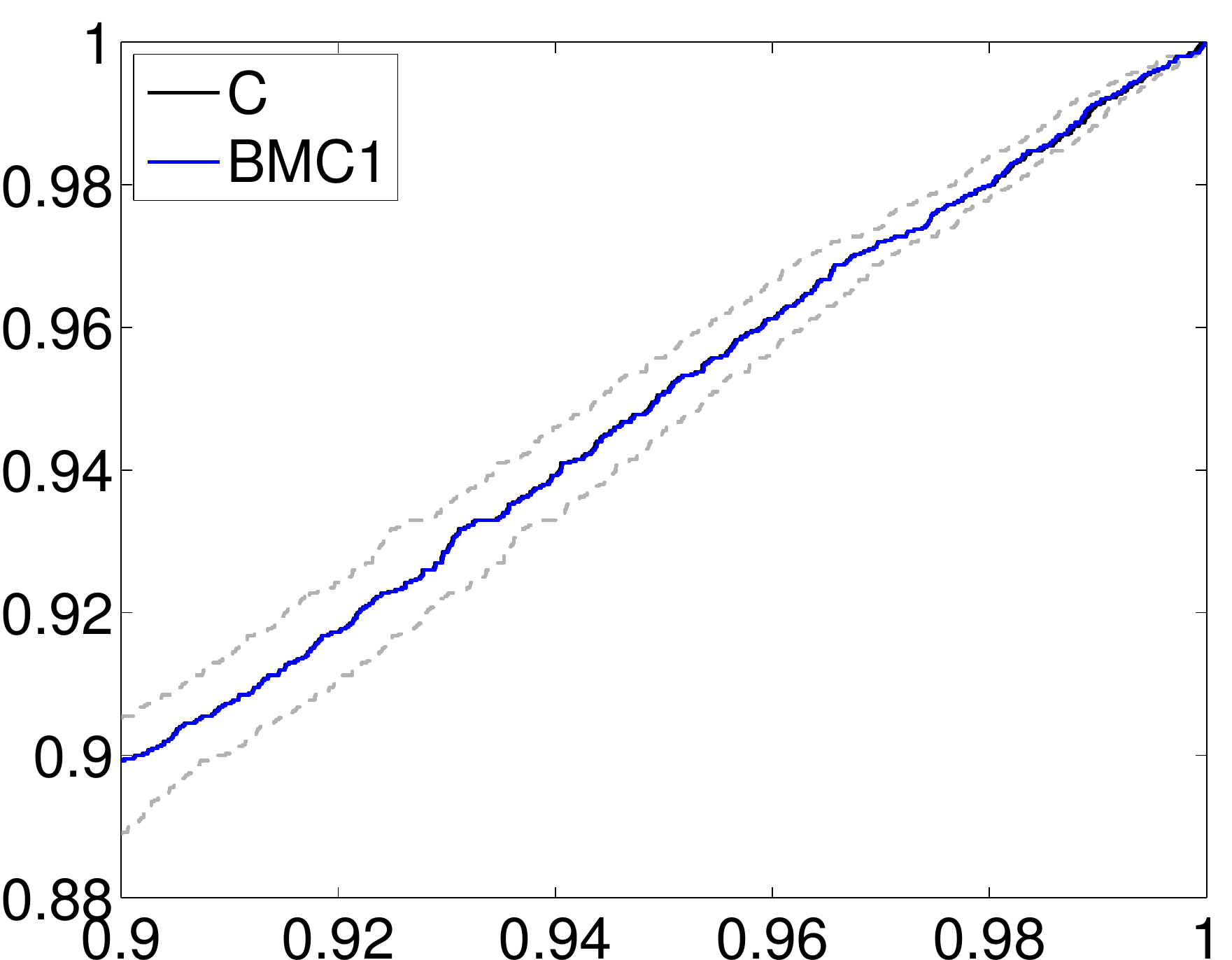}
\includegraphics[width=5cm]{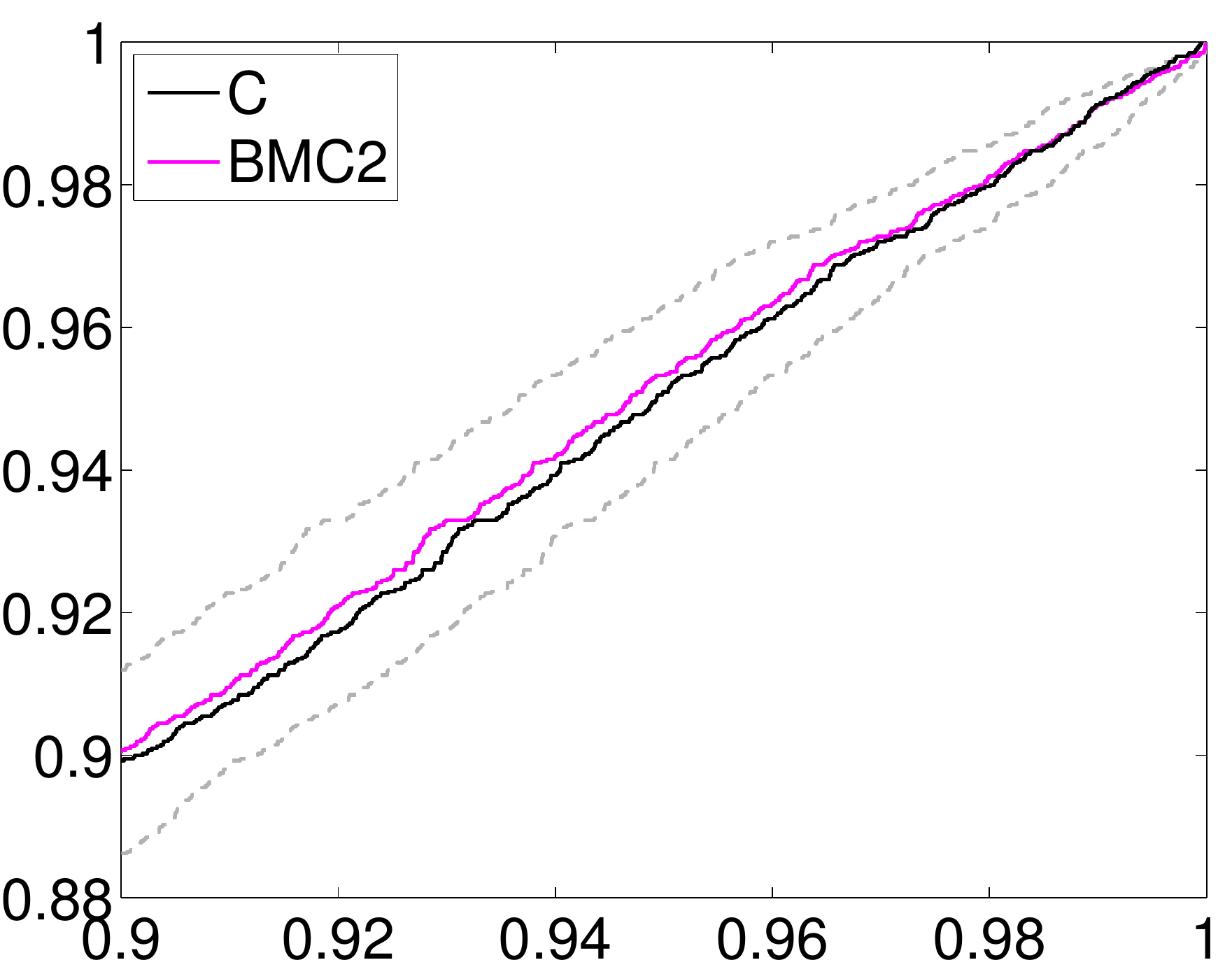}
\end{tabular}
\caption{Results of the Bayesian calibration model BC (top), BMC1
  (middle) and BMC2 (bottom) for the right tail of the predictive
  distribution. In each plot the PITs cdf of the calibrated (solid
  black line) and beta calibrated model (solid coloured line) and
  their 99\% HPD region (gray dashed
  lines)} \label{fig:mot1.heavy}
\end{center}
\end{figure}

\newpage

\section{Further real data results}\label{appendix_real}
\begin{figure}[!h]
\begin{center}
\includegraphics[width=6cm]{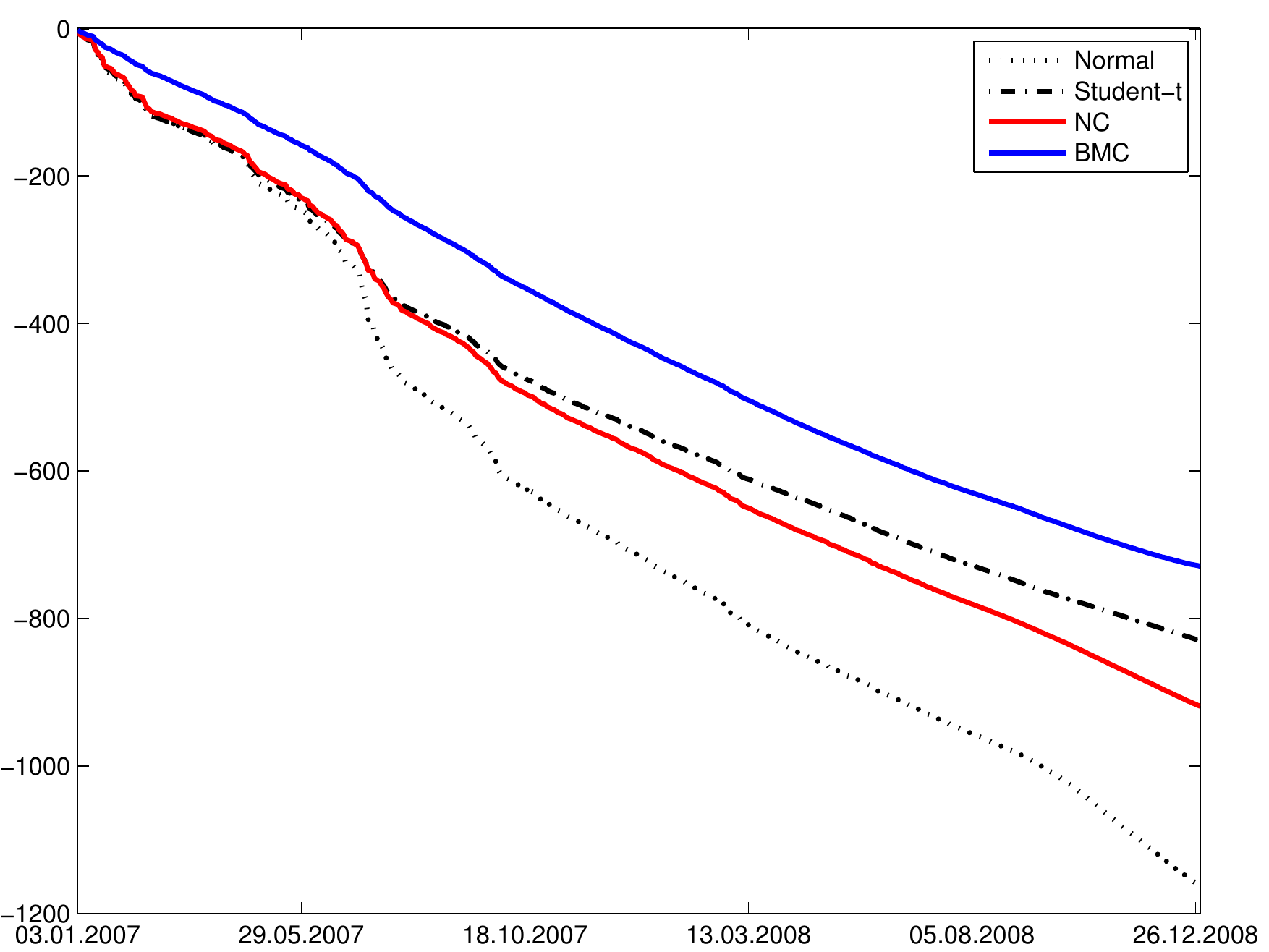}
\caption{Cumulative log scores for the Normal GARCH model (Normal),
  $t$-GARCH model (Student-$t$), linear pooling (NC) and beta
  mixture calibration (BMC) over the sample period from January 1,
  2007 to December 31, 2008.}  \label{fig:cum.LS}
\end{center}
\end{figure}

\begin{figure}[!h]
\begin{center}
\includegraphics[width=6cm]{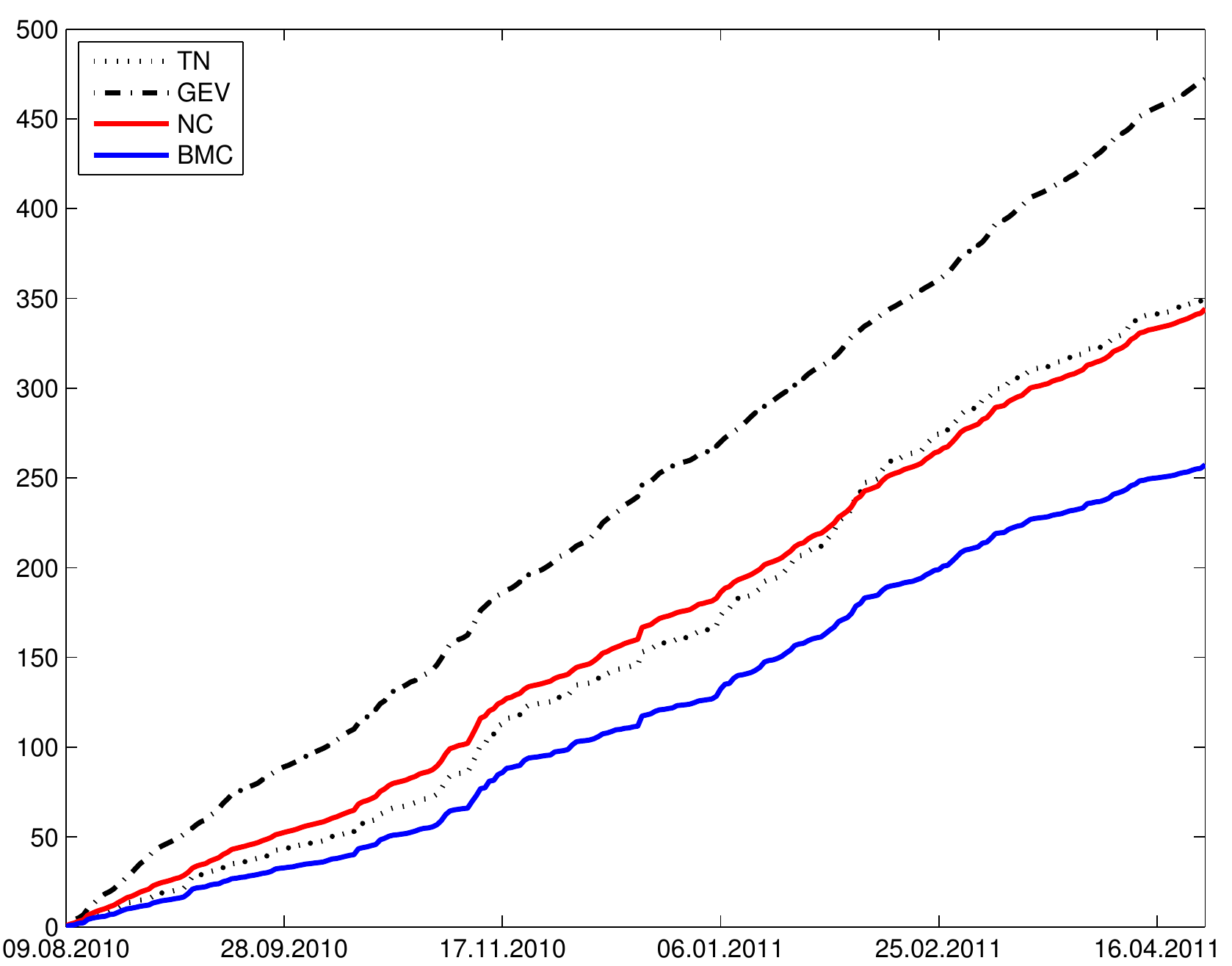}
\caption{Cumulative CRPS for the truncated normal (TN), the the
  generalized extreme value distribution (GEV), linear pooling (NC)
  and beta mixture calibration (BMC) over the sample period from
  August 9, 2010 to 30 April 2011.}  \label{fig:wind}
\end{center}
\end{figure}

\begin{figure}[p]
\begin{center}
\includegraphics[width=7.5cm]{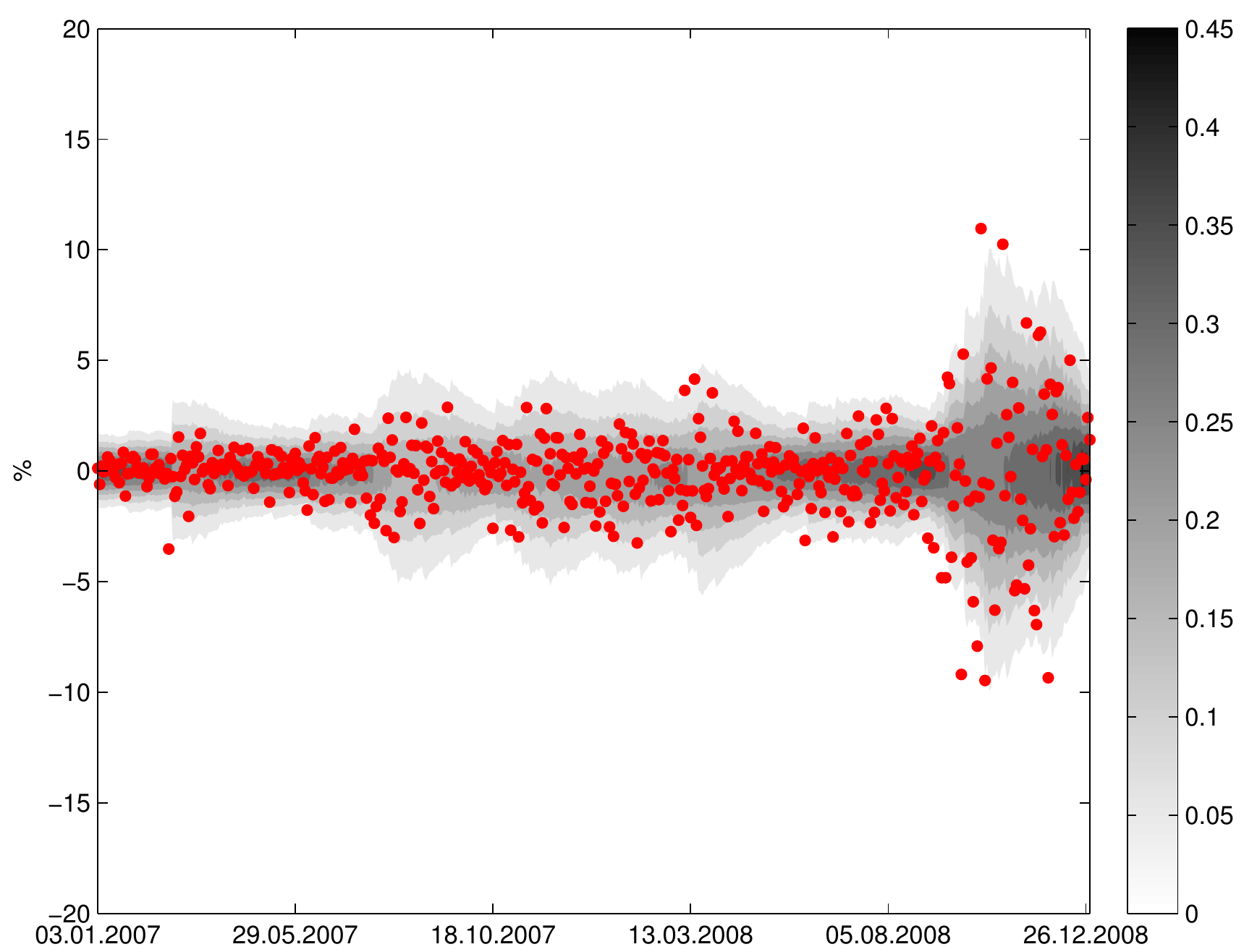}
\caption{Fanchart of the BMC model and observations (red points) over
  the sample period from January 1, 2007 to December 31,
  2008.}  \label{fig:fanchart}
\end{center}
\end{figure}

\begin{figure}[p]
\begin{center}
\includegraphics[width=7.5cm]{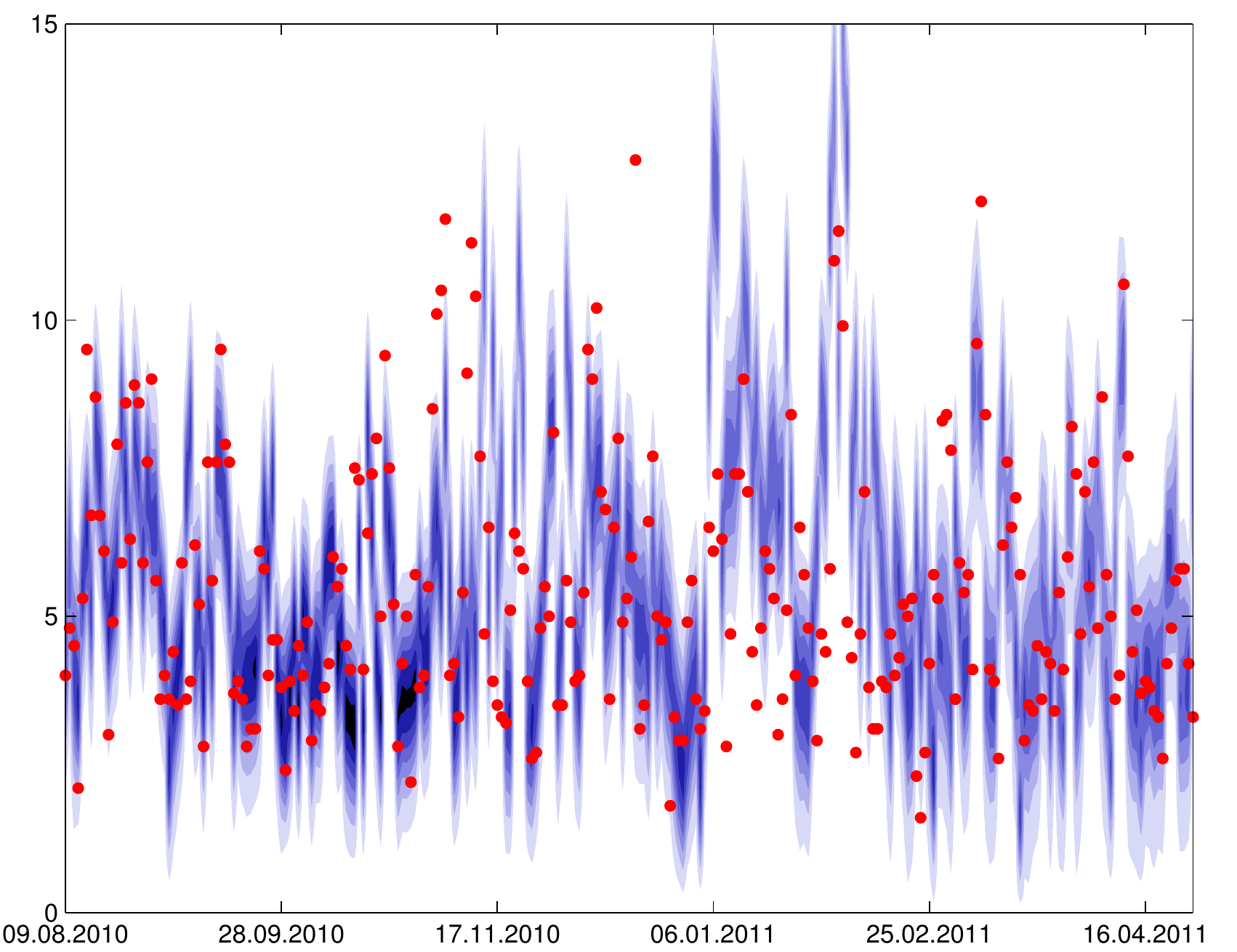}
\caption{Fanchart of the BMC model and observations (red points) over
  the sample period from August 9, 2010 to 30 April
  2011.}  \label{fig:fanchartwind}
\end{center}
\end{figure}


\begin{thebibliography}{}

\bibitem[Abramowitz and Stegun, 1972]{AbrSte72}
Abramowitz, M. and Stegun, I.~A. (1972).
\newblock {\em Handbook of Mathematical Functions with Formulas, Graphs, and
  Mathematical Tables}.
\newblock Dover, New York.

\bibitem[Altomare and Campiti, 1994]{AltCam94}
Altomare, F. and Campiti, M. (1994).
\newblock {\em Korovkin-type Approximation Theory and Its Application}.
\newblock De Gruyter, Berlin.

\bibitem[Antoniak, 1974]{Ant74}
Antoniak, C.~E. (1974).
\newblock Mixtures of {D}irichlet processes with applications to {B}ayesian
  nonparametric problems.
\newblock {\em Annals of Statistics}, 2:1152--1174.

\bibitem[Artzner et~al., 1999]{ArtznerDelbaenEberHeath99}
Artzner, P., Delbaen, F., Eber, J., and Heath, D. (1999).
\newblock Coherent measures of risk.
\newblock {\em Mathematical Finance}, 9(3):203--228.

\bibitem[Azzalini and Capitanio, 2003]{RSSB:RSSB391}
Azzalini, A. and Capitanio, A. (2003).
\newblock Distributions generated by perturbation of symmetry with emphasis on
  a multivariate skew t-distribution.
\newblock {\em Journal of the Royal Statistical Society, Series B},
  65(2):367--389.

\bibitem[Bassetti et~al., 2014]{BasCasLei14}
Bassetti, F., Casarin, R., and Leisen, F. (2014).
\newblock Beta-product dependent {P}itman-{Y}or processes for {B}ayesian
  inference.
\newblock {\em Journal of Econometrics}, 180:49--72.

\bibitem[Bates and Granger, 1969]{BatGra69}
Bates, J.~M. and Granger, C. W.~J. (1969).
\newblock Combination of forecasts.
\newblock {\em Operational Research Quarterly}, 20:451--468.

\bibitem[Billio and Casarin, 2011]{BilCas11}
Billio, M. and Casarin, R. (2011).
\newblock Beta autoregressive transition {M}arkov-switching models for business
  cycle analysis.
\newblock {\em Studies in Nonlinear Dynamics and Econometrics}, 15:1--32.

\bibitem[Billio et~al., 2013]{BilCasRavVan13}
Billio, M., Casarin, R., Ravazzolo, F., and Van~Dijk, H. (2013).
\newblock Time-varying combinations of predictive densities using nonlinear
  filtering.
\newblock {\em Journal of Econometrics}, 177:213--232.

\bibitem[Bouguila et~al., 2006]{BouZioMon06}
Bouguila, N., Ziou, D., and Monga, E. (2006).
\newblock Practical {B}ayesian estimation of a finite beta mixture through
  {G}ibbs sampling and its applications.
\newblock {\em Statistics and Computing}, 16:215--225.

\bibitem[Burda et~al., 2014]{Bur14}
Burda, M., Harding, M., and Hausman, J. (2014).
\newblock A {B}ayesian semiparametric competing risk model with unobserved
  heterogeneity.
\newblock {\em Journal of Applied Econometrics}.

\bibitem[Casarin et~al., 2012]{CasDalLei12}
Casarin, R., Dalla~Valle, L., and Leisen, F. (2012).
\newblock Bayesian model selection for beta autoregressive processes.
\newblock {\em Bayesian Analysis}, 7:1--26.

\bibitem[Casarin et~al., 2015]{Cas15}
Casarin, R., Leisen, F., Molina, G., and ter Horst, E. (2015).
\newblock A {B}ayesian beta markov random field calibration of the term
  structure of implied risk neutral densities.
\newblock {\em Bayesian Analysis}, 10(4):791--819.

\bibitem[Chib and Hamilton, 2002]{ChibHam02}
Chib, S. and Hamilton, B.~H. (2002).
\newblock Semiparametric bayes analysis of longitudinal data treatment models.
\newblock {\em Journal of Econometrics}, 110:67--89.

\bibitem[Choudhuri et~al., 2004]{ChGhRo04}
Choudhuri, N., Ghosal, S., and Roy, A. (2004).
\newblock Bayesian estimation of the spectral density of a time series.
\newblock {\em Journal of the American Statistical Association},
  99(468):1050--1059.

\bibitem[Dawid, 1982]{Daw82}
Dawid, A.~P. (1982).
\newblock The well-calibrated {B}ayesian.
\newblock {\em Journal of the American Statistical Association},
  77(379):605--610.

\bibitem[Dawid, 1984]{Dawid84}
Dawid, A.~P. (1984).
\newblock Statistical theory: The prequential approach.
\newblock {\em Journal of the Royal Statistical Society Series A},
  147:278--290.

\bibitem[Diaconis and Ylvisaker, 1985]{DiaYlv85}
Diaconis, P. and Ylvisaker, D. (1985).
\newblock Quantifying prior opinion.
\newblock In Bernardo, J.~M., De~Groot, M.~H., Lindley, D.~V., and Smith, A.
  F.~M., editors, {\em Bayesian Statistics 2}. North Holland Amsterdam.

\bibitem[Diebold et~al., 1998]{DieGunTay98}
Diebold, F.~X., Gunther, T.~A., and Tay, A.~S. (1998).
\newblock Evaluating density forecasts with applications to financial risk
  management.
\newblock {\em International Economic Review}, 39:863--883.

\bibitem[Epstein, 1966]{Epstein66}
Epstein, E.~S. (1966).
\newblock Quality control for probability forecasts.
\newblock {\em Monthly Weather Review}, 94:487--494.

\bibitem[Escobar, 1994]{Escobar94}
Escobar, M.~D. (1994).
\newblock Estimating normal means with a {D}irichlet process prior.
\newblock {\em Journal of the American Statistical Association}, 89:268--277.

\bibitem[Escobar and West, 1995]{EscobarWest1995}
Escobar, M.~D. and West, M. (1995).
\newblock {B}ayesian density estimation and inference using mixtures.
\newblock {\em Journal of the American Statistical Association}, 90:577--588.

\bibitem[Ferguson, 1973]{Ferguson73}
Ferguson, T.~S. (1973).
\newblock A {B}ayesian analysis of some nonparametric problems.
\newblock {\em Annals of Statistics}, 1:209--230.

\bibitem[French, 1996]{Fre96}
French, S. (1996).
\newblock Calibration and the expert problem.
\newblock {\em Management Science}, 32(3):315--321.

\bibitem[Fr\"{u}hwirth-Schnatter, 2006]{Fru06}
Fr\"{u}hwirth-Schnatter, S. (2006).
\newblock {\em Finite Mixture and Markov Switching Models}.
\newblock Springer-Verlag, Berlin.

\bibitem[Geweke, 2010]{Geweke2010}
Geweke, J. (2010).
\newblock {\em Complete and Incomplete Econometric Models}.
\newblock Princeton University Press, Princeton.

\bibitem[Geweke and Amisano, 2010]{GewekeAmisano2010}
Geweke, J. and Amisano, G. (2010).
\newblock Comparing and evaluating {B}ayesian predictive distributions of asset
  returns.
\newblock {\em International Journal of Forecasting}, 26:216--230.

\bibitem[Geweke and Amisano, 2011]{GewekeAmisano2011}
Geweke, J. and Amisano, G. (2011).
\newblock Optimal prediction pools.
\newblock {\em Journal of Econometrics}, 164:130--141.

\bibitem[Ghosal and van~der Vaart, 2007]{GhosalvanderVaart2007}
Ghosal, S. and van~der Vaart, A. (2007).
\newblock Posterior convergence rates of {D}irichlet mixtures at smooth
  densities.
\newblock {\em Ann. Statist.}, 35(2):697--723.

\bibitem[Ghosh and Ramamoorthi, 2003]{GhoshRamamoorthiNP}
Ghosh, J.~K. and Ramamoorthi, R.~V. (2003).
\newblock {\em Bayesian nonparametrics}.
\newblock Springer Series in Statistics. Springer-Verlag, New York.

\bibitem[Gneiting et~al., 2007]{GneBalRaf07}
Gneiting, T., Balabdaoui, F., and Raftery, A.~E. (2007).
\newblock Probabilistic forecasts, calibration and sharpness.
\newblock {\em Journal of the Royal Statistical Society Series B}, 69:243--268.

\bibitem[Gneiting and Katzfuss, 2014]{GneKa14}
Gneiting, T. and Katzfuss, M. (2014).
\newblock Probabilistic forecasting.
\newblock {\em Annual Review of Statistics and Its Application}, 1:125--151.

\bibitem[Gneiting and Ranjan, 2013]{GneRan13}
Gneiting, T. and Ranjan, R. (2013).
\newblock Combining predictive distributions.
\newblock {\em Electronic Journal of Statistics}, 7:1747--1782.

\bibitem[Granger and Ramanathan, 1984]{GraRam84}
Granger, C. W.~J. and Ramanathan, R. (1984).
\newblock Improved methods of combining forecasts.
\newblock {\em Journal of Forecasting}, 3:197--204.

\bibitem[Griffin, 2011]{Griffin2010}
Griffin, J.~E. (2011).
\newblock Inference in infinite superpositions of non-{G}aussian
  {O}rnstein-{U}hlenbeck processes using {B}ayesian nonparametric methods.
\newblock {\em Journal of Financial Econometrics}, 1:1--31.

\bibitem[Griffin and Steel, 2006]{GriffinSteel2006}
Griffin, J.~E. and Steel, M. F.~J. (2006).
\newblock Order-based dependent {D}irichlet processes.
\newblock {\em Journal of the American Statistical Association}, 101:179--194.

\bibitem[Griffin and Steel, 2011]{GriffinSteel2011}
Griffin, J.~E. and Steel, M. F.~J. (2011).
\newblock Stick-breaking autoregressive processes.
\newblock {\em Journal of Econometrics}, 162:383--396.

\bibitem[Gzyl and Mayoral, 2008]{GzylMayoral08}
Gzyl, H. and Mayoral, S. (2008).
\newblock Determination of risk pricing measures from market prices of risk.
\newblock {\em Insurance: Mathematics and Economics.}, 43(3):437--443.

\bibitem[Hall and Mitchell, 2007]{HallMitchell2007}
Hall, S.~G. and Mitchell, J. (2007).
\newblock Combining density forecasts.
\newblock {\em International Journal of Forecasting}, 23:1--13.

\bibitem[Hatjispyros et~al., 2011]{Hatjispyros2011}
Hatjispyros, S.~J., Nicoleris, T.~N., and Walker, S.~G. (2011).
\newblock Dependent mixtures of {D}irichlet processes.
\newblock {\em Computational Statistics \& Data Analysis}, 55:2011--2025.

\bibitem[Hirano, 2002]{Hir02}
Hirano, K. (2002).
\newblock Semiparametric {B}ayesian inference in autoregressive panel data
  models.
\newblock {\em Econometrica}, 70:781--799.

\bibitem[Hjort et~al., 2010]{BNP2010}
Hjort, N.~L., Homes, C., M\"uller, P., and Walker, S.~G. (2010).
\newblock {\em {B}ayesian Nonparametrics}.
\newblock Cambridge University Press.

\bibitem[Hoeting et~al., 1999]{HoeMadRafVol99}
Hoeting, J.~A., Madigan, D., Raftery, A.~E., and Volinsky, C.~T. (1999).
\newblock Bayesian model averaging: {A} tutorial.
\newblock {\em Statistical Science}, 14:382--417.

\bibitem[Ishwaran and James, 2001]{IsJam2001}
Ishwaran, H. and James, L.~F. (2001).
\newblock Gibbs sampling methods for stick-breaking priors.
\newblock {\em Journal of the American Statistical Association}, 96:161--173.

\bibitem[Ishwaran and Zarepour, 2000]{IshZar00a}
Ishwaran, H. and Zarepour, M. (2000).
\newblock Markov chain {M}onte {C}arlo in approximate {D}irichlet and beta
  two-parameter process hierarchical models.
\newblock {\em Biometrika}, 87:371--390.

\bibitem[Jensen and Maheu, 2010]{JensenMaheu2010}
Jensen, J.~M. and Maheu, M.~J. (2010).
\newblock {B}ayesian semiparametric stochastic volatility modeling.
\newblock {\em Journal of Econometrics}, 157:306--316.

\bibitem[Jochmann, 2015]{Joc15}
Jochmann, M. (2015).
\newblock Modeling {U.S.} inflation dynamics: A {B}ayesian nonparametric
  approach.
\newblock {\em Econometric Reviews}, 34(5):537--558.

\bibitem[Jore et~al., 2010]{JoreMitchellVahey2010}
Jore, A.~S., Mitchell, J., and Vahey, S.~P. (2010).
\newblock Combining forecast densities from {V}{A}{R}s with uncertain
  instabilities.
\newblock {\em Journal of Applied Econometrics}, 25:621--634.

\bibitem[Kalli et~al., 2011]{walker2011}
Kalli, M., Griffin, J.~E., and Walker, S.~G. (2011).
\newblock Slice sampling mixture models.
\newblock {\em Statistics and Computing}, 21:93--105.

\bibitem[Kapetanios et~al., 2015]{MitKap13}
Kapetanios, G., Mitchell, J., Price, S., and Fawcett, N. (2015).
\newblock Generalised density forecast combinations.
\newblock {\em Journal of Econometrics}, 188:150--165.

\bibitem[Kling and Bessler, 1989]{KlingBessler1989}
Kling, J.~L. and Bessler, D.~A. (1989).
\newblock Calibration-based predictive distributions: {A}n application of
  prequential analysis to interest rates, money, prices, and output.
\newblock {\em Journal of Business}, 62:477--499.

\bibitem[Lerch and Thorarinsdottir, 2013]{LerTho2013}
Lerch, S. and Thorarinsdottir, T.~L. (2013).
\newblock Comparison of nonhomogeneous regression models for probabilistic wind
  speed forecasting.
\newblock {\em Tellus Series A}, 65:21206.

\bibitem[Lo, 1984]{Lo1984}
Lo, A.~Y. (1984).
\newblock On a class of {B}ayesian nonparametric estimates: {I}. {D}ensity
  estimates.
\newblock {\em Annals of Statistics}, 12:351--357.

\bibitem[MacEachern and M\"uller, 1998]{MacMul98}
MacEachern, S.~N. and M\"uller, P. (1998).
\newblock Estimating mixtures of {D}irichlet process models.
\newblock {\em Journal of Computational and Graphical Statistics}, 7:223--238.

\bibitem[Mitchell and Wallis, 2011]{MitchellWallis2010}
Mitchell, J. and Wallis, K.~F. (2011).
\newblock Evaluating density forecasts: {F}orecast combinations, model
  mixtures, calibration and sharpness.
\newblock {\em Journal of Applied Econometrics}, 26:1023--1040.

\bibitem[Molteni et~al., 1996]{MolEt1996}
Molteni, F., Buizza, R., Palmer, T.~N., and Petroliagis, T. (1996).
\newblock The {ECMWF} ensemble prediction system: {M}ethodology and validation.
\newblock {\em Quarterly Journal of the Royal Meteorological Society},
  122:73--119.

\bibitem[M\"uller et~al., 2004]{Muller2004}
M\"uller, P., Quintana, F., and Rosner, G. (2004).
\newblock A method for combining inference across related nonparametric
  {B}ayesian models.
\newblock {\em Journal of the Royal Statistical Society B}, 66:735--749.

\bibitem[Norets and Pelenis, 2014]{NorPel15}
Norets, A. and Pelenis, J. (2014).
\newblock Posterior consistency in conditional density estimation by covariate
  dependent mixtures.
\newblock {\em Econometric Theory}, 30(3):606--646.

\bibitem[Papaspiliopoulos and Roberts, 2008]{PapRob08}
Papaspiliopoulos, O. and Roberts, G. (2008).
\newblock Retrospective {M}arkov chain {M}onte {C}arlo for {D}irichlet process
  hierarchical models.
\newblock {\em Biometrika}, 95:169--186.

\bibitem[Pelenis, 2014]{Pel14}
Pelenis, J. (2014).
\newblock Bayesian regression with heteroscedastic error density and parametric
  mean function.
\newblock {\em Journal of Econometrics}, 178:624--638.

\bibitem[Ranjan and Gneiting, 2010]{RanGne10}
Ranjan, R. and Gneiting, T. (2010).
\newblock Combining probability forecasts.
\newblock {\em Journal of the Royal Statistical Society Series B}, 72:71--91.

\bibitem[Robert and Rousseau, 2002]{RobRou02}
Robert, C.~P. and Rousseau, J. (2002).
\newblock A mixture approach to {B}ayesian goodness of fit.
\newblock Technical Report 02009, CEREMADE, Universit\'e Paris-Dauphine.

\bibitem[Rodriguez and ter Horst, 2008]{RodriguezTerHorst2008}
Rodriguez, A. and ter Horst, E. (2008).
\newblock {B}ayesian dynamic density estimation.
\newblock {\em Bayesian Analysis}, 3:339--366.

\bibitem[Sethuraman, 1994]{Set94}
Sethuraman, J. (1994).
\newblock A constructive definition of {D}irichlet priors.
\newblock {\em Statistica Sinica}, 4:639--650.

\bibitem[Spiegelhalter et~al., 2004]{Spi04}
Spiegelhalter, D.~J., Abrams, K.~R., and Myles, J.~P. (2004).
\newblock {\em Bayesian Approaches to Clinical Trials and Health-Care
  Evaluation}.
\newblock Wiley, Chi\-chester.

\bibitem[Stone, 1961]{Stone61}
Stone, M. (1961).
\newblock The linear pool.
\newblock {\em Annals of Mathematical Statistics}, 2:1339--1342.

\bibitem[Taddy, 2010]{Taddy2010}
Taddy, M.~A. (2010).
\newblock Autoregressive mixture models for dynamic spatial {P}oisson
  processes: {A}pplication to tracking the intensity of violent crime.
\newblock {\em Journal of the American Statistical Association},
  105:1403--1417.

\bibitem[Taddy and Kottas, 2009]{TaddyKottas2009}
Taddy, M.~A. and Kottas, A. (2009).
\newblock Markov switching {D}irichlet process mixture regression.
\newblock {\em Bayesian Analysis}, 4:793--816.

\bibitem[Tang and Ghosal, 2007]{TangGhosal07}
Tang, Y. and Ghosal, S. (2007).
\newblock Posterior consistency of {D}irichlet mixtures for estimating a
  transition density.
\newblock {\em J. Statist. Plann. Inference}, 137(6):1711--1726.

\bibitem[Tang et~al., 2007]{GhosalRoyTang2007}
Tang, Y., Ghosal, S., and Roy, A. (2007).
\newblock Nonparametric {B}ayesian estimation of positive false discovery
  rates.
\newblock {\em Biometrics}, 63(4):1126--1134, 1312.

\bibitem[Tay and Wallis, 2000]{TayWallis2000}
Tay, A.~S. and Wallis, K.~F. (2000).
\newblock Density forecasting: {A} survey.
\newblock {\em Journal of Forecasting}, 19:235--254.

\bibitem[Walker, 2007]{walker2007}
Walker, S.~G. (2007).
\newblock Sampling the {D}irichlet mixture model with slices.
\newblock {\em Communications in Statistics -- Simulation and Computation},
  36:45--54.

\bibitem[Wang and Young, 1998]{WangYoung98}
Wang, S. and Young, V. (1998).
\newblock Risk-adjusted credibility premiums using distorted probabilities.
\newblock {\em Scandinavian Actuarial Journal}, 2:143--165.

\bibitem[Wasserman, 2000]{Was00}
Wasserman, L. (2000).
\newblock Asymptotic inference for mixture models using data-dependent priors.
\newblock {\em Journal of the Royal Statistical Society Series B}, 62:159--180.

\bibitem[West, 1992]{Wes92b}
West, M. (1992).
\newblock Modelling agent forecast distributions.
\newblock {\em Journal of the Royal Stati}, 54(2):553--567.

\bibitem[West and Crosse, 1992]{Wes92a}
West, M. and Crosse, J. (1992).
\newblock Modelling probabilistic agent opinion.
\newblock {\em Journal of the Royal St}, 54(1):285--299.

\bibitem[Wiesenfarth et~al., 2014]{Wie14}
Wiesenfarth, M., C.M., H., Kneib, T., and Cadarso-Suarez, C. (2014).
\newblock Bayesian nonparametric instrumental variables regression based on
  penalized splines and dirichlet process mixtures.
\newblock {\em Journal of Business and Economic Statistics}, 32(3):468--482.

\bibitem[Wu and Ghosal, 2009a]{WuGhosal2008}
Wu, Y. and Ghosal, S. (2009a).
\newblock Correction to: ``{K}ullback {L}eibler property of kernel mixture
  priors in {B}ayesian density estimation'' [mr2399197].
\newblock {\em Electron. J. Stat.}, 3:316--317.

\bibitem[Wu and Ghosal, 2009b]{WuGhosal2008correction}
Wu, Y. and Ghosal, S. (2009b).
\newblock Kullback {L}eibler property of kernel mixture priors in {B}ayesian
  density estimation.
\newblock {\em Electron. J. Stat.}, 2:298--331.

\end{thebibliography}

\end{document}